\journal{working paper at arXiv.org}
\begin{document}

\begin{frontmatter}

%% Title, authors and addresses

%% use the tnoteref command within \title for footnotes;
%% use the tnotetext command for theassociated footnote;
%% use the fnref command within \author or \address for footnotes;
%% use the fntext command for theassociated footnote;
%% use the corref command within \author for corresponding author footnotes;
%% use the cortext command for theassociated footnote;
%% use the ead command for the email address,
%% and the form \ead[url] for the home page:
%% \title{Title\tnoteref{label1}}
%% \tnotetext[label1]{}
%% \author{Name\corref{cor1}\fnref{label2}}
%% \ead{email address}
%% \ead[url]{home page}
%% \fntext[label2]{}
%% \cortext[cor1]{}
%% \address{Address\fnref{label3}}
%% \fntext[label3]{}

\title{Measuring Systemic Risk:\\ Robust Ranking Techniques Approach}

%% use optional labels to link authors explicitly to addresses:
%% \author[label1,label2]{}
%% \address[label1]{}
%% \address[label2]{}

\author{Amirhossein Sadoghi}

\address{Frankfurt School of Finance \& Management,\\
 Frankfurt am Main E.mail: a.sadoghi@fs.de}

\begin{abstract}
In this research, we introduce a robust metric to identify Systemically Important Financial Institution (SIFI) in a financial network by taking into account both common idiosyncratic shocks and contagion through counterparty exposures. We develop an efficient algorithm to rank financial institutions by formulating a fixed point problem and reducing it to a non-smooth convex optimization problem. We then study the underlying distribution of the proposed metric and analyze the performance of the algorithm by using different financial network structures. Overall, our findings suggest that the level of interconnection and position of institutions in the financial network are important elements to measure systemic risk and identify SIFIs. Results show that increasing the levels of out- and in-degree connections of an institution can have a diverse impact on its systemic ranking. Additionally, on the empirical side, we investigate the factors which lead to the identification of Global Systemic Important Banks (G-SIB) by using a panel dataset of the largest banks in each country. Our empirical results supports the main findings of the theoretical model.
\end{abstract}

\begin{keyword}
Financial Network, Systemic Risk, SIFI, Nonsmooth Convex Problem\\
\noindent \textit{JEL classification}: D85, E58, G01, G21, G33.
\end{keyword}

\footnotetext[1]{Corresponding Author, address: Sonnemannstraße 9-11 60314 Frankfurt am Main, Tel: +49(069)154008-873. We are grateful for the comments of participants of the Fourth International Conference on Continuous Optimization, Lisbon, Portugal,  Conference on Current Topics in Mathematical Finance(Poster), Vienna, Austria, Advances in Financial Mathematics, Paris, France, 7th Financial Risks international Forum; Paris, France, CEQURA Conference 2016 on Advances in Financial and Insurance Risk Management,LMU Munich, Germany.\\
}
\end{frontmatter}

%% \linenumbers

%% main text

\newtheorem{mydef}{Definition}

\section{Introduction}

The recent financial crisis has been the most significant economic event since the great depression in the 1930s and had massive effects on economic growth and on the life of people during and after the financial crisis. It also has raised a wide awareness of key players in a financial system and the requirement of quantitative tools to identify and monitor them. In this context, in a financial system, the determination of the so-called Systemically Important Financial Institutions (SIFI) as strategic players in the market is a main problem.  Institutions (e.g. banks, insurance companies or hedge funds) are systemically important if they are linked with strong transmission channels to others or to institutions who are also systemically key players in the financial system. These institutions are relatively large, vastly interconnected and dominate the system with a large scope of activities.\\

 Although, several methods are proposed, there is no generally accepted way to determine the major institutions in a financial system. The Basel Committee on Banking Supervision (BCBS) (\cite{board2013global}) introduced methods and indicators to identify SIFIs, and reduce the probability of these institutions' failures and smooth the functioning of the financial system. BCBS ranked banks with regards to their size, complexity, interconnectedness and ability to cause substantial disruption to the global financial activity. Financial Stability Board (FSB) and G20 use similar indicators to identify Global Systemically Important Financial Institutions (G-SIFI) and update the list G-SIFI based on new information every year.\\

To reduce the consequences of SIFIs' failure, regulators try to control the scope of activities of these financial institutions to limit the magnitude of their distress or failures to the public sector. On the one hand, recognizing institutions as SIFI has higher costs, which prompted by extra requirements for regulation,  possibility of no flexible approaches by the regulators, and decreasing their competitiveness and economic strength in the domestic and international financial market. On the other hand, it helps regulators to prevent disruptions to financial stability and enhance the country's economic strength. The regulation of SIFI seeks appropriate ways and tools to minimize potential negative effects of the failure of these systemically important and high ranking institutions by restricting their activities and/or re-restructuring their organization.\\

The distress or failure of a SIFI could crystallize systemic risk. A risk of default of a large fraction of the financial system as a whole due to the spread of financial exposures throughout the system is called systemic risk. When the institution is incapable to repay its debtors and fulfill its investor's requirements  it will be in an insolvent position. By reason of interlinked financial exposures among institutions, this distress may spread throughout the whole financial system in a domino fashion and it creates a cascade. This cascade causes a disturbance in the entire system and it may spillover to the larger fraction of the economy (see: \cite{Hellwig2009}).\\

One approach to measure systemic risk is tail measuring which calculates the co-dependence in the tails of institutions’ equity returns without giving consideration to interact with counterparties; \cite{acharya2010measuring};  \cite{brownlees2010volatility} measured systemic risk by measuring the expected shortfall per bank conditional on a distressed financial institution as the systemic capital deficit. \cite{Adrian2008} introduced CoVaR to measure the systemic risk of the financial system conditional on the system being in distress. This method is based on the value at risk (VaR) in the median state of the financial status of an institution. Generally, these types of methods suffer from the limitation of a very few number of extreme values of the magnitude of a financial crisis.\\

Another approach to measuring systemic risk is network analysis. The network approach provides a set of techniques for analyzing the structure of whole entities and a range of theories to explain the mechanisms behind a complex system. \cite{battiston2012debtrank} studied complex networks of the controversial US Federal Reserve Bank (FED) emergency program dataset and introduced  DebtRank metric in a recursive way to determine the impact of the distress of one or some financial institutions through their counterparties network. \cite{Sorama2012} introduced SinkRank metric to predict an influence of disturbance caused by the collapse of a bank in a payment system and identify most affected banks in the system. \cite{cont2012} analyzed the potentials for contagion and systemic risk in an interlinked financial network and introduced a methodology for analyzing an institutional default, metrics for detecting the systemic importance of institutions in a network of interconnected financial institutions. \cite{elliott2014financial} proposed a simple model of cross-holdings to analyze cascades in financial networks; they concluded that diversification and integration of financial institutions have non-monotonic effects on financial contagions. These methods are based on the feedback centrality of the financial network which we will explain later.\\

From this perspective, \cite{allen2000financial} examined the role of the network to the contagion of a small shock and how it spreads to the whole system.  \cite{eisenberg2001systemic} proposed a model of static default cascade which with given a set of banks that are directly interconnected through an interbank network of loans. They found a market-clearing equilibrium, based on interdependencies within a system, by applying fixed point theory. The model built on balance sheet information of banks including nominal (book) values of assets and liabilities, when banks' assets are not sufficient to cover their debt liabilities, banks become insolvent, and finally default. This model assumed insolvent (default) banks cannot repay their counterparties' debt liabilities and assets' value will be reduced.\\

 Another type of cascades begins with an illiquid institution which tries to sell its loans from its debtors, and imposes liquidity stress to its counterparties, triggering a liquidity cascade in the entire system.  \cite{Gairspa20090410} studied a liquidity cascade mechanism to model the credit risk in early 2007 when some big investment banks faced funding liquidity problems since they held partially liquid assets. Such banks started acquiring the liquidity of their counterparties by reducing their interbank lending nearly completely. The static cascades advance through these mechanisms and after a series of deterministic iterations reach "cascade equilibrium" or a steady state of the system.\\

 Network perspective helps to have a system-wide perspective and to design effective mechanisms to prevent the potential breakdown of the financial system. In the aftermath of the great financial crisis, both academic researchers and regulators raise questions regarding how the financial network structure affects the propagation of systemic stress in the economic system, how financial network's linkages are important in determining the consequence of financial interaction, and why several financial institutions suffer concurrently huge losses or default. Particularly, by studying events and  the network structure during the great financial crisis, they want to find out how networks are expected to form when an individual institution has the discretion to select its partners and how the change of key players' characteristics changes roles in the system.\\

Our paper addresses these questions and takes up these challenges by introducing an impact index as a metric of measuring the systemic risk for application to default contagion on (directed) interbank networks. We present a simple model based on the framework of \cite{Gairspa20090410} and the cascade mechanism of \cite{eisenberg2001systemic} model.\\   

 We formulate the problem of finding SIFIs as a fixed point problem, and propose an approximation solution to this problem, and discuss uniqueness and existence of the solution. We then develop a robust network based algorithm to rank financial institutions and to identify the SIFIs in a financial system. We discuss the computational performance of our algorithm and assess in which practical situations this metric is relevant. This algorithm simulates the actual cascade of bank defaults which end up to equilibrium. In our model, nodes of the network hit by individual or aggregate shocks as an exogenous factor. Our model is setup to enable us have a robust estimation of system ranking of financial institutions because it concerns the financial state of an institution as a function of a market condition. Since small events due to a local stress can simply turn into a systemic crisis, the robust identification of SIFIs needs to consider all possible scenarios.\\

Our paper also relates to recent contributions to functions of network to crystallize systemic risk and roles of key players' financial network and systemic risk  by promoting a better understanding of the main factors like the level of interconnection and network position, which lead to an institution to be considered as an SIFI. Past studies have shown that to assess the SIFIs accurately , we need to take the complex and interconnected structure of the financial system into account and focus on the effect of connectivity of the economic network as risk absorbers and/or amplifiers. A speech by Andrew \cite{haldane2013rethinking}, Executive Director for Financial Stability at the Bank of England, pointed out that a higher degree of connectivity of a financial network can absorb shocks, meanwhile it can spread shocks through  the system. One can see there is a disagreement perspective about the role of interconnectivity in the stability of the network. The role of interconnection remains a central open question in the literature. On the one hand, studies by \cite{allen2000financial} and \cite{Freixas2000} showed that a higher interconnected financial system is more stable and by distributing the stress among more members can reduce its side effect on an individual member. Similar to this perspective, \cite{acemoglu2013systemic} studied the link between the financial network structure and the probability of systemic failure. They concluded that in the case of equal distribution of interbank debts and with having a small number of affected institutions, a more densely interconnected system is more stable. On the other hand, some other researches like \cite{vivier2006contagion} argue that the probability of a systemic failure of a financial system might be amplified by a large number of counterparties. A study by \cite{drehmann2013measuring} considered the system-wide importance of an interconnected system and highlighted the difference between systemic impacts of interbank borrowing and lending on a financial system.\\

 This "robust-yet-fragile" characteristic of the financial system leads us to investigate a separate systemic impact of the degrees of out-coming and in-coming of a network member on other members. Thus, we hypothesize that in a financial network, increasing the degree of outcomes and incomes of a network member can increase and decrease the likelihood of that member becoming a SIFI, respectively. We also expect that the systemic aspects of an institution might be explained by means of its network and its dependency to other members of the network. These hypothesizes are tested with the results of our numerical studies.\\

On a broader level, our numerical experiments indicate main conclusions in the application of our framework which is imposing restrictions on the level of interconnections of an institution can significantly reduce the likelihood that the institution is ranked a SIFI.  Nevertheless, raising the degree of out-comes (the number of borrowers) and the degree of in-comes (the number of lenders) of a network member has different impacts on the outcome. The former policy leads to increase by 1.520 percentage points and the later one leads to decrease by 1.685 percentage points that possibility, respectively. These results are robust to different shocks in the network.\\

On the empirical side, we test these hypothesize by using a panel data of the largest banks in each country and study which factors lead to identifying global SIFIs. The empirical outcome is consistent with our theoretical findings which support the hypothesis of the association of the level of the interconnection and the probability of an institution to be classified as an SIFI. The results of the empirical study indicate that raising the likelihood that an institution to be ranked as an SIFI by 1.876 percentage points by increasing one percentage of interconnection levels.\\

%%%%%%%%%%%%%%%%

\subsection*{Reminder of the paper}
%% Reminder of paper
The paper proceeds as follows. Section \ref{SecstatisCascade} 
gives a brief overview of systemic risk and reviews some models of financial network and their general properties. Section \ref{SIFIsection} explains main approaches for identification of SIFIs. Section \ref{Model} explains our model and introduces an impact index as a metric for to identify SIFIs by constructing a fixed problem, and gives an approximation solution to this problem. In section \ref{SectExpFinNet}, we demonstrate how to identify the SIFIs by applying our algorithm in a real interbank network and compare it with an existing method. In section \ref{SecnumbericExp}, we explain the numerical simulation of interbank network and discuss on the efficiency and accuracy of a proposed solution. Section \ref{SecEmprical}  empirically examine the prediction of our theoretical model and investigates factors that lead to identifying Global-SIFIs. Finally, this research is summarized in section \ref{Secsummary}.

\section{The Static Cascade Model}
\label{SecstatisCascade}
In this section, we start with the mechanism of default contagion, explain how the default of one institution may impact other institutions directly, leading to a so-called a default cascade. When the total of the decreasing in the market value of the external assets of a bank exceeds its net worth and the total liabilities of a bank exceed its total assets the institution will be incapable of repaying its debtors and fulfilling its investor's requirements and will be in a default position. This insolvent institution, with the condition not being bailed out by a government action, creates stress for creditor banks' interbank assets. If stress transmitted through the entire system, it will create a channel for default contagion.\\ 

Modern financial system is a concentrated and interconnected system; consequently, it becomes less decomposable and vulnerable to systemically down fall. In this interconnected system, \textit{Contagion Dynamic} can be defined as follows: when a single bank starts to hoard liquidity; it creates challenges for banks that were beforehand borrowing from this bank to meet their own liquidity conditions and have liquidity shortage. As a result, liquidity hoarding can spread across the system due to the structure and connectivity of the interbank network.\\

 Interconnectedness of financial institutions has been highlighted by recent researchers; (\cite{cont2012}; \cite{Gai2010}; \cite{amini2012}) used a network model to study the probability of contagion of financial distressing interbank networks. \cite{hurd2011framework} applied simulation studies based on network models to calculating the expected size of contagion events models. One of the primary sources of systemic risk is the contagion of the economic distress in the financial system. By reason of interlinked financial exposures among institutions, this distress can spread throughout the entire of the financial system in a domino fashion.\\

\cite{hurd2015contagion} described a static cascade model which creates a series of deterministic phases until it reaches to a steady state cascade so-called cascade equilibrium. In the cascade equilibrium, no additional institutions is forced to default.  This static default cascade's model begins with a set of institutions characterized by their balance sheets that are interrelated with the directed network. The balance sheets cover the information of the nominal (book) values of each institution's total assets or/and liabilities. The general assumption is that a default institution may not able to repay the nominal values of its debt, as a consequence the values of interbank assets will be reduced. \cite{eisenberg2001systemic} is a famous model of static cascade which reach a fixed point, equilibrium of the system, as the final phase of the cascade.\\

In the context of microeconomics of banking, "liquidity" is defined as an ability to fulfill requirements while a bank comes due without happening inappropriate losses. Financial institutions have several ways of generating the liquidity; e.g. borrowing from other banks (interbank loans) or the central bank, selling loans and raising capital. Losing liquidity could have happened to a bank if it lost some fraction of its deposits due to the liquidity hoarding of counterparties. Banks, to avoid suffering liquidity problems,  immediately ask for available interbank liquidity. Consequently, the illiquidity spreads through the entire financial system. There are theoretical and empirical evidences that illiquid can lead to being insolvent while banks cannot pay their debts. To guarantee that sufficient liquidity is maintained, it requires that banks monitor the significant individual liquidity risks personally as well as diversifying counterparty risk. A bank wants to be sustainable in the financial system needs to attract significant liquid funds and invests in higher return assets with less market liquidity. Stress testing against extreme scenarios helps them to identify abnormal market liquidity circumstances and avoid sudden liquidity shifts.\\

"Solvency" is defined as an ability of a financial institution to accomplish its long-term expansion. A financial institution becomes insolvent when its current liabilities exceed the current assets, consequently, they are under-capitalized and insufficiency to discharge all debts. Insolvency of a large player in the banking system may amplify the contagion risk to the entire system. For that reason, it is essential to model quantitatively the contagion effect and insolvency of financial institutions and define the basic underlying requirements for controlling solvency risk of the individual institution.\\

The balance sheet of an institution (regulated like a bank or/and no leveraged and no regulated like a hedge fund) consists of the nominal values of assets and liabilities, which the aggregated of securities and contracts. Table \ref{table:balance} schematically characterizes a simplified balance sheet of an institution. Denote by $A^i$ is the nominal value of assets of institution $i$ consists of nominal external assets $(A^E)^i$ (fixed and liquid assets) and interbank asset $(A^{inte})^i$. Similarly, the nominal value of liabilities $L^i$ includes nominal value of deposits $(L^D)^i$ and interbank liability $(L^{inte})^i$. 

\begin {table}[h] 
\captionsetup{font={small,it}}
\begin{center}
\begin{tabular}{c|c}
Assets & Liabilities \\ 
\hline 
Interbank Asset $A^{int}$& Interbank Liability $L^{inte}$ \\ 
External Fixed Asset $A^E$ & Deposits $L^D$\\
$\cdots$ & Capital $Cpa$\\ 
\hline 
\end{tabular} 
\captionsetup{width=.8\textwidth}
\caption {A stylized balance sheet. }
\label{table:balance}
\end{center}
\end {table}
The institution's nominal value equity (capital buffer) can be expressed as:

\begin{eqnarray}
  Cpa^i = (A^{E})^i+ (A^{int})^{i}-(L^{int})^{i}- (L^D)^i.
\end{eqnarray}

When an individual bank is not be capable to meet all its obligations, it may face liquidity shortages, try to use its own reserve or from other sources like interbank markets.

\subsection{Interbank Network}

The recent economic crisis has raised a wide awareness of the financial system need to be studied as a complex network whose nodes are financial institutions, and links are their financial dependencies. Network approach plays a central role in modeling the transmission of the information and determining how it spreads. Network Modeling provides a better view of analysis the dynamical progress of the financial structure and explains how information and financial flow passes through the system. In this perspective, systemic risk can be measured and quantified by the analysis of the dynamics the network.\\

In the aftermath of the great financial crisis, both researchers and regulators raise the question of how the structure of network affects the propagation of systemic stress in an economic system. Network approach can help to create some default scenarios and compare different network structures to understand how a structure of the network affects the propagation of systemic stress in a crisis time.\\

 Network science is mathematically formalized, and some theories and methods are built to study of interactions patterns among nodes. Here, we provide fundamental concepts and definitions that are a basis for the language of financial network analysis. One can find more information about economic network in \cite{Jackson2008}.\\
 
In a general interbank network setting, the matrix $\mathbb{A}$ is defined as an adjacency matrix with the value of interbank exposure if node $i$ and node $j$ are connected, $0$ otherwise:
\begin{equation}
  a_{ij}=\left\{
    \begin{array}{@{} l c @{}}
     1 & \mbox{interbank exposure (debtor i to creditor j)} \\
      0 & \mbox{no interbank relation }
    \end{array}\right.
  \label{eq4}
\end{equation} 

 This adjacency matrix with entries that are not only zero or one represents an interbank position point from a debtor to a creditor. With given the network, we can add up the incoming connections (In-Degree) as a number of creditors and outgoing connections (Out-Degree) as number of debtors. We define these network terminology as follows.\\

\begin{mydef}{\textbf{In-Degree}}\\
In a network, in-degree of one node represents the number of its incoming connections (creditors). The in-degree of the node $i$ is given by:
\begin{equation}
deg^{in}_{i}  ∶= \#\{j \in \mathbb{A}| a_{ji} > 0 \}.
\end{equation}
\end{mydef}

\begin{mydef}{\textbf{Out-Degree}}\\
In this network, the out-degree of one node represents its number of outgoing links (debtors) and it is given by: 
\begin{equation}
deg^{out}_{i}  ∶= \#\{j \in \mathbb{A}| a_{ij} > 0 \}.
\end{equation}
\end{mydef}

The correlation between in- and out-degree of nodes in a financial network indicates how the contagion of economic distress spreads through the entire financial system. The degree distribution of the network, $P_{degree}( deg^{in}, deg^{out})$ is the empirical distribution of the fraction of nodes in a network with in-degree $deg^{in}$ and out-degree $deg^{out}$. \\

The degree of nodes captures how nodes are connected, however, it cannot indicate the nodes position and distance between them. To know which node attracts more attention, we need to look at other properties of a network. These characteristics of the network show how a node is close to other nodes and how easy can reache others. Closeness centrality (\cite{freeman1978centrality}) is a node centrality indice, which is measured by the relative distance of a node to all other nodes. Betweenness centrality index (\cite{freeman1977set}; \cite{anthonisse1971rush}) is another centrality index which indicates who has control over the information or financial flow of the network. We define these centrality measurements as follows.

\begin{mydef}{\textbf{Closeness Centrality}} 
Closeness Centrality (CC) measures how far a node from other nodes with calculating relative distances between nodes. Closeness Centrality (CC) is defined as:
\begin{equation}
CC_i=(n-1)({\sum_j d_{ij}})^{-1},
\end{equation}
where $d_{ij}$ is the shortest distance between nodes $i$ and $j$,in a weighted networks by introducing. \cite{dijkstra1959note}.
\end{mydef}

\begin{mydef}{\textbf{Betweenness Centrality}}
The total number of paths between node $j$ and node $k$ is called $P_{jk}$ and $P_{jk}^i$ is a number of paths pass through node $i$. Betweenness Centrality (BC) of the node $i$ in a network is defined by:
\begin{equation}
BC_i = \binom{n-1}{2}^{-1}\sum_{j,k \neq i} \frac{P_{jk}^i}{P_{jk}}.
\end{equation}
\end{mydef}
Betweenness Centrality measures the intermediary characteristics of a node. All of these measurements capture different aspects of network structure, and none of measurements can dominate others.\\

Due to lack of access to real-world banking network, there only exist a few empirical types of research in this field.  \cite{cont2012} described a methodology to verify potentials for contagion and systemic risk. This study used on an unique dataset of mutual exposures of Brazilian financial institutions to analyze the network structure role in the estimation of systemic risk. The topology of an interbank payment network between commercial banks over the Fedwire Funds Service has been studied by \cite{Soram2007}. \cite{muller2006interbank} assessed the Swiss interbank network and measured systemic risk via mutual interbank exposure and spillover effects of a bank risk on other banks in the system. These empirical studies of structural and topological aspects of real-world networks show that in-degree and out-degree of network's linkages are heavy-tailed distributed and follow power laws distribution with the heterogeneous interconnections structure. The correlation between them has verified the effects of the stress contagion. These studies found certain properties of the network changed considerably immediately after the financial crisis.  Based on these empirical findings, we create a random network model to simulate interbank networks and analysis financial systems.\\

%%The financial systems can be modeled with the direct weighted scale-free network with in-degree and out-degree of nodes follow power law distributions.\\

Generally, a financial system can be modeled as a network, that includes a set of nodes $V_{(i \in \{1, \dots, n\})}$ referred to financial institutions and the network is formed as direct or indirect relations. The canonical form a network is an undirected graph, which nodes are connected without direction. This type of network models can represent economic relations or partnership, friendships in social sciences. The second type network is modeled as a direct network in which a node can be connected to another node, in such that the second node is not being connected to the first node. In this directed network topology, node $i$ and the link $\Omega^{ij}$ represent institutions and the interbank exposure, respectively. At time of the default of institution $j$, the institution $i$ faces an exposure cost $\Omega^{ij}$ (directed edge from a debtor to a creditor).\\
 
Total asset value invested by institution $i$ in banking activities is:
\begin{equation*}
A^i= (A^{E})^i + \sum_i \Omega^{ij}.
\end{equation*}

Similarly, total nominal value of liabilities of institution $i$ is:
 
\begin{equation*}
L^i= (L^{D})^i + \sum_j \Omega^{ij}.
\end{equation*}

Matrix $\Pi$ represents the matrix of $n \times n$ interbank exposure matrix (see: Figure \ref{matrixInter}). This matrix contains the information on exposure magnitude a node on its counterparties in the system. The summation of rows and columns gives the information on the interbank liability an assets. 

  \begin{figure}[htbp]
\caption{The Matrix of Interbank Exposure.\label{matrixInter}}
\centering
\[
  \begin{blockarray}{cccccccc}
  &1 &2 & \dots & n &&\\
    \begin{block}{c(cccc)|cc|c@{\hspace*{5pt}}}
       & 0& \Omega^{12} & \dots& \Omega^{1n} & (L^{int})^1 &(L^E)^1 & L^1\\
     &\Omega^{21}& 0 & \dots & \Omega^{2n} & (L^{int})^2& (L^E)^2 & L^2\\
  \Pi=     & \vdots& \vdots & \ddots& \vdots & \vdots& \vdots & \vdots  \\ 
      & \Omega^{n1}&  \Omega^{n2} & \cdots& 0 & (L^{int})^n& (L^E)^n & L^n\\                  
    \end{block}
           \cline{2-5}
    &      (A^{int})^{1}& (A^{int})^{2} & \cdots& (A^{int})^{n} & & & \\
    &    (A^{E})^{1}& (A^{E})^{2} & \cdots& (A^{E})^{n} & & & \\    
            \cline{2-5}
  &   A^1 &A^2 & \dots & A^n &&
  \end{blockarray}
\]
\captionsetup{width=.5\textwidth}
\captionsetup{font={small,it}}
\caption*{The presentation of the matrix of interbank exposure. The summation of elements of columns and rows are interbank assets and liabilities, respectively. }
\end{figure}

Typically, financial networks are large weighted networks. In a weighted network, the adjacency matrix contains information on weights represented the relative nodes strengths. Later, we will discuss about properties of the weighted adjacency matrix and how to use it in our model.

\section{SIFI}
\label{SIFIsection}

As a major effect of the complexity of the contemporary financial network and  lack of an adequate methodology for measuring systemic risk, anticipating the impact of defaults in the financial system becomes a challenging work. For a long time, the size of financial institution's balance sheet has been used to rank the institutions in the system, and large institutions based on their balance sheet size are stated as "Too Big to Fail". However, the recent economic crises has indicated that relatively small institutions  can have the significant impact the financial system.\\

There are two main approaches for identification of SIFIs: model-based approach and indicator-based approach. The first approach is based on main indicators, to comprise broad characteristics of financial institutions. The second one is based on dynamics of financial institutions' relationships. \\

The Basel Committee on Banking Supervision (BCSBS) \cite{board2013global} applied the first approach and developed an indicator-based measurement for assessing the systemic importance of Global Systemic Important Banks (G-SIBs). It is based on five equally weighted indicators: size, complexity ,interconnectedness, substitutability (financial infrastructure) and cross-jurisdictional activity. In April 2009, the Financial Stability Board (FSB) was established to develop strategies for extra regulation of G-SIBs. BCSBS updates the list G-SIB based on information from FSB and publishes the new group of systemic important banks annually. In the US, the Financial Stability Oversight Council (FSOC) (\cite{weistroffer2011identifying}) use slightly different aspects of company's potential including size, lack of alternatives, interconnectedness, leverage, liquidity risk, maturity mismatch and present regulatory inspection. An SIFI may be an investment bank (e.g. Lehman Brother), a hedge fund (e.g. Long Term Capital Management (LTCM)), or an insurer (e.g. AIG). The International Association of Insurance Supervisors (IAIS) developed a similar assessment methodology to identify to Global Systemically Important insurers (G-SIIs) whose can pose the risk to the financial system. BCSBS classified institutions which their activities are limited to a specific region as Domestic (national) Systemically Important Institution (D-SIFI) which have an impact only a specific country's economy. \\

The academic approach to identify SIFI is the model-based approach. The model-based methodology to analyze the systemic importance of financial institutions goes back to theoretical papers by \cite{allen2000financial} and
 \cite{Freixas2000}. Their findings highlighted the significant the role of the structure of the financial network in transmiting countriparty risk.  Model-based approaches (e.g. Systemic Expected Shortfall (SES) suggested by \cite{acharya2010measuring} and COVaR  proposed by \cite{Adrian2008} ), based on market data risk measure  Value at Risk (VaR), Marginal Expected Shortfall (MES) and asset price correlations. \\

Due to lack of sufficient information for all financial institutions, models' instability over business cycles and the complexly of financial system, modeling approach is not robust and cannot capture all the ways that an institution is systemically important. To overcome the traditional methods limitations, some network-based models were developed.  The DebtRank algorithm by \cite{battiston2012debtrank} and SinkRank by \cite{Sorama2012}, as default-cascade approaches, estimated the impact of shocks in a financial network and predicted the magnitude of disturbance triggered by the failure of an institution.  The network model by \cite{eisenberg2001systemic} studied the spread of defaults through an interbank network of nominal obligations  and calculated the clearing payments vector to determine the losses in the network. Based on this model, we propose a metric to identify SIFI with taking into account both common idiosyncratic shocks and contagion through counterparties' exposures.\\

To capture different dimensions of the systemic importance of an institution, we need to measure various aspects of its financial activities. This measurement should be robust in terms of uncertainty in market data quality and exogenous factors, and to be applicable for all types of financial institutions. Indicator-based measurement is a simple method to rank institutions and is robust with respect to some uncertainties of the modeling approach. Nonetheless, this approach is static and cannot reflect dynamics of the financial system.\\

Basel III agreement introduced a set of tools and methodology to regulate the SIFIs, to reduce the probability of failure and limit SIFIs failures consequences. As a result, it can enhance the financial stability and mitigate the cost of governments' interventions. These tools can be classified in following categories: (i) the reduction of the SIFIs size: institutions' size represents the potential of posing risk to the entire system partially, (ii) re-organizing the SIFIs' structure can reduce the systemic importance of an institution, (iii) limiting of the institution activity can reduce the contagion of stress through the system, (iv) an additional capital buffer can reduce the probability of SIFIs' failure and (v) improving store liquidation of SIFIs may help in the event of failure. The scope of these regulations is limited to available information on institutions' characteristics and their activities.\\

%%%%%%%%%%%%%%%%%%%%%%%%%%%%%%%%%%%%%%%%%%%%%%%%%%%%%%%%%%%%%%%%%%
%%%%%%%%%%%%%%%%%%%%%%%%%%%%%%%%%%%%%%%%%%%%%%%%%%%%%%%%%%%%%%%%%%

\section{The Model}
\label{Model}
%tom hurd:
%At the onset of the crisis, the system that was previously in a normal or quiescent state is hit by a damaging shock. This shock impacts banks' balance sheets
%sufficiently hard that one or more fail or become stressed;
%2.Failed or stressed banks transmit balance sheet shocks to their network counterparties;
%3.Thereafter, the network undergoes a sequence of updates as banks respond to
%the balance sheet shocks they receive, thereby inflicting further shocks to their
%counterparties.
%
%

In this section, we explain our model and introduce an index to quantify financial relations among institutions, regarding the impact of the distress of institutions to their counter-parties across the entire system in a recursive way. In this approach, we determine systemically important institutions by analyzing the dynamics of their dependencies. We use a directed network where the vertices represent financial institutions, and the edges represent financial relations.\\

Our model studies network effects of liquidity hoarding procedure. Mainly, it shows how liquidity shortages can spread through the financial system via interbank linkages. The model examines the financial system structure, such as joint distribution of lending and borrowing links and connectivity degrees to understand how shocks spread.\\

 The system at time $t=0$ is in a normal state and the network is formed. At time $t=1$, the system is hit by a shock or several shocks which have impacts network. The stressed nodes start to transmit shocks to their counterparties and cascade begins. Basically, we assume there is no channel for fire sale and changing of all external cash flows, assets throughout the cascade are not considered. At time $t=2$, the cascade continues until it reaches equilibrium (See Figure: \ref{figModel}).\\
 
  \begin{figure}[htbp]
\centering
\includegraphics[scale=.45]{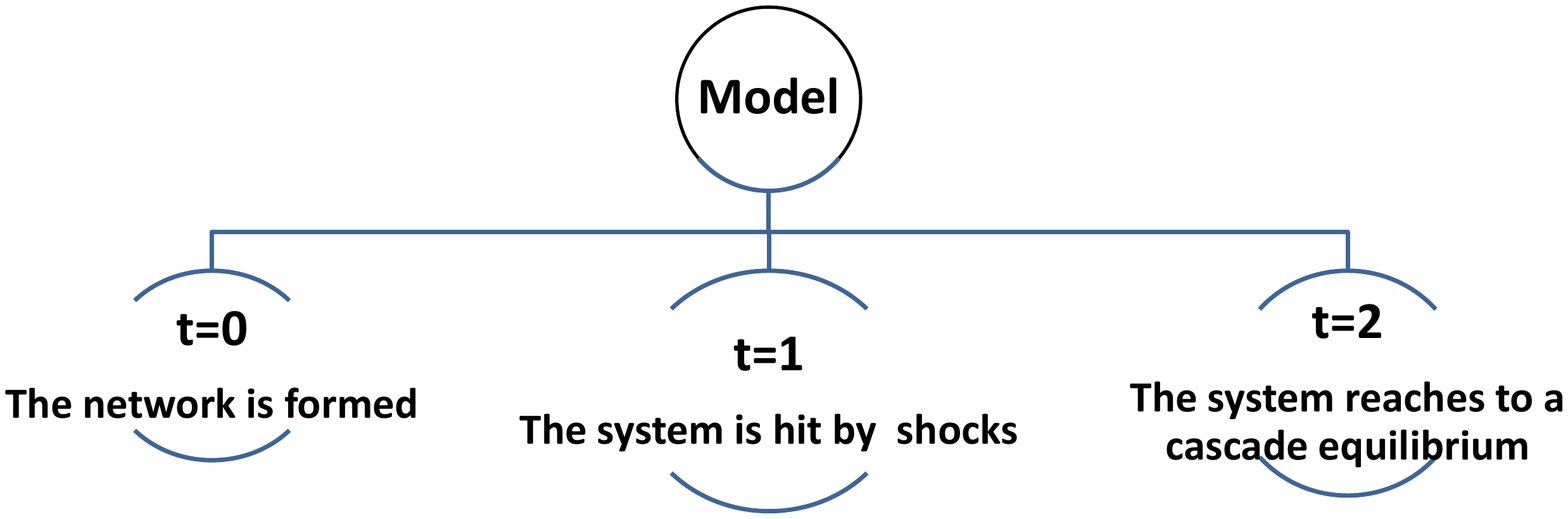} 
\caption{The model \label{figModel}}
\end{figure}

Our network setting (economy) has a simple structure: nodes contain information on balance sheets of institutions and weighted direct edges represent the amount of exposure of a node on its counterparties. In this economy, we use solvency condition index of an institution, expressed in \cite{Gairspa20090410}, when
the current institution's liabilities exceed its current assets. A bank becomes insolvent when its current liabilities exceed its current assets.
\cite{Gairspa20090410} defines the solvency condition as follows:
\begin{eqnarray}
\label{giakapadia}
 (1-\phi)(A^{int})^i + (A^E)^i -(L^{int})^i - (L^D)^i>0,
\end{eqnarray}
where $\phi$ is the portion of banks' obligation to a defaulted bank. We can determine the solvency condition of bank ($i$) with having an assumption which is losing of interbank asset of an instituation at time of default (\cite{Gairspa20090410}).

From \cite{Gairspa20090410} solvency condition, equation \eqref{giakapadia}, we have:
\begin{eqnarray}
\label{orginalSolvency}
 \frac{(1-\phi)(A^{int})^i + (A^E)^i -(L^D)^i}{(L^{int})^i}>1.
\end{eqnarray}
This condition shows the ability of an institution over a cash flow time horizon to meet its obligation when debts are collected by its debtors. With using network centrality measurements, we define solvency index as following:

\begin{mydef}{\textbf{Solvency Index}}\\
 Let $ \varepsilon^{sh}$ be the aggregate or individual deposits' shocks. From equation \eqref{orginalSolvency} we have:
\begin{eqnarray}
\chi_i &=&  \frac{(1-\phi)(A^{int})^i + (A^E)^i -(L^D)^i+ \varepsilon^{sh}}{(L^{int})^i}\\
      &\approx& \frac{(1-\phi)deg^{in}_{i}(\bar{\Omega})^i + (A^E)^i -(L^D)^i+\varepsilon^{sh}}{deg^{out}_{i}(\bar{\Omega})^i},
\end{eqnarray}
where $(\bar{\Omega})^i$ is the average of interbank exposures of node $i$ and $deg^{in}_{i}$ and $deg^{out}_{i}$ are in-degree and out-degree of node $i$.\\
Solvency Index of node $i$ ($Solv_i$) is defined as  institution ($i$)'s inability to satisfy the solvency condition and then contagion starts to spread. This index cab be expressed by:

\begin{eqnarray}
Solv_i=\min (\chi_i, 1).
\end{eqnarray}
\end{mydef}

In a counterparty network, the vulnerability weight of a node can be defined as a fraction of capital which is decreased by the impact of the default of its counterparty. With given the capital buffer of the institution $i$ $Cpa_i$, which is used against financial risk, we define the vulnerability weight index as the following:

\begin{mydef}{\textbf{Vulnerability Weight}}\\
Vulnerability weight of the institution  $i$ is expressed as its ability to mitigate the interbank activities tensions which make the financial system susceptible to counterparty risk. It can be quantified as the proportion of the capital of an institution used against interbank exposure:
\begin{equation}
\mathbf{w}_{ij}=\min(\frac{\Omega^{ij}}{Cpa^i},1).
\end{equation}
\end{mydef}

The economic value of the impact of one institution on others is calculated by multiplying the impact by the relative economic value of its counterparties. We define \textit{Relative Impact} as a fraction of solvency index one institution on accumulative solvency indexes of its counterparties.\\

The impact value of an individual institution on its counterparties can be defined as:
\begin{equation}
\mathbf{\tilde{p}}_i=\sum_j \mathbf{w}_{ij} \mathbf{r}_i,
\end{equation}
 where $\mathbf{r}$ is the solvency ratio which represents the relative solvency of an institution to its counterparties:
\begin{equation}
\label{counterpartie_impact}
\mathbf{r}_i=\frac {Solv_i}{\sum_j Solv_i} \qquad  \quad \{j|a_{ij}>0\}.
\end{equation}

In addition, an institution's counterparties have an impact, we add the second term in equation \eqref{counterpartie_impact} for an indirect impact via neighbors. We define the corresponding \textit{Impact Index} as measuring of the expected loss, which is generated by the failure of a group of institutions as follows.
\begin{mydef}{\textbf{Impact Index}}\\

Similar to \cite{battiston2012debtrank}, we define the impact index of institution $i$ as follows:\\

\begin{equation}
\label{CIeq}
 \mathbf{p}_i= \underbrace{\alpha  \sum_j \mathbf{w}^{BC_i}_{ij} \mathbf{r}_i}_{\text{expected loss of institution's failure (direct effects)}} + \underbrace{\beta \sum_j \mathbf{w}^{CC_i}_{ij}  \mathbf{p}_j}_{\text{expected loss by counter-parties' failure (indirect effects)}}\\
( \beta = 1- \alpha),
\end{equation}
where $\alpha< 1$ is the probability of imposing shock by a node and $\beta = 1 - \alpha$ is the probability of reviving shocks by counterparties , and $CC_i$ and $BC_i$ are the closeness and Betweenness centrality of the node (institution) $i$.
\end{mydef}

 The first term is measuring the proportional of the expected loss, generated by the failure of an institution (direct effects), and the second proportion can be coming from the failure of its counterparties (indirect effects). As above mentioned, in a network the closeness centrality of a node indicates the relative distance of the node to other nodes and another centrality measurement which is called betweenness centrality indicates who has control over flow between others. In other words, these indexes can show the scope of impacts of an institution's distress or failure on the system. With the higher value of  betweenness centrality of  a node shows the vulnerability of a node to control the transmission of counterparties' stress.  Similarly, the higher value of the closeness of a node can intensify the impact of a node on its counterparties.

\subsection{Problem Formulation}

The vector $\mathbf{p}$ contains information on impact indexes of institutions in the network, we can rewrite equation \eqref{CIeq} in a matrix form:
\begin{equation}
\label{matEq}
  \mathbf{p}= \mathbf{M} \mathbf{p},
\end{equation}
where let matrix $M$ defined on $R^{n\times n}$ as a non-negative (positive) matrix from  equation \eqref{matEq}:
\begin{equation}
\label{Mmatrixeq}
\mathbf{M}= \alpha \mathbf{W}^{CC} + \beta \mathbf{W}^{BC}  \mathbf{R}, \quad \quad \quad ( \beta = 1- \alpha),
\end{equation}
where the matrix $\mathbf{R}$ is a diagonal matrix with entries on the main diagonal are the vector $r$.  We can decompose a perturbed matrix $\mathbf{M}$ into two matrices: nominal matrix $\mathbf{P}$ which contains information on vulnerability weighted of impact indexes of institutions and a perturbation matrix $\mathbf{\Xi}$ which includes information on the aggregate or individual deposits' shocks $ \varepsilon^{sh}$.\\

\begin{equation}
\mathbf{M}= \mathbf{P} +  \mathbf{\Xi}
\end{equation}

Let $\mathcal{P}$ be a set of perturbed matrix and $\mathbf{M} \subset \mathcal{P} $.
We are looking for a robust solution  $\bar{\mathbf{p}}$ of the dominant eigenvector of matrix $\mathbf{M}$.

Indeed, finding central nodes which pose counterparty risk to the system can be used to identify systemically important financial institutions in the financial system. We can formulate this problem as a fixed point problem and find a robust Eigenvalue solution. In the following proposition we define a robust Eigenvalue solution to this fixed point problem. \\
\newcommand{\Argmin}{\operatornamewithlimits{Argmin}}.
\newtheorem{myprop}{Proposition}
  \begin{myprop}{\textbf{Robust Eigenvalue solution}}\\
 The vector $\bar{\mathbf{p}}$ is defined as
\begin{equation}
\label{mainprob1}
\bar{\mathbf{p}}= \mathbf{M}  \bar{\mathbf{p}}
\end{equation}
is a robust solution of the fixed point problem on matrix $\mathbf{M}$ if
\begin{equation}
\label{mainprob2}
\bar{\mathbf{p}} \in \Argmin_{\mathbf{p} \in \mathbf{\Sigma}} (\max_{ \varepsilon^{sh} \in \mathbf{\Xi}}||(\mathbf{P}+\mathbf{\Xi})  \mathbf{p} - \mathbf{p} ||_2  \quad  | \quad ||\mathbf{\xi}||_1 \leq \epsilon),
\end{equation}
where $\mathbf{\Sigma} :={ \{u^2 \in R^{n \times n}| \sum_i u_i=1\}}$, $ ||\mathbf{\xi}||= \sum_{ij}|\mathbf{\xi}_{ij}|$ and $\epsilon >0$.\\
\end{myprop}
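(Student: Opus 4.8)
The plan is to read \eqref{mainprob2} as the robust counterpart of the fixed point equation \eqref{mainprob1} and to prove that its minimizer exists and realizes the fixed point in the worst-case sense. First I would note that, in the nominal case, the relation $\mathbf{p}=\mathbf{M}\mathbf{p}$ is equivalent to the vanishing of the residual, $\|\mathbf{M}\mathbf{p}-\mathbf{p}\|_2=0$. Since the residual is non-negative, an exact fixed point is precisely a global minimizer of $\|\mathbf{M}\mathbf{p}-\mathbf{p}\|_2$ over the simplex $\mathbf{\Sigma}$ with optimal value zero, and conversely a minimizer attaining value zero is a fixed point. This reduces the search for a fixed point to a residual-minimization problem and supplies the bridge to the optimization formulation in \eqref{mainprob2}.

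Next I would inject the uncertainty. Writing $\mathbf{M}=\mathbf{P}+\mathbf{\Xi}$ and letting the perturbation range over the budget set $\{\mathbf{\Xi}:\|\mathbf{\xi}\|_1\le\epsilon\}$, a robust solution must keep the residual small under every admissible shock, which produces the inner worst-case $\max_{\varepsilon^{sh}\in\mathbf{\Xi}}\|(\mathbf{P}+\mathbf{\Xi})\mathbf{p}-\mathbf{p}\|_2$. For fixed $\mathbf{p}$ this maximizes a convex function of $\mathbf{\Xi}$ (a norm composed with an affine map) over a convex compact $\ell_1$-ball, so the optimum is attained at a vertex. Using the triangle inequality together with $\ell_1/\ell_\infty$ duality I would bound the inner value by $\|(\mathbf{P}-\mathbf{I})\mathbf{p}\|_2+\epsilon\|\mathbf{p}\|_\infty$, the maximizing perturbation being rank-one and supported on the coordinates of largest residual and largest impact. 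This yields a tractable convex surrogate and, more importantly, exhibits the worst-case residual as a supremum of convex functions of $\mathbf{p}$, hence convex.

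I would then establish existence of $\bar{\mathbf{p}}$. The feasible set $\mathbf{\Sigma}$ is the probability simplex (non-negative entries summing to one, written through the $u^2$ parametrization), which is compact, and the worst-case objective is continuous and convex on it; by Weierstrass a minimizer exists, and the program is exactly the non-smooth convex problem announced in the introduction. To connect the minimizer with \eqref{mainprob1}, I would invoke Perron--Frobenius for the non-negative matrix $\mathbf{M}$: after normalization its dominant eigenvector lies in $\mathbf{\Sigma}$ and renders the nominal residual minimal, indeed zero when the Perron root equals one. This identifies the argmin in \eqref{mainprob2} with the robust fixed point of \eqref{mainprob1}, and uniqueness follows from the simplicity of the Perron root whenever $\mathbf{M}$ is irreducible.

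The hard part will be the inner maximization: evaluating $\max_{\|\mathbf{\xi}\|_1\le\epsilon}\|(\mathbf{P}-\mathbf{I})\mathbf{p}+\mathbf{\Xi}\mathbf{p}\|_2$ exactly, since the $\ell_1$ budget couples every entry of $\mathbf{\Xi}$ and the optimal perturbation must align $\mathbf{\Xi}\mathbf{p}$ with the residual direction $(\mathbf{P}-\mathbf{I})\mathbf{p}$; I expect to settle for the tight convex upper bound above, which is the standard safe approximation in robust optimization. A second delicate point is reconciling the nonlinear map $u\mapsto u^2$ defining $\mathbf{\Sigma}$ with convexity, which I would handle by optimizing directly over $\mathbf{p}$ on the simplex rather than over $u$, so that the objective and the feasible set are both convex and the min-max and Weierstrass arguments apply verbatim.
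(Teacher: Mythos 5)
Your proposal is correct, and its skeleton matches the paper's own argument: recast the fixed-point equation $\bar{\mathbf{p}}=\mathbf{M}\bar{\mathbf{p}}$ as residual minimization $\min_{\mathbf{p}\in\mathbf{\Sigma}}\|\mathbf{M}\mathbf{p}-\mathbf{p}\|_2$, decompose $\mathbf{M}=\mathbf{P}+\mathbf{\Xi}$, and robustify by taking the worst case over the $\ell_1$ perturbation budget, yielding the min--max program. The difference lies in what happens after that reformulation. The paper's proof stops there --- it is essentially a definitional unwinding, and all substantive analysis (uniqueness via strict convexity of the $\ell_2$ objective, and the coincidence of $\bar{\mathbf{p}}$ with the true dominant eigenvector when $\epsilon$ is small and $1-\lambda_i>0$) is outsourced to the cited Proposition 2 of Juditsky and Polyak. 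You instead supply that analysis yourself: convexity of the worst-case objective as a supremum of convex functions of $\mathbf{p}$, attainment of the inner maximum at a vertex of the $\ell_1$ ball together with the explicit bound $\|(\mathbf{P}-\mathbf{I})\mathbf{p}\|_2+\epsilon\|\mathbf{p}\|_\infty$, existence of a minimizer by compactness of the simplex plus Weierstrass, and identification of the minimizer with the Perron vector, correctly hedged on the Perron root equaling one and on irreducibility. You also sensibly repair the paper's garbled definition of $\mathbf{\Sigma}$ by optimizing directly over the probability simplex. What each approach buys: the paper's version is shorter and inherits uniqueness from the citation; yours is self-contained and makes explicit the hypotheses (Perron root one, irreducibility, compactness) that the paper never states. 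One caution: your ``safe approximation'' replaces the exact inner maximum with an upper bound, and the argmin of that surrogate need not coincide with the argmin of the exact min--max that the proposition defines; since the proposition is really a definition of robustness rather than a theorem with independent content, this is harmless provided it is labelled, as you do, as an approximation scheme rather than as part of the characterization itself.
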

 
\begin{proof}
 Let $\mathbf{\Sigma}$ be a set of eigenvectors, we rewrite the equation \eqref{mainprob1} as a convex optimization problem with measuring the goodness of the solution in norm $||.||_2$:

\begin{equation}
\label{mainprob2}
\min_{\mathbf{p} \subset \mathbf{\Sigma}}||\mathbf{M}  \mathbf{p} - \mathbf{p} ||_2.
\end{equation}

Matrix $\mathbf{M}= \mathbf{P} +  \mathbf{\Xi}$ be a perturbed matrix and a subset of $\mathcal{P}$ and matrix $\mathbf{\Xi}$ is a perturbation and includes information on  $ \varepsilon^{sh}$. In other words this information is representation of uncertainty about the aggregate or individual deposits. \\
The fixed point problem \eqref{mainprob1} can be reduced to a convex optimization problem \eqref{mainprob2} and it can be reformulated as finding a dominant eigenvector of a positive matrix.\\

 With fixing total amount of uncertainty by  $||\mathbf{\xi}||_1= \sum_{ij}\mathbf{\xi}_{ij} < \epsilon$, a robust solution of the fixed point problem \eqref{mainprob1} can be expressed as a solution of the following min-max problem:

\begin{equation}
\label{mainprob25}
\min_{\mathbf{p} \subset \mathbf{\Sigma}}(\max_{\mathbf{M} \subset \mathcal{P}}||\mathbf{M}  \mathbf{p} - \mathbf{p} ||_2). 
\end{equation}
%Therefor a robust solution of the fixed point problem  \eqref{mainprob1} can be expressed by equation \eqref{mainprob2}.  A robust solution of for this fixed point problem on matrix is vector $\bar{\mathbf{p}}$:
%
%\begin{equation}
%\label{mainprob3}
%\bar{\mathbf{p}} \in \Argmin_{\mathbf{p} \in \mathbf{\Sigma}} (\max_{\mathbf{M} \subset \mathcal{P}}||\mathbf{M}  \mathbf{p} - \mathbf{p} ||_2 ).
%\end{equation}
%
%The fixed point problem \eqref{mainprob1} is reduced to a convex optimization problem \eqref{mainprob2} and it can be reformulated as finding a dominant eigenvector of a positive matrix. A robust solution of the fixed point problem  \eqref{mainprob1} can be expressed by equation \eqref{mainprob2}. 
\end{proof}
[\cite{Juditsky2012}, Proposition 2] proved in the case norm 2, vector $\bar{\mathbf{p}}$ is an unique solution due to the strict convexity on $R^{n \times n}$. In fact $\bar{\mathbf{p}}$ coincides with $\mathbf{p}$ if $\epsilon$ is small enough and all eigenvalues lay inside the $1-\lambda_i >0; i = 2; \dots; n$.\\

%Matrix $\mathbf{\Xi}$ is a perturbation and includes information on  $ \varepsilon^{sh}$, in other words this information is representation of uncertainty about the aggregate or individual deposits. We can fix total amount of uncertainty by  $||\mathbf{\xi}||_1= \sum_{ij}< \epsilon$. According to Perron-Frobenius theorem, for a non-negative (positive) matrix $\mathbf{M} \subset \mathcal{P}$, there exist a robust dominant eigenvector $\bar{\mathbf{p}} \in \mathbf{\Sigma}$ where $\mathbf{\Sigma} :={ \{u^2 \in R^{n \times n}| \sum_i u_i=1\}}$. Therefor a robust solution of the fixed point problem  \eqref{mainprob1} can be expressed by equation \eqref{mainprob2}. 
%\end{proof}

\subsection{Solution (Successive Approximation Method)}

The primary fixed point problem \eqref{mainprob1} is formulated as the problem of finding a robust score vector $\bar{\mathbf{p}}$. Matrix $\mathbf{M}$ is a weighted adjacency matrix of a graph with $n$ connected nodes with outgoing and incoming links. One can apply standard eigenvector centrality and PageRank methods (\cite{page1998}) to this problem, however, there exists several issues related using these methods. Firstly, matrix $\mathbf{M}$ is neither column-stochastic nor row-stochastic i.e. it is not a square matrix of non-negative with each row or each column summing to one. In this case, vector $\bar{\mathbf{p}}$ is not unique: typically in the real-world financial network, there are some disconnected subgraphs as well as cycles and solution for cyclic matrix it may converge slowly. Secondly, financial networks are large weighted networks, which are often perceived as being harder to analyze compare to their unweighted counterparts. Thus the eigenvector centrality and standard PageRank cannot detect cascade in this system.\\

Since matrix $\mathbf{M}$ is an irregular matrix i.e. there is no robust dominant vector $\bar{\mathbf{p}}$ in an explicit form, therefore, we solve the structured convex optimization problem \eqref{mainprob2} approximately. For a medium-size problem in a medium size financial network, we can apply interior-point methods as a continuous nonlinear optimization methodology. In a large-scale financial network, this problem can be solved with a method called mirror-descent family ( see: \cite{Polyak2009}\cite{Lan2011}. For a huge counterparty’s network, if the adjacency matrix is sparse enough, the method that is explained in \cite{Nesterov2012} can be applied. \cite{Juditsky2012} found a numerical approximation solution to the similar eigenvector problem, and it works based on the power method. This method can be used in the case adjacency matrix has linearly independent eigenvectors and the eigenvalues can be ordered in magnitude a $|\lambda_1|>|\lambda_2|>\dots>|\lambda_n|$.\\

Similar to \cite{eisenberg2001systemic},  vector $\mathbf{p}^k$ denotes amount of systemic impact of institutions in the system at the end of the $k^{th}$ step of the cascade.  \cite{hurd2015contagion} explained "cascade equilibrium" of the system while there is not default occurred in the system. These series of finite steps will be monotonically continued until the system converges to a fixed point in a cascade equilibrium. We use the power method as a standard algorithm to compute the dominate eigenvector of a eigenvalue problem. This method can be described briefly as follows: in the first step of initial values of  $\mathbf{p}^0$ should be chosen. In the second step, the dominate term  $\sigma_{k+1}$ of  $\mathbf{M}  \mathbf{p}^k$ should be determined. Finally, we compute the next dominate eigenvector: $\mathbf{p}^{k+1} = \frac{1}{\sigma_{k+1}} \mathbf{M} \mathbf{p}^k$. After several iterations the result of the algorithm converges to largest eigenvalue which is called the dominant eigenvalue of the matrix $\mathbf{M}$. One of the main advantage of this method is that it can be done with any norm without normalizing the matrix.\\

With using the basic idea of the power method, we present a successive approximation method for computing vector $\bar{\mathbf{p}}$ as ranks of systemically important financial institutions, in the following proposition: \\

\begin{myprop}{\textbf{Impact Index}}\\

Impact Index for (directed) interbank networks can be determined as a solution of the fixed point iteration:

\begin{equation}
  \mathbf{p}^{k+1}= \frac{k}{k+1} \mathbf{W}^{CC}  \mathbf{p}^{k} + \frac{1}{k+1} \mathbf{W}^{BC}  \mathbf{r},
\end{equation}
where $k$ is the number of iterations needed in order the system converges to a fixed point monotonically.
\end{myprop}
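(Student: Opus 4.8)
The plan is to identify the stated recursion as the power (successive--approximation) method of the preceding proposition applied to the fixed--point equation \eqref{mainprob1}, in which the direct--effect source $\mathbf{W}^{BC}\mathbf{r}$ and the propagation term $\mathbf{W}^{CC}\mathbf{p}^{k}$ enter with the cascade weights $\alpha_{k}=\tfrac{1}{k+1}$ and $\beta_{k}=\tfrac{k}{k+1}$. These are admissible because they still obey the constraint $\beta_{k}=1-\alpha_{k}$ imposed in \eqref{CIeq}, and letting the weight on the freshly injected shock decay like $1/(k+1)$ is precisely the averaging that turns the raw cascade into a convergent scheme. First I would record from \eqref{CIeq} that the impact vector solves $\mathbf{p}=\alpha\,\mathbf{W}^{BC}\mathbf{r}+\beta\,\mathbf{W}^{CC}\mathbf{p}$, so that the natural approximation injects a direct shock $\mathbf{W}^{BC}\mathbf{r}$ and re--propagates the accumulated impact through $\mathbf{W}^{CC}$ at each round.

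The key step is a change of variables that removes the time dependence of the coefficients. Multiplying the iteration by $(k+1)$ gives $(k+1)\mathbf{p}^{k+1}=k\,\mathbf{W}^{CC}\mathbf{p}^{k}+\mathbf{W}^{BC}\mathbf{r}$, so on setting $\mathbf{v}^{k}:=k\,\mathbf{p}^{k}$ (hence $\mathbf{v}^{0}=0$) one obtains the stationary affine power recursion $\mathbf{v}^{k+1}=\mathbf{W}^{CC}\mathbf{v}^{k}+\mathbf{W}^{BC}\mathbf{r}$. Unrolling yields $\mathbf{v}^{k}=\sum_{j=0}^{k-1}(\mathbf{W}^{CC})^{j}\,\mathbf{W}^{BC}\mathbf{r}$, so that $\mathbf{p}^{k}=\mathbf{v}^{k}/k$ is exactly the Ces\`aro (running) average of the power iterates of $\mathbf{W}^{CC}$ driven by the source $\mathbf{W}^{BC}\mathbf{r}$. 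This identity is what ties the scheme to the power method quoted from \cite{Juditsky2012}.

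Convergence and monotonicity would then follow from Perron--Frobenius theory together with the spectral normalization of the robust--eigenvalue proposition. Since every entry of $\mathbf{W}^{CC}$, $\mathbf{W}^{BC}$ and $\mathbf{r}$ lies in $[0,1]$, all iterates are non--negative and each partial sum adds a non--negative term $(\mathbf{W}^{CC})^{k}\mathbf{W}^{BC}\mathbf{r}$, so the accumulated cascade $\mathbf{v}^{k}$ is coordinatewise non--decreasing --- this is the monotone progression toward the cascade equilibrium of \cite{eisenberg2001systemic} and \cite{hurd2015contagion}. Under the standing hypothesis that $\mathbf{W}^{CC}$ has linearly independent eigenvectors with dominant eigenvalue normalized to one and a spectral gap $1>|\lambda_{2}|>\dots>|\lambda_{n}|$ (the condition $1-\lambda_{i}>0$ used after the previous proposition), I would expand $\mathbf{W}^{BC}\mathbf{r}$ in the eigenbasis and read off that the normalized average $\mathbf{v}^{k}/k$ converges to the projection onto the dominant eigenvector, i.e. to the fixed point $\bar{\mathbf{p}}$, whose uniqueness is supplied by the strict convexity invoked in \cite{Juditsky2012}.

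The main obstacle is exactly the irregularity of $\mathbf{M}$ flagged in the text: $\mathbf{W}^{CC}$ is neither row-- nor column--stochastic and the underlying graph may be reducible, with disconnected subgraphs and cycles. The delicate points are (i) guaranteeing a genuine spectral gap so that the Ces\`aro average isolates a single dominant eigenvector rather than a degenerate combination, and (ii) showing that the $1/k$ normalization neither collapses the limit to zero (which happens if the dominant eigenvalue is strictly below one) nor lets it diverge --- this is exactly why the dominant eigenvalue must be normalized to one. I would therefore isolate a Perron--Frobenius--type condition (irreducibility, or the strict ordering $|\lambda_{1}|>|\lambda_{2}|$) as the hypothesis under which monotone convergence to $\bar{\mathbf{p}}$ is rigorous, and handle the reducible case by restricting the argument to each recurrent communicating class of the network.
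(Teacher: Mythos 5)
Your proposal is correct and follows essentially the same route as the paper's proof: both identify $\mathbf{p}^{k}$ as the Ces\`aro average of power-method iterates started from $\mathbf{p}^{1}=\mathbf{W}^{BC}\mathbf{r}$ (the paper writes $\mathbf{p}^{k}=\tfrac{1}{k}\sum_{j}\mathbf{M}^{j}\mathbf{p}^{1}$ and reads off the recursion; your substitution $\mathbf{v}^{k}=k\,\mathbf{p}^{k}$ is a cleaner derivation of the identical identity), and both appeal to the $O(1/k)$ residual bound of the averaged power method from \cite{Juditsky2012}. Your additional Perron--Frobenius discussion of spectral gap, normalization, and reducibility makes explicit the convergence hypotheses that the paper leaves implicit or defers to the cited reference, but it does not change the underlying argument.
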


\begin{proof}
Suppose that vector $\mathbf{p}^k$ denotes information of systemic impact of institutions at the end of the $k^{th}$ step of the cascade.\\
 We can write the vector $\mathbf{p}^k$ as a linear combination of vectors from previous steps of the cascade:
\begin{eqnarray*}
\mathbf{p}^k &=&  \frac{\mathbf{p}^1+ \mathbf{p}^2 + \dots +  \mathbf{p}^{k-1}}{k}\\
        &=& \frac{\mathbf{p}^1+\mathbf{M}^1  \mathbf{p}^1 + \dots + \mathbf{M}^{k-1}  \mathbf{p}^1}{k}.
\end{eqnarray*}

We apply the power method with averaging to minimize problem \eqref{mainprob2}:
\begin{equation*}
\min{|| {\mathbf{M} \mathbf{p}}-\mathbf{p}||_2 = \min\frac{|| \mathbf{M}^k \mathbf{p}_1- \mathbf{p}_1||_2 }{k}\leq \frac{Constant}{k}},
\end{equation*}
By choosing $\mathbf{p}^1= \mathbf{W}^{BC} r$ and from equation \eqref{Mmatrixeq}, we will get this recursive rule format:
\begin{equation*}
\mathbf{p}^{k+1}=\frac{k}{k+1}\mathbf{W}^{CC} \mathbf{p}^k+ \frac{1}{k+1} \mathbf{W}^{BC} r.
\end{equation*}
With the following equation \eqref{Mmatrixeq} we can define matrix $\mathbf{M}^k$ as :
\begin{equation*}
\mathbf{M}^k=\frac{k}{k+1} \mathbf{W}^{CC} + \frac{1}{k+1} \mathbf{W}^{BC}R .
\end{equation*} 

At each step of the cascade, the impact index vector is computed as follows:

\begin{equation}
\mathbf{p}^{k+1}=\mathbf{M}^k \mathbf{p}^k
\end{equation}

That means $\alpha=\frac{k}{k+1}$ and $ \beta=1-\frac{k}{k+1}$ are weighting coefficients of matrix and representing the probabilities of direct and indirect shocks of a node.
\end{proof}
%%%%%%%%%%%%%%%%%%%%%%%%%%%%%%%%%%%%%
To construct a solution, we use a successive approximation to find a fixed point, and state and prove the existence and uniqueness of the equilibrium by means of the contraction mapping theorem.\\

Let $\mathbf{X}$ be an economy (financial system) defined as a set of impact indexes of institutions $\{1,\cdots,n\}$ denote by $\mathbf{X}_1,\mathbf{X}_2, \cdots, \mathbf{X}_n$ \footnote{In this section, for the sake of simplicity we change the notation.}. Let $\Phi : \mathbf{X} \to \textbf{X}$ be a continuous transformation function that maps $\mathbf{X}$ onto itself. A fixed point of $\Phi$ is an element $\mathbf{X}_i \in \mathbf{X} $ for which $\Phi(\mathbf{X}) = \mathbf{X}$. We use the notation $\Phi(\mathbf{X}_i) = \mathbf{X}_i$ in place of $\mathbf{X}_i=M^T \mathbf{X}_i$.\\

To compute the successive approximations and prove the existence of a fixed point, we use an iterative scheme: $\mathbf{X}_{n+1}= \Phi(\mathbf{X}_n)_{\{n=1,2,\cdots\}}$. 
With using the Brouwer fixed-point theorem and defining sufficient conditions on $\Phi$ and $\mathbf{M}$, we prove the existence and uniqueness of the systemic impact index in a banking network.\\

\begin{myprop}{\textbf{Existence and Uniqueness of the Systemic Impact Index}}\\
\label{existenceUniq}
In a financial system consists of $N$ firms with interbank activities being represented by matrix $\mathbf{M}$ and by a continuous transfer function $\Phi$,\\
(a) there exists an index associated with the ranking of systemically important firms in the network;\\
(b) transfer function $\Phi$ has a unique fixed point in matrix $\mathbf{M}$.
\end{myprop}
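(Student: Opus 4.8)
The plan is to treat the two assertions separately and deploy the two fixed-point theorems already named in the surrounding text: Brouwer for existence in part (a) and the Banach contraction-mapping principle for uniqueness in part (b). First I would pin down the domain as the probability simplex $\mathbf{\Sigma} = \{\mathbf{p} \geq 0 : \sum_i p_i = 1\}$, which is nonempty, convex and compact, and realize the transfer map as the normalized image $\Phi(\mathbf{p}) = \mathbf{M}^{T}\mathbf{p} / \|\mathbf{M}^{T}\mathbf{p}\|_1$. Normalization is unavoidable here because, as observed earlier, $\mathbf{M}$ is neither row- nor column-stochastic, so the raw map $\mathbf{p} \mapsto \mathbf{M}^{T}\mathbf{p}$ does not preserve $\mathbf{\Sigma}$. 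I would then check that $\Phi$ is well defined and continuous: since $\mathbf{M} = \alpha\mathbf{W}^{CC} + \beta\mathbf{W}^{BC}\mathbf{R}$ is non-negative and has no identically zero column, $\mathbf{M}^{T}\mathbf{p}$ cannot vanish on $\mathbf{\Sigma}$, the denominator stays positive, and $\Phi$ is a continuous self-map of $\mathbf{\Sigma}$.

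For part (a) I would then simply invoke Brouwer's fixed-point theorem: a continuous map of a nonempty compact convex set into itself possesses a fixed point $\bar{\mathbf{p}} \in \mathbf{\Sigma}$, and $\Phi(\bar{\mathbf{p}}) = \bar{\mathbf{p}}$ is, after rescaling, exactly the equation $\bar{\mathbf{p}} = \mathbf{M}^{T}\bar{\mathbf{p}}$ appearing in \eqref{mainprob1}. This $\bar{\mathbf{p}}$ is the ranking index whose existence is claimed, so this half is routine once the domain and map have been set up correctly.

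For part (b), uniqueness, I would appeal to the contraction-mapping theorem but with care, because the naive Lipschitz estimate $\|\Phi(\mathbf{p})-\Phi(\mathbf{q})\| \leq \|\mathbf{M}^{T}\|\,\|\mathbf{p}-\mathbf{q}\|$ is useless: the fixed-point equation forces the dominant eigenvalue to be $\lambda_1 = 1$, so the operator norm on all of $R^n$ is not below $1$. The correct route exploits the spectral-gap condition already flagged after the first proposition, namely $1-\lambda_i > 0$ for $i = 2,\dots,n$, i.e. $|\lambda_2| < |\lambda_1| = 1$. I would show $\Phi$ is a contraction on $\mathbf{\Sigma}$ either (i) in Hilbert's projective metric, where Birkhoff's theorem supplies a contraction coefficient strictly below $1$ for a primitive non-negative matrix and hence a unique normalized eigenvector, or (ii) by restricting the linearized map to the invariant hyperplane complementary to the Perron direction, on which the induced operator has norm governed by $|\lambda_2| < 1$. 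Either way the contraction constant $c = |\lambda_2| < 1$ produces a unique fixed point, which by Perron--Frobenius is the strictly positive dominant eigenvector and therefore coincides with the $\bar{\mathbf{p}}$ obtained in part (a).

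The hard part is precisely this uniqueness step. The contraction-mapping theorem cannot be applied to $\Phi$ in a naive Euclidean norm, since the leading eigenvalue is pinned at $1$; everything hinges on converting the eigenvalue-gap hypothesis $|\lambda_2| < 1$ into a genuine contraction, which in turn requires $\mathbf{M}$ to be primitive (irreducible and aperiodic). For a real interbank network this demands an argument that the exposure graph is strongly connected, or the addition of a regularizing term analogous to PageRank teleportation to guarantee primitivity; absent such a hypothesis the dominant eigenvector, and hence the ranking, need not be unique — exactly the degeneracy the earlier discussion warns about for disconnected subgraphs and cycles. I would therefore state primitivity of $\mathbf{M}$ (or the strict-convexity condition cited from \cite{Juditsky2012}) explicitly as the sufficient condition under which (b) holds.
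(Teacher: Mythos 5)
Your proposal follows the same skeleton as the paper's proof --- Brouwer's theorem for part (a), a contraction argument combined with Perron--Frobenius for part (b) --- but the execution of both halves is genuinely different, and in each case yours is the more careful version. For (a), the paper simply asserts that $\Phi$ is a continuous map of $\mathbf{X}$ onto itself and invokes Brouwer; it never exhibits a compact convex domain nor explains why $\Phi$ preserves it. You supply exactly the missing ingredients: the probability simplex as domain and the normalization $\Phi(\mathbf{p}) = \mathbf{M}^{T}\mathbf{p}/\|\mathbf{M}^{T}\mathbf{p}\|_1$, which is forced by the fact (acknowledged in the paper) that $\mathbf{M}$ is neither row- nor column-stochastic. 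For (b), the paper postulates two fixed points and writes the chain $d(\mathbf{M}\mathbf{X}_1,\mathbf{M}\mathbf{X}_2) = d(\mathbf{X}_1-\epsilon_1,\mathbf{X}_2-\epsilon_2) < d(\mathbf{X}_1,\mathbf{X}_2) - d(\epsilon_1,\epsilon_2) \le \theta\, d(\mathbf{X}_1,\mathbf{X}_2)$; the strict inequality and the existence of $\theta<1$ are asserted rather than derived, and --- as you correctly diagnose --- they cannot hold in the Euclidean norm on all of $R^n$, because a fixed point pins the dominant eigenvalue at $1$. Your replacement of this step by Birkhoff's contraction theorem in Hilbert's projective metric, or equivalently by restriction to the invariant complement of the Perron direction where the induced norm is controlled by $|\lambda_2|<1$, is the standard rigorous route. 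You also make explicit a hypothesis the paper leaves implicit: its Perron--Frobenius step claims only irreducibility (``from a connected graph''), whereas a genuine contraction needs primitivity or a teleportation-type regularization, and without strong connectivity of the exposure network --- a failure mode the paper itself warns about when discussing disconnected subgraphs and cycles --- uniqueness of the ranking can fail. The paper's version buys brevity and matches its informal style; yours buys a proof whose hypotheses are stated and actually suffice for the conclusion.
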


\begin{proof}
\textbf{Part (a):} $\Phi : \mathbf{X} \to \mathbf{X}$ is defined as a continuous, transformation function that maps $X$ onto itself. Based on Brouwer's fixed-point theorem, $\Phi$ is a continuous mapping function. Therefore, there exists a fixed point $\mathbf{X}_i \in \mathbf{X} $.\\

Vector $\mathbf{X}$ is an economy that is a set of firms corresponding to an interbank equilibrium of a financial network. Note that any interbank equilibrium in this model is a fixed point of $\Phi$.\\

\textbf{Part (b):} if there exist two fixed points $\mathbf{X}_1$ and $\mathbf{X}_2$,
\begin{equation}
   d(\mathbf{X}_1,\mathbf{X}_2) = d(\Phi (\mathbf{X}_1), \Phi (\mathbf{X}_2))   < \theta d(\mathbf{X}_1,\mathbf{X}_2),
\end{equation}
\newcommand{\argmin}{\operatornamewithlimits{argmin}}
where $d(\mathbf{X}_1,\mathbf{X}_2) = || \mathbf{X}_1-\mathbf{X}_2||_2$ with condition $0 \leq \theta < 1$. We should prove that $d(\mathbf{X}_1,\mathbf{X}_2)  \to 0$.\\

According to the Perron--Frobenius theorem, proved by \cite{perron1907theorie} and \cite{frobenius1912matrizen}, for matrix $\mathbf{M}$ (a real square matrix with non-negative entries and an irreducible matrix since from a connected graph), there exists a dominant eigenvector $\bar{\mathbf{X}}$ with strictly positive components, which can be expressed as
\begin{equation*}
\hat{\mathbf{X}} = \argmin_{\mathbf{X} \in \sum} \lbrace || {\bar{\mathbf{X}}- \mathbf{M} \bar{\mathbf{X}}}||_2 + \epsilon\rbrace.
\end{equation*}
Similarly, we have $ \mathbf{M}\mathbf{X} = \mathbf{X}- \epsilon$, and $\hat{\mathbf{X}}$ coincides with $\bar{x}$ if $\epsilon$ is small enough. Then, we can obtain a linear convergence in the fixed-point iterations.\\

Given the mapping function $\Phi$ on $X \to X$,
\begin{eqnarray*}
 d(\mathbf{X}_1,\mathbf{X}_2)   &=& d(\Phi (\mathbf{X}_1), \Phi (\mathbf{X}_2)) \\
             &=& d (\mathbf{M}\mathbf{X}_1 , \mathbf{M}\mathbf{X}_2) \\
             &=& d (\mathbf{X}_1-\epsilon _1,\mathbf{X}_2-\epsilon _2) \\
             &<& d (\mathbf{X}_1,\mathbf{X}_2)-d(\epsilon _1,\epsilon _2) \\
             &\leq& \theta d (\mathbf{X}_1,\mathbf{X}_2),
\end{eqnarray*}
we can write the above equation in a general form
\begin{eqnarray*}
 d(\mathbf{X}_n,\mathbf{X}_{n+1})   &=& d(\Phi (\mathbf{X}_n), \Phi (\mathbf{X}_{n+1})) \\
             &\leq& \theta d (\mathbf{X}_n,\mathbf{X}_{n+1}).
\end{eqnarray*}
The recurrence form gives us
   \begin{eqnarray*}
    d(\mathbf{X}_n,\mathbf{X}_{n+1})   &\leq& \theta^n d (\mathbf{X}_0,\mathbf{X}_{1})  \\(n = 1,2, ... ),
   \end{eqnarray*}
which implies $ d(\mathbf{X}_p,\mathbf{X}_q) \to 0$ for any arbitrary $p,q$ in $n = 1,2, \cdots$. Thus, the interbank equilibria of the model is basically unique.
\end{proof}

%%%%%%%%%%%%%%%%%%%%%%%%%%%%%%%%%%%%%%%%%
\subsection{Algorithm}
After we model the finding SIFIs in an interbank network as a fixed point problem formulate it as a convex optimization problem, we explain the approximation solution. Now we specify an algorithm to describe the mathematical process of solving this problem with approximated solution.\\

In the initial step of this proposed algorithm, we simulate a financial network with a complete or scaled free random networks, and then we calculate adjacency matrix corresponding to each network. We define the impact of one node on its indirect successors in terms of a recursive equation: $\mathbf{p}^{k+1}= \mathbf{M}^k  \mathbf{p}^k$, where k is the iteration number of solution procedure. With given a weighted adjacency matrix of the interbank network, the first step of this procedure is specifying the initial value of impact index.\\

Following algorithm explains this procedure:

\subsubsection*{ Algorithm 1}
\label{algo1}
\textbf{Step 1}:  begin: $\mathbf{p}^1$ as initial value for impact index \\

\begin{center}
$ \mathbf{p}^1= \mathbf{W}^{BC}r$.\\

\end{center}
 \textbf{Step 2}: k-th iteration: $\mathbf{p}^k=\mathbf{M}^k   \mathbf{p}^{k+1}$,\\
where\\
\begin{center}
$ \mathbf{M}^k=\frac{k}{k+1}  \mathbf{W}^{CC} + \frac{1}{k+1}  \mathbf{W}^{BC}  r$.\\

\end{center}
\textbf{Step 3}: Stop condition:  $|\mathbf{p}^k - \mathbf{p}^{k+1}|< Tolerance$, choosing tolerance is dependent upon the size of the problem. \\

In order to get a robust ranking index, we repeat above algorithm several times and each time the system hits by a shock. In the section, we discuss the underline distribution of Impact Index and show it converge in average to robust ranking index. 

\section{Statistical Inference on Impact Index}

In this section, we study properties of an underlying distribution of the impact index  and we model the probability of an institution considered as a SIFI in a financial network. In simulation section we estimate this probability condition on SIFIs' characteristics.\\

Suppose that impact indexes of institutions $\{1,\cdots,n\}$  $\mathbf{X}_1,\mathbf{X}_2, \cdots, \mathbf{X}_n$ are independent and identically distributed (i.i.d.) sample from a distribution function $\mathcal{F}(x) = \mathbb{P}(\mathbf{X} \leq x)$.
The function $\mathcal{F}(x)$ is an unknown function, and there is no hypothesis about its' parameters' value.\\

We define the empirical (cumulative) distribution, with mass $\frac{1}{n}$ for each data point as follows:
 
\begin{equation*}
\tilde{\mathcal{F}}_n (x)= \frac{1}{n}\sum_{i=1}^n \mathbf{1} \{\mathbf{X}_i \leq x \}, 
\end{equation*}
where the indicator function is defined by

\begin{equation*}
\mathbf{1} \{\mathbf{X}_i \leq x \} =\begin{cases}
    1, & \text{if $\mathbf{X}_i \leq x$},\\
    0, & \text{otherwise}.
  \end{cases}
\end{equation*}

According to the fundamental theorem of statistics (Glivenko-Cantelli Theorem \cite{ tucker1959generalization}) we can determine the asymptotic behavior of the empirical distribution function $\tilde{\mathcal{F}}_n (x)$ which it converges to $\mathcal{F}(x)$ almost surely by the strong law of large numbers as $n \to \infty$:

\begin{equation*}
\sup_{x \in \mathbb{R}} |\tilde{\mathcal{F}}_n (x) - \mathcal{F}(x) | \rightarrow 0 \quad \text{almost surely},
\end{equation*}

To calculate a confidence for distribution function $\mathcal{F}(x)$, find a confidence region $C(x)$ such that 

\begin{eqnarray}
 \mathbb{P}(\mathcal{F}(x) \in C(x)) \geq 1- \alpha', \quad \quad \forall x \in \mathbb{R},  
\end{eqnarray}
we use a result of Dvoretzky-Kiefer-Wolfowitz (DKW ) inequality \cite{massart1990tight}:
\begin{eqnarray}
 \mathbb{P}\{\sup_x|\mathcal{F}(x)- \tilde{\mathcal{F}}(x)| > \epsilon \} \leq 2 \exp(-2n\epsilon^2) = \alpha'.
\end{eqnarray}
Then, a confidence band for function $\mathcal{F}(x)$ for all $n$ is:

\begin{eqnarray}
\mathbb{P}(L(x) \leq F(x) \leq U(x)) \geq 1-\alpha', \quad \quad \forall x \in \mathbb{R} 
\end{eqnarray} 
where  
\begin{eqnarray}
L(x) = \max \{\tilde{\mathcal{F}}_n(x) - \sqrt{\frac{1}{2n}log \frac{2}{\alpha'}},0  \},
\end{eqnarray}
\begin{eqnarray}
U(x) = \min \{\tilde{\mathcal{F}}_n(x) + \sqrt{\frac{1}{2n}log \frac{2}{\alpha'}},1  \}.
\end{eqnarray}
%%%%%%%%%%%%%%%%%%%%%%%%
\subsection{Threshold of Classification}
In order to identify SIFI, we need to sort impact indexes of institutions  and define a threshold for each class of SIFI. This threshold can  be defined by the quantile value of distribution function $\mathcal{F}(x)$.\\

 Let $\rho : \{1,\cdots,n\} \longrightarrow \{1,\cdots,n\}$ be a operator such that $\mathbf{X}_{\rho(i)} \leq \mathbf{X}_{\rho(j)}$ if $i < j$.\\
 
After ordering impact indexes, we define the order statistics ($\mathbf{X}_{\rho(i)}=\mathbf{X}_{(i)}$) as follows:

\begin{equation*}
\mathbf{X}_{(1)} \leq\mathbf{X}_{(2)} \leq \cdots \leq  \mathbf{X}_{(n)}.
\end{equation*}
 
Let $u_p$ be the quantile which divides the CDF $ \mathcal{F}(x)$ into two parts, and it is defined as:

\begin{equation}
\label{quantileEq}
u_p = \mathcal{F}^{-1}(p)= \inf\{x: \mathcal{F}(x) \geq p \}, 
\end{equation}
where the function $\mathcal{F}^{-1}:(0,1) \rightarrow \mathbb{R}$ is a quantile function. \\

This function is left-continuous function and (\cite{van2000asymptotic} Lemma 21.1) stated that for any series of CDFs, $\tilde{\mathcal{F}}^{-1}_n (x)$ converges to $\mathcal{F}^{-1}(x)$ almost surely  if $\tilde{\mathcal{F}}_n (x)$ converges almost surely to $\mathcal{F}(x)$. We use this result and define $\tilde{u}_{pn}$ as an estimator of quantile $u_p$:\\

\begin{equation}
\tilde{u}_{pn} = \tilde{\mathcal{F}}_n^{-1}(p)= \inf\{x: \tilde{\mathcal{F}}_n(x) \geq p \}, 
\end{equation}

We identify  SIFIs who their impact index less or equal to the quantile $u_p$. If at least $r$ number of institutions with impact index less or equal to $u_p$, with the probability of an institution classified as a SIFI $\mathbb{P}(\mathbf{X}_{i}\leq u_p)  = \mathcal{F}(u_p) $ , the probability of $M= \sum_{i=1}^n \mathbf{1} \{\mathbf{X}_i \leq u_p \}$ institutions are ranked as SIFI is:

\begin{eqnarray}
\label{pronsifi}
     \mathbb{P}(M \geq r) &=& \mathbb{P}(\mathbf{X}_{(r)} \leq u_p ) \\      
                 &=& \mathbb{P}(\mathbf{X}_{(r)}\leq  \tilde{u}_{pn} )  \quad \quad \mbox{ for all } n  \\                       &=& \sum_{i=r}^n    \binom {n} {i} \tilde{\mathcal{F}}_n(u_{pn})^i (1-\tilde{\mathcal{F}}_n(\tilde{u}_{pn}))^{n-i}.
\end{eqnarray}

To have a robust estimation of the threshold, we construct a  100(1-$\alpha'$)\% confidence interval for quantile $u_p$ with any $k,l \in \mathbb{N}$ 
  \begin{eqnarray}
  \label{probsifi}
 \mathbb{P}(\mathbf{X}_{(k)} \leq u_p \leq \mathbf{X}_{(l)}) &=& 1- \alpha'.
 \end{eqnarray} 
%Accourding to results of [\cite{van2000asymptotic} chapter 21], the distribution function $\tilde{\mathcal{F}}^{-1}_n (x)$ converges to $\mathcal{F}^{-1}(x)$  almost surely by the strong law of large numbers as $n \to \infty$:

Accourding to results of [\cite{van2000asymptotic} chapter 21], and by using the strong law of large numbers as $n \to \infty$:

\begin{eqnarray*}
\lim_{n \to \infty} \tilde{u}_{pn}&=& \tilde{\mathcal{F}}^{-1}_n (p)\\
 &=&  \mathcal{F}^{-1}(p)  \quad \text{almost surely}\\
 &=& u_p.
\end{eqnarray*}

By applying the probability integral transformation we define random variables ${\mathcal{U}_i}_{\{1,\cdots ,n\}} = {\tilde{\mathcal{F}}_n(X_i)}_{\{1,\cdots ,n\}}$ as samples of uniform distribution, we have:
 
 \begin{eqnarray*}
 \mathbb{P}(\mathbf{X}_{(k)} \leq u_p \leq \mathbf{X}_{(l)}) &\approx& \mathbb{P}(\mathbf{X}_{(k)} \leq \tilde{\mathcal{F}}_n^{-1}(p) \leq \mathbf{X}_{(l)})\\
                  &=& \mathbb{P}(\tilde{\mathcal{F}}_n(\mathbf{X}_{(k)}) \leq p \leq \tilde{\mathcal{F}}_n(\mathbf{X}_{(l)}))\\
                 &=& \mathbb{P}(\mathcal{U}_{(\mathbf{X}_{(k)})} \leq p \leq \mathcal{U}_{(\mathbf{X}_{(l)}))}
\end{eqnarray*}   

To construct two sided confidence band we choose $k < l$ and divide mass $\alpha'/2$. We determine the upper confidence limit of quantile $u_p$ as follows:
 \begin{eqnarray*}
l =  \inf \{ i \in \{ 1,\cdots ,n-1\}:  \mathbb{P}(u_p \leq \mathbf{X}_{(i)}) \leq \alpha'/2 \}.
\end{eqnarray*}  
Similarly, the lower confidence limit of quantile $u_p$ is:
 \begin{eqnarray*}
k =  \sup \{ i \in \{ 1,\cdots ,n-1\}:  \mathbb{P}(u_p \leq \mathbf{X}_{(i)}) \geq 1 - \alpha'/2 \}.
\end{eqnarray*}  
\subsection{Asymptotic Distribution}
We can use the result of equation: \eqref{pronsifi} to obtain asymptotic distribution of impact index for larger size networks.\\

Let $\tilde{u}_{pn}$ be the estimator of quantile $u_p$ and for the fixed value $k$, $r=n-k$ is a random variable. The asymptotic distribution of ordered impact index variables is defined:

 \begin{equation*}
\label{pronsysmpotsifi}
\mathbb{P}(\mathbf{X}_{(r)} \leq \tilde{u}_{pn} )=\mathbb{P}(\mathbf{X}_{(n-k)} \leq \tilde{u}_{pn} )=  \sum_{i=n-k}^n    \binom {n} {i} \mathcal{F}(\tilde{u}_{pn})^i (1-\mathcal{F}(\tilde{u}_{pn}))^{n-i}.
\end{equation*}

With using a result of [\cite{van2000asymptotic} chapter 21] we obtain Poisson approximation of the binomail distribution as follows: 
  \begin{equation}
\lim_{n \to \infty} \mathbb{P}(\mathbf{X}_{(n-k)} \leq \tilde{u}_{pn} )=  \sum_{j=o}^k  e^{-\lambda}\frac{\lambda^j}{j!},
\end{equation}

where  $$\lambda= \lim_{n \to \infty} n (1- \mathcal{F}(\tilde{u}_{pn}))$$.\\

The Poisson approximation can explain better since impact index variable is long tailed distributed and the variance is large.

\subsection{Classification Probability}

We can define a probability distribution over a set of classes of SIFIs and Non-SIFIs to obtain the class that a particular institution falls into. These probabilistic classifiers provide the information on the classification of institutions to predict the class of  each institution with a degree of certainty.\\

Let $(Y_{it})_{I \in \{1,\cdots,N \}, t \in {1,\cdots,T}}$ be ordered categorical variables expressed systemic rank of institutions $i$ at time $t$. The cumulative probability of these categorical variables with given set explanatory variables $x$ is expressed as:

\begin{eqnarray*}
\mathbb{P}(Y_{it} \leq j | x_{it}) = \mathcal{G} (\beta x_{it} ) , \quad j=1,\cdots ,J,
\end{eqnarray*}

where $J$ is the number of groups and function $\mathcal{G}$ is a cumulative distribution function. It is assumed the coefficient vector $\beta$ is not vary between the categories.\\

The explanatory variable $ x_{it}$ captures the information on features of institutions related size, complexity, and interconnectedness. In the section \eqref{SIFIsection}, we discussed briefly different factors to identify SIFIs.  All of the factors can be the significant effect on the probability of an institution becomes a SIFI. \\

The variable $\alpha_i$ contains time invariant information on individual characteristics of institutions. The probability of the institution $i$ at time point $t$ is assigned to systemic rink category $j$ is conditional on the explanatory variables $ x_{it}$ and the individual effect variable $\alpha_i$:

\begin{eqnarray*}
\label{probabiltylogit}
\mathbb{P}(Y_{it} =1 | x_{it} , \alpha_{i1} ) &=&  \mathcal{G} (\beta_1 x_{it} +\alpha_{i1} ) ,\\
\mathbb{P}(Y_{it} =  j | x_{it} , \alpha_{ij} ) &=& \mathcal{G} (\beta_j x_{it} +\alpha_{ij}  ) - \mathcal{G} (\beta_1 x_{it} +\alpha_{i1}  ), \quad  \quad j=2,\cdots ,J-1,\\
\mathbb{P}(Y_{it} =  J| x_{it} , \alpha_{iJ} ) &=&  \mathcal{G} (\beta_J x_{it} +\alpha_{iJ}  ).
\end{eqnarray*}

Note that the cumulative distribution function $\mathcal{G}$ can be expressed as a standard normal or logistic distribution function, and coefficient vector $\beta$ and $\alpha_i$ can be estimated by panel ordered probit or logit models.\\

\section{Example of Financial Network}
\label{SectExpFinNet}

In this section, we demonstrate how to identify the SIFIs by applying algorithm \ref{algo1} in a real interbank payment network. Since financial institutions' failures are atypical, and the corresponding data  are mostly confidential, we use a small sample dataset from \cite{Sorama2012} (FNA simulator platform: www.fna.fi).\\

The interbank payment system is the backbone for all financial transactions to facilitate the monetary  exchange between institutions. This system provides a set of procedures, laws, standards and requirements that can be unique for each financial system. An electronic payment system includes the flow of fund and the flow of information about counterparties and the value of transactions (for more information see: \cite{listfield1994modernizing}). \\

\begin{figure}[htbp]
\centering
\caption{\textbf{Example of Interbank Network of Payment Transaction}}
\includegraphics[scale=.45]{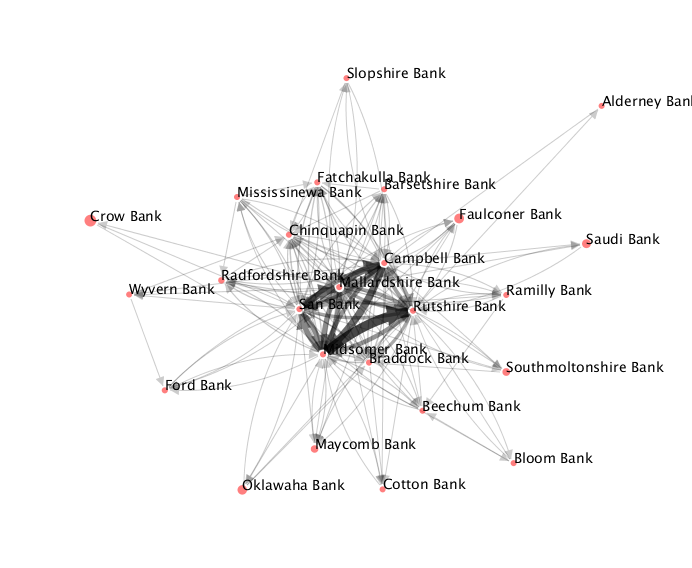} 
\captionsetup{width=.8\textwidth}
\captionsetup{font={small,it}}
\caption*{This figure shows an interbank payment network of payment transaction. The nodes are representatives of banks and links are representative of payment transactions. We created this plot with FNA simulator platform (www.fna.fi) }
\label{networkexample}
\end{figure}

We compare our result with SinkRank method introduced by \cite{Sorama2012} . Their method is based on  absorbing Markov chains with fixed  transition probability to model dynamics of liquidity cascade in a payment system. \cite{Sorama2012} used a payment system to identify banks which are the most affected by the failure. Table \ref{tabelresultExample} demonstrates results obtained from our algorithm (algorithm \ref{algo1}) and SinkRank method. Our algorithm uses slightly different assumptions, however, both method consistently can identify SIFIs in a network.\\

Results show that systemic risk index has a positive correlation between a number of in- and out-degrees of the network but not with banks size.  In addition, our algorithm can identify institutions which are largely connected with strong transmission channels to SIFIs.  In the past, the size of financial institution's balance sheet, like total capital or total financial transaction, has been used to rank the institutions in the system, however, the interconnectedness of the financial system can be a major reason for failing the whole of the system.\\

\begin{table}[htbp]
\caption{\textbf{Comparison of the Results of our Impact Index and SinkRank (\cite{Sorama2012})}}
\label{tabelresultExample}
\captionsetup{font={small,it}}
\centering
\scalebox{0.7}{
\begin{tabular}{clccccccccc}
\hline\hline
No&Name Bank&$Degree^+$&$Degree^-$&Impact Index&Inv-SinkRank&Total Received&Total Sent&Equity\\\hline
1&\textbf{San} &286&318&4.472092691&0.131266875&5.35E+09&6.13E+09&6.42E+09\\
2&\textbf{Mallardshire}&546&645&14.02179686&0.345049668&1.13E+10&1.37E+10&1.35E+10\\
3&Campbell&565&528&2.555336914&0.255536592&1.23E+10&1.11E+10&1.48E+10\\
4&\textbf{Midsomer}&413&331&4.327629834&0.154341717&8.41E+09&6.70E+09&1.01E+10\\
5&\textbf{Rutshire}&305&326&6.031221629&0.144775534&6.48E+09&7.11E+09&7.78E+09\\
6&\textbf{Braddock}&285&259&7.384959536&0.11383881&5.97E+09&5.27E+09&7.16E+09\\
7&Chinquapin&76&60&2.021948895&0.016499552&1.10E+09&8.24E+08&1.32E+09\\
8&Fatchakulla&25&21&0.896847037&0.004000575&2.59E+08&2.05E+08&3.11E+08\\
9&Radfordshire&20&28&0.275514539&0.006118699&2.30E+08&3.04E+08&2.76E+08\\
10&Barsetshire&9&13&0.141456433&0.002092251&7.31E+07&1.01E+08&8.77E+07\\
11&Beechum&10&8&0.128461665&0.001245738&7.41E+07&6.10E+07&8.89E+07\\
12&Ramilly&12&12&0.534263772&0.002795518&1.32E+08&1.37E+08&1.59E+08\\
13&Mississinewa&3&7&1.203555812&9.40E-04&2.33E+07&5.20E+07&2.80E+07\\
14&Faulconer&9&6&0.246758203&9.49E-04&6.45E+07&5.13E+07&7.74E+07\\
15&Wyvern&7&3&0.429629287&4.85E-04&7.78E+07&2.59E+07&9.33E+07\\
16&Southmoltonshire&11&6&0.157081723&9.17E-04&8.96E+07&4.16E+07&1.08E+08\\
17&Slopshire&1&1&0.209457126&5.78E-05&2743850.7&3611388.36&3.29E+06\\
18&Oklawaha&4&5&2.522957064&5.71E-04&2.03E+07&2.78E+07&2.43E+07\\
19&Cotton&3&6&0.829804475&9.38E-04&2.35E+07&4.85E+07&2.82E+07\\
20&Maycomb&2&1&0.036815624&8.76E-05&8173403.33&4175319.83&9.81E+06\\
21&Alderney&2&4&0.264143582&4.77E-04&1.06E+07&2.32E+07&1.27E+07\\
22&Ford&2&5&0.098661454&7.08E-04&1.37E+07&2.90E+07&1.65E+07\\
23&Crow&2&2&0.063812832&1.10E-04&8928062.78&6856694.69&1.07E+07\\
24&Saudi&1&1&0.178489716&6.04E-05&2876870.83&3505134.15&3.45E+06\\
25&Bloom&1&4&0.220222296&4.68E-04&6486514.2&2.23E+07&7.78E+06\\
\hline \hline
\end{tabular}%
}
\captionsetup{width=.8\textwidth}
\captionsetup{font={small,it}}
\caption*{ This table shows the information on total payment value received and sent, capital value and out-degree, in-degree  and ranking information of each bank. Results show that the distribution of Impact Index and SinkRank are relatively positive skew with 1.62 as the degree of asymmetry of mean and correlated with out-degrees and in-degrees. The banks, which are identified as systemic important banks, are highlighted. }
\end{table}%

\begin{figure}[htbp]
\caption{\textbf{Results of Applying Algorithm \ref{algo1} on Example Network Dataset}}
    \centering
    \begin{subfigure}[b]{0.45\textwidth}
                \includegraphics[width=\textwidth]{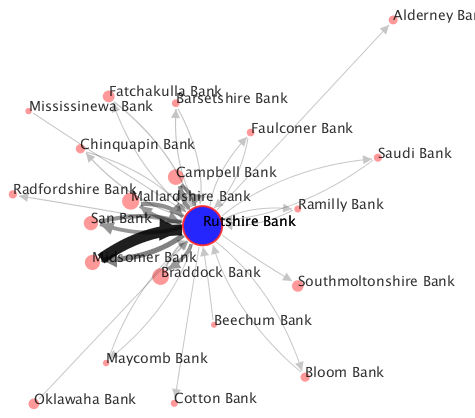}
        \label{fig:gull}
    \end{subfigure}%
    \begin{subfigure}[b]{0.5\textwidth} \includegraphics[width=\textwidth]{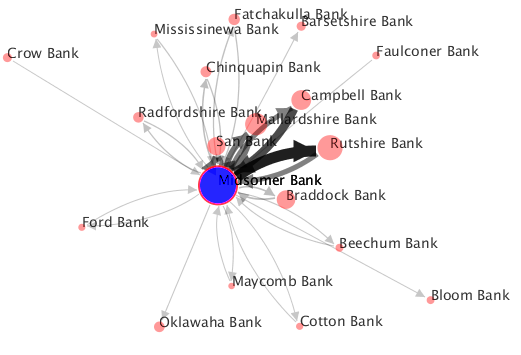}
        \label{fig:tiger}
    \end{subfigure}
    \captionsetup{font={small,it}}
\captionsetup{width=.8\textwidth}
      \caption*{This figures show results of applying algorithm \ref{algo1} on example network dataset indicate two systemically important banks; these banks are systemically important not just because of their balance sheet site but even they have interactions to most of nodes in the system. }\label{fig:resultexample}
 \end{figure}

\begin{figure}[htbp]
 \caption{\textbf{Results of Applying Algorithm \ref{algo1} and SinkRank Method (\cite{Sorama2012})on Example Network Dataset.}}
    \centering
    \begin{subfigure}[b]{0.5\textwidth}
                \includegraphics[width=\textwidth]{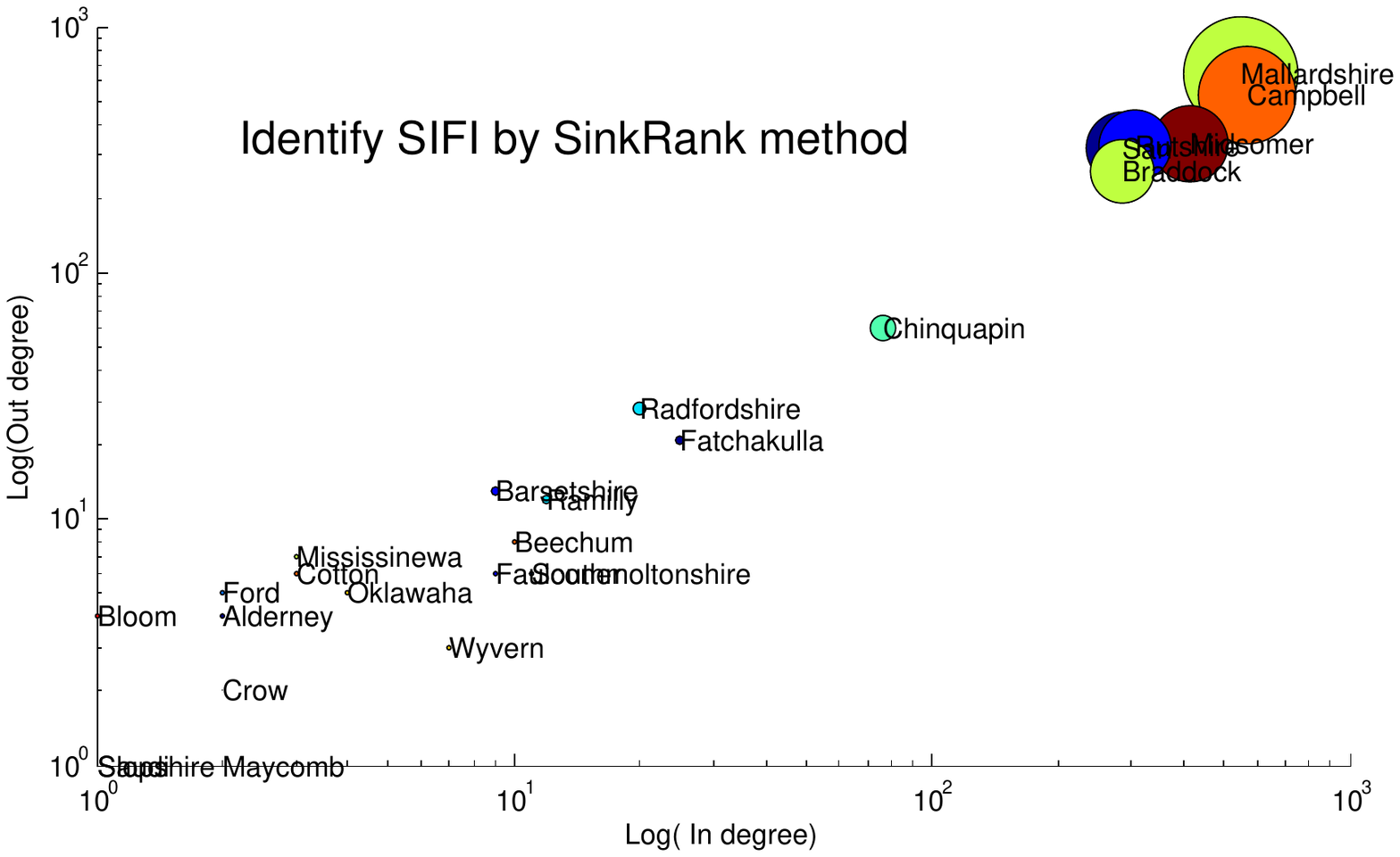}
                \caption{Identify SIFI by SinkRank method}
         %       \caption{Distribution of degree of selected nodes during shocking period.}
        \label{fig:gull}
    \end{subfigure}%
    \begin{subfigure}[b]{0.5\textwidth} \includegraphics[width=\textwidth]{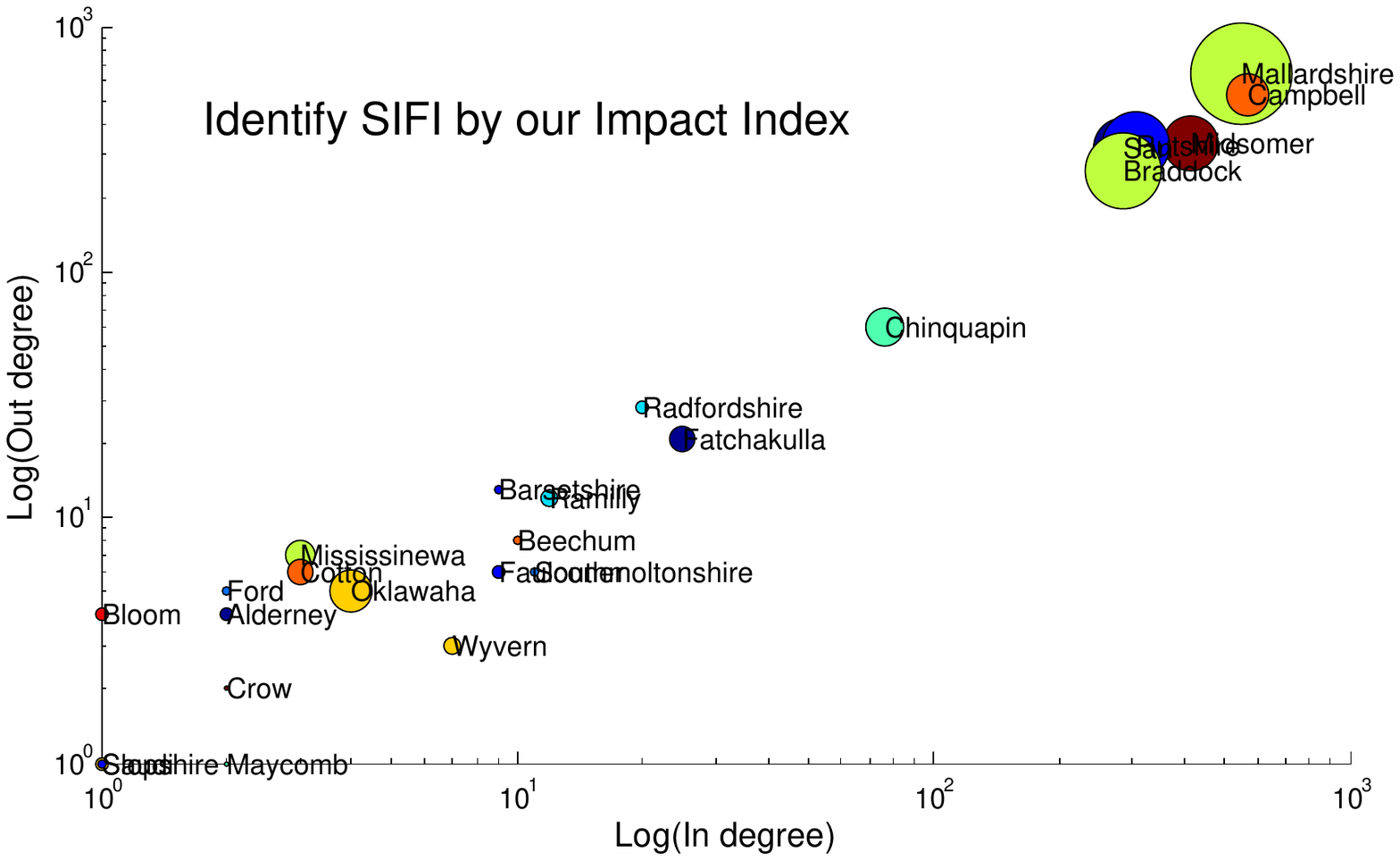}
        \label{fig:tiger}
    \caption{Identify SIFI by our Impact Index}
    \end{subfigure}
\captionsetup{width=.8\textwidth}
\captionsetup{font={small,it}}
      \caption*{This figures shows results of applying algorithm \ref{algo1} and SinkRank method on example network dataset. Axises are a total number of payments sent and received by banks and size of bubbles is represented of the magnitute of the systemic importance of banks calculated by both algorithms. }\label{fig:resultexample}
 \end{figure} 

%\begin{figure}[!h]
%\label{networkexample}
%\centering
%\includegraphics[scale=.5]{} 
%\caption{Identify SIFI by SinkRank method}
%\end{figure}
%
%
%
%\begin{figure}[!h]
%\label{networkexample}
%\centering
%\includegraphics[scale=.5]{} 
%\caption{Identify SIFI by our Impact Index}
%\end{figure}

\section{Numerical Experiments}
\label{SecnumbericExp}
In this section, we explain the numerical simulation of interbank network and compare our approximation solution with an optimum solution on simulated networks. Furthermore, we show the factors that lead to an insinuation become systemic important. These factors are the spotlight of regulations of SIFIs to reduce the probability of SIFIs' failures and enhance the financial stability. \\

As a summary: at time $t=0$, the network will be formed, and at time $t=1$ nodes will receive individual or aggregate deposit shocks. Then a default cascade will start and our algorithm try to identify SIFIs. The cascade continues until reach equilibrium and the algorithm might converge and find SIFIs.\\   

We create an interbank network as a direct random network. The structure of the desired network is a realization of the empirical studies which are explained in previous sections. We weight each link (financial relation) of the network with exposure magnitudes and each node (bank) with information on the balance sheet.  The total assets of a bank is the sum of its loans to other banks, as interbank assets, and the sum of its external assets. The liability of a bank includes the amount of loans taken from other banks, as interbank liabilities, and also its customer deposits. \cite{Gai2010} showed that the distributions of claims within the system are uniformly distributed as a result of intensity of banks to maximize their diversity lending strategy by distributing loans to all their debtors equally.\\ 

The simulation is performed in a way that ensures that the generated network represents a real-world banking system.  As mentioned earlier , empirical studies found that an interbank network has a fat-tailed degree. Exposures and in-degrees and out-degrees of the network follow Pareto distribution and power law distribution, respectably. This structure is stable across time. Because of the heterogeneous characteristics of the interbank liability (asset) structure, different interbanking systems are simulated. With given exposure, we estimate interbank assets and liabilities of banks. Equity is the proportion of one to two percent of risk-weighted assets. This proportion is randomly assigned to each bank. We balance each bank account with its deposit.
Figure \ref{figsimuDistri} presents distributions of simulated financial variables and  network characteristics. Financial variables capital, deposit and exposure have skewed (non-symmetric) distributions. Network properties size, out-degree, in-degree, betweenness, and closeness have non-symmetric distributions that are right skewed.

%%%
%\begin{figure}[htbp]
%    
%\caption{\textbf{Simulated Data \label{figsimuDistri}}}
    \begin{figure}[htbp]
   \centering 
 \caption{\textbf{Histogram of Simulated Balance Sheet Data} \label{figsimuDistri}} 
  
\includegraphics[scale=.06]{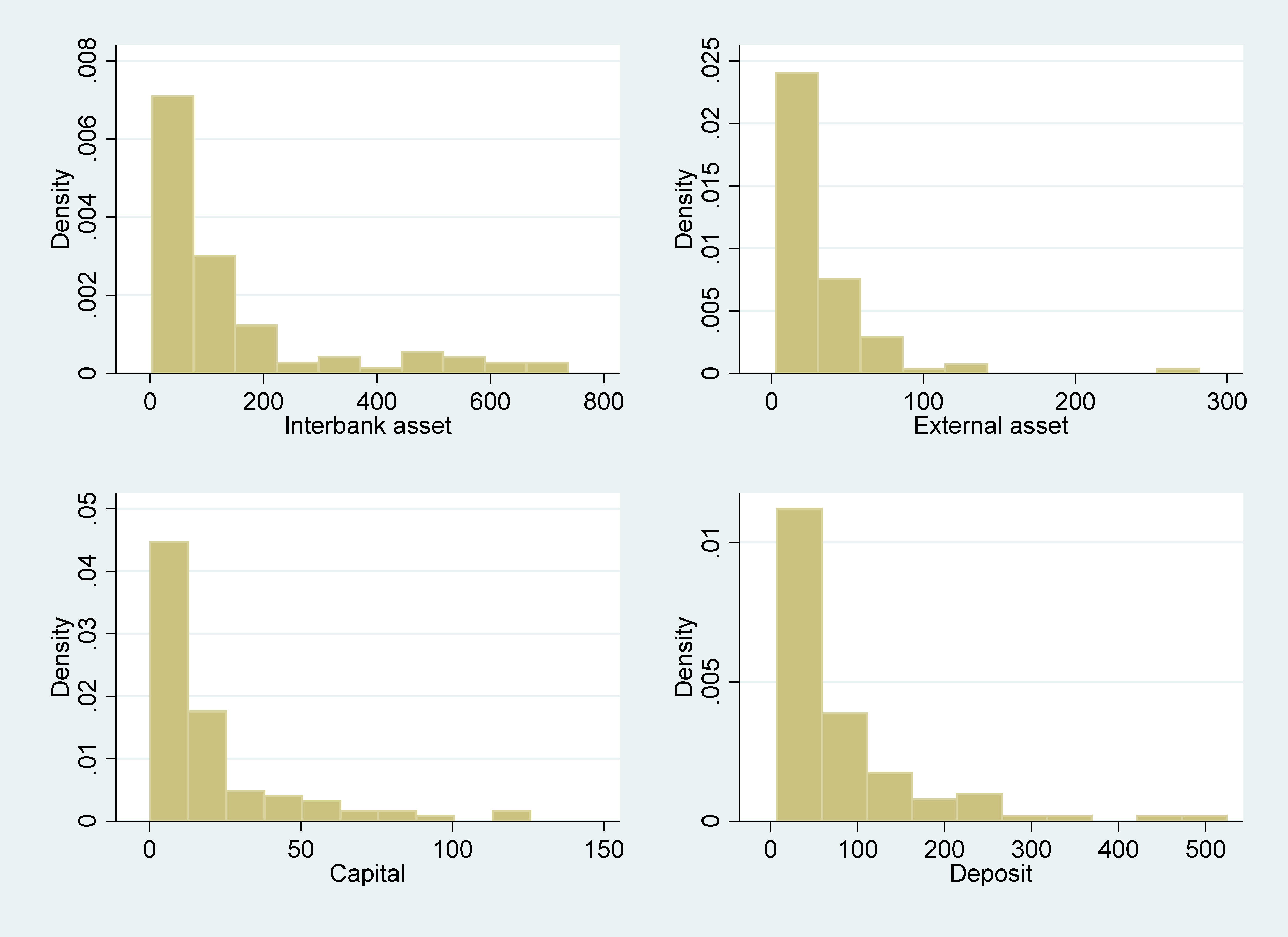} 
\centering
\captionsetup{font={small,it}}
\captionsetup{width=.6\textwidth}
\caption*{
 This figure plots the histogram simulated interbank assets, external assets, capital and deposit of institutions in the network. The variables are skewed distributed and they are correlated.  \label{FigsimulateConfiden}}
  \end{figure}%
  
    \begin{figure}[htbp]
    \centering 
    \caption{ \textbf{Histogram of Centrality Indices of Simulated Network}\label{FigsimulateConfiden}}
\includegraphics[scale=.06]{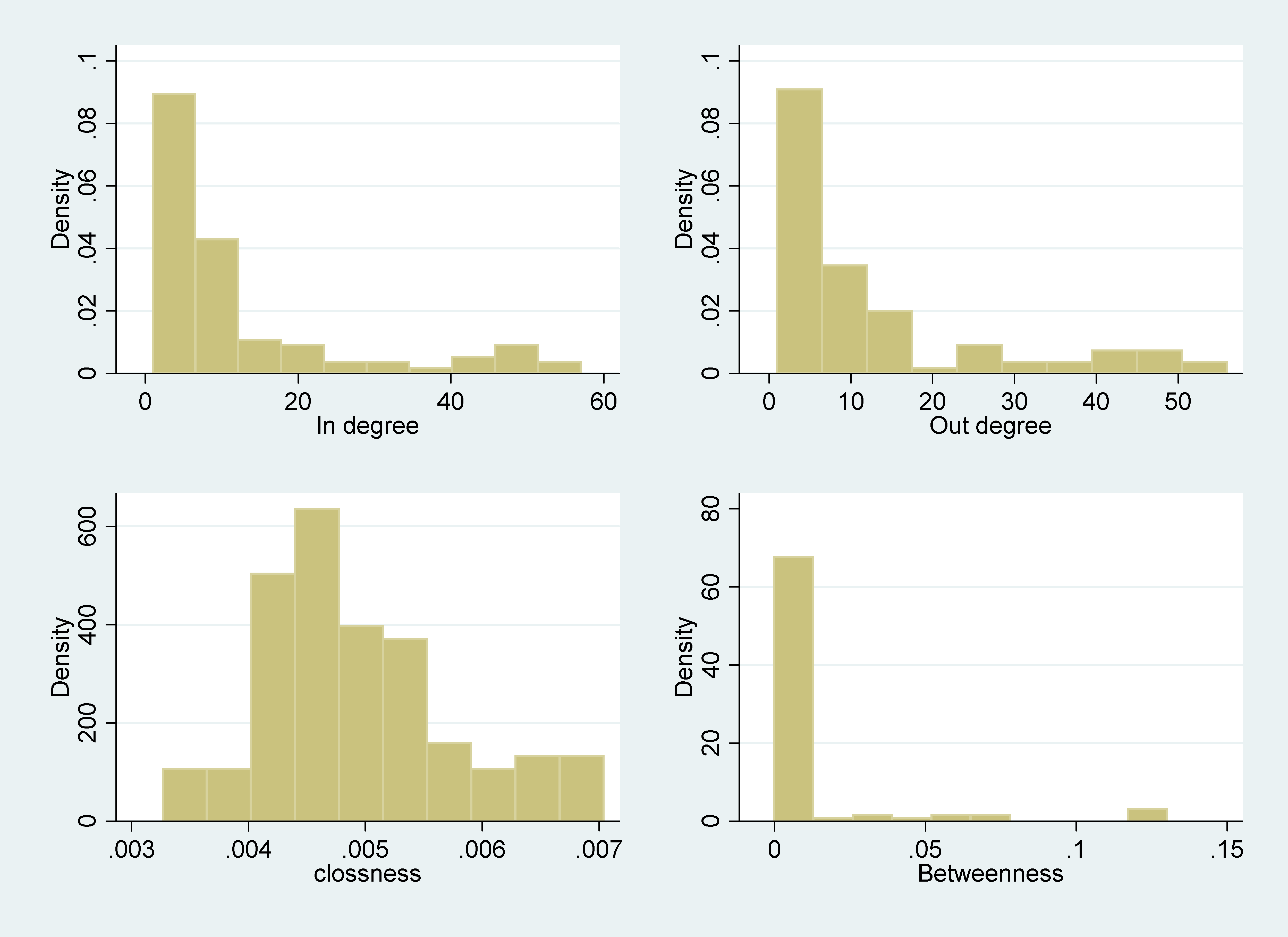} 
\centering
%\captionsetup{width=\textwidth}
\captionsetup{font={small,it}}
 \captionsetup{width=.6\textwidth}
\caption*{  
The histograms of out-degree, in-degree, betweenness, and closeness are given. One can see that out-degree and in-degree are heavy-tailed distributed and they are positively correlated. Betweenness centrality and closeness centrality have low variation. }
    \end{figure}

%%%%%
From theoretical findings of \cite{allen2000financial}, if there exists a directed financial link from each bank to all others, banks spread their claims and diversify their contagion risk equally. We simulate a complete network to have an uniform perspective on a financial network and its asset-liability structures, (see Figure \ref{figcomplete}).\\

\begin{figure}[htbp]
\centering
\caption{ \textbf{Visualization of Complete Network}:\label{figcomplete} }
\includegraphics[scale=.35]{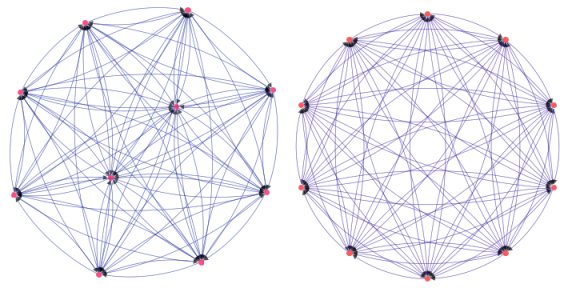} 
\center
\captionsetup{font={small,it}}
   \captionsetup{width=.6\textwidth}
\caption*{ Note: circle format (right side) and Cartesian format (left side). In a complete network there is a direct (indirect) connection between all pairs of nodes.}
\end{figure}

 \citep{allen2000financial} also indicated that less complete system increased the possibility of contagion risk due to highly interconnected links and lacks of potential for risk diversification. Past empirical studies showed that the full interconnected simulated system is an unrealistic structure, and we need to omit some directed edges within the simulated network.  Scale-free networks is one type of network with incomplete connectivity. \cite{Baraba2002} proposed the Barabasi-Albert (BA) algorithm for generating random scale-free networks. This algorithm starts with a set of small fully connected nodes and adds one node at a time with exactly k edges. Preferential attachment probability (proportion to the number of node's edges) is used to attach a new node to an existing node. This process continues until all nodes became connected. Under this procedure, it is possible for multiple edges to have a pair of nodes, and create loops in the network. In a typical financial system, institutions are serially connected in  a way that a preceding institution is connected to the foremost one. Figure \ref{figBAnetwork} shows a BA network in the circle and cartesian formats.\\

  \begin{figure}[htbp]
    \centering
 \caption{\textbf{Visualization of BA Network}} 
\includegraphics[scale=.35]{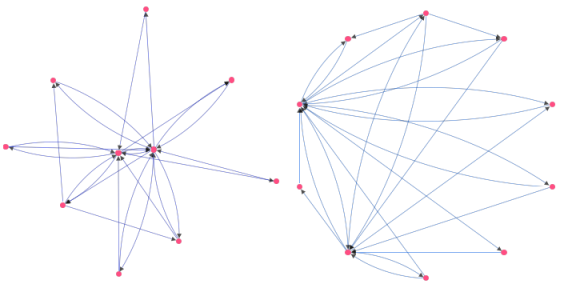} 
\center
\captionsetup{font={small,it}}
\captionsetup{width=.6\textwidth}
 \caption*{ Note : circle format (right side) and cartesian format (left side). This network is a scale-free network and the degree distribution of degrees follow a power law. \label{figBAnetwork}}
\end{figure}

%%%%%%%%%%%%%%%%%%%%%%%%
\subsection{Numerical results}
\subsubsection{A comparison between approximation and optimal solutions}

In this section, we compare our approximation solution with an optimal solution of the problem \eqref{mainprob1}. The numerical results presented here were obtained by using Matlab's built on a PC running under processor (Intel Core 2.30 GHz). We firstly generate several random networks based on the \cite{Baraba2002} (BA) algorithm and also complete networks. Then, we simulate financial data for each bank with previously described distributions.\\

The optimal solution is obtained from built-in function (lsqlin) which solves constrained linear least square problem with interior point method. We create linear constraints of optimization problem according to financial relations in the network.\\ 
%by allowing to have equal upper and lower bounds
In order to compare the algorithm \ref{algo1} with the optimal algorithm (interior point method), the relative error is defined as follows:
\begin{equation}
Rel_{Err}= \frac{(Err_{Alg1}-Err_{interior})}{Err_{interior}},
\end{equation}
where $ Err_{Alg1}$ is the minimum objective value of algorithm \ref{algo1} and $ Err_{interior}$ is an objective value of an optimal solution.\\

Tables \ref{tableweek1} and \ref{tableweek2} present the numerical result obtained from the algorithm \ref{algo1} with  BA and complete network topologies while the number of nodes and edges are increased exponentially.\\

\begin{table}[htbp]
    \caption{\textbf{Comparison of Approximation and Optimal Methods}}
    \captionsetup{font=scriptsize}
    \begin{subtable}[v]{1\textwidth}
        \centering
        \scalebox{0.8}{
\begin{tabular}{c c c l c c l c c c l}
 \hline \hline
{Size} & &50&&&100&  && 200& \\ \hline
{Interco} && 2420&& &4821&  &&10475
\\
{Max in} &&129&&  &222& && 403 
\\
{Max out} &&135&&&  214 & & &394 
\\\hline
{Tol} &$10^{-3}$ &$10^{-4}$ &$10^{-5}$ & $10^{-3}$ & $10^{-4}$ & $10^{-5}$ & $10^{-3}$ & $10^{-4}$& $10^{-5}$
\\\hline
{Time} &$10^{-3}$ &$10^{-2}$ &0.015 & 0.0039 & 0.0075 & 0.026 & 0.019 & 0.055& 0.182
\\
{Ite} &18&58& 182 & 24 & 77 & 244 & 44 & 151 & 477
\\
{$Rel_{Err}$}& 0.034 & 0.043 & 0.050 &0.054 & 0.053 & 0.050 & 0.123 & 0.122 & 0.11
\\
\hline
\end{tabular}%
}
        \caption{Numerical results of BA network  \label{tableweek1}}
        
    \end{subtable}
    
    \hfill
     \begin{subtable}[v]{1\textwidth}
     
        \centering
        \scalebox{0.8}{
   \begin{tabular}{c c c l r c l r c c}
 \hline\hline
{Size} &&50&&&100& &&300 & \\ \hline
{Interco} &&9718&&&39474&&&359039& 
\\
{Max in} &&234&&& 443&&&1311& 
\\
{Max out} &&223&&& 453&&&1290&
\\\hline
{Tol} &$10^{-3}$&$10^{-4}$& $10^{-5}$&$10^{-3}$&$10^{-4}$& $10^{-5}$& $10^{-3}$& $10^{-4}$& $10^{-5}$
\\\hline
{Time} & 0.001&0.0013&0.0038& 0.002&0.005&0.018&0.028&0.095&0.304
\\
{Ite} &16&52&164&16&51&162&16&51&162
\\
{Rel-Err} &0.030 & 0.029 & 0.05 &0.050 & 0.049 & 0.045 & 0.062&0.060 &0.062
\\
\hline \hline
\end{tabular}%
}
        \caption{Numerical results of complete network  \label{tableweek2}}
    \end{subtable}
\captionsetup{width=.65\textwidth}
\captionsetup{font={small,it}}
    \caption*{Note: CPU time and relative error of algorithms \ref{algo1} toward the above-mentioned optimal solution are reported for both simulated networks of BA network and complete network. Size is number of nodes in the network, "Interco" is the total number of interconnections and "Max in", "Max out" are the maximum number of in-degree and out-degree,respectively. "Ite" is the number of iterations of the solution procedure to reach a tolerance value.  \label{tabcomparision}}
\end{table}

Computational results in Table \ref{tabcomparision} indicate that the algorithm \ref{algo1}  requires less CPU time for producing the much more accurate approximation solution to the optimal solution obtained by Matlab's built solver function. Moreover, CPU time does not increase exponentially when the size of the network grows. The results also show that the complexity of our algorithm does not depend on the number of interconnections, but on the number of nodes only. Thus, this algorithm can find an approximation solution to a large problem relatively close to the optimal solution with a very small error.\\

\subsection{Results of numerical experiments}

In this section, we present the results of simulations of the model and identify key players in the system and determine factors that cause the instability of the network. In the simulation, we generate a BA scale-free (directed) network with empirical properties of study Brazilian interbank networks by \cite{cont2012}. In each simulation, one to ten percent of networks' nodes received deposit shocks. These shocks could spread throughout the entire or a part of the system. These shocks also can be exogenous factors which have a strong influence on the entire system. Figure \ref{FigsimulateConfiden} shows the average of impact indexes of 100 simulations and its 95\% confidence interval. To have a robust estimation of impact index, we varied the percentages of nodes that received shocks in each simulation.\\

\begin{figure}[htbp]
\centering
\caption{ \textbf{Average of Impact Index and Corresponding 95\% Confidence Interval.} \label{FigsimulateConfiden}}
\includegraphics[scale=.65]{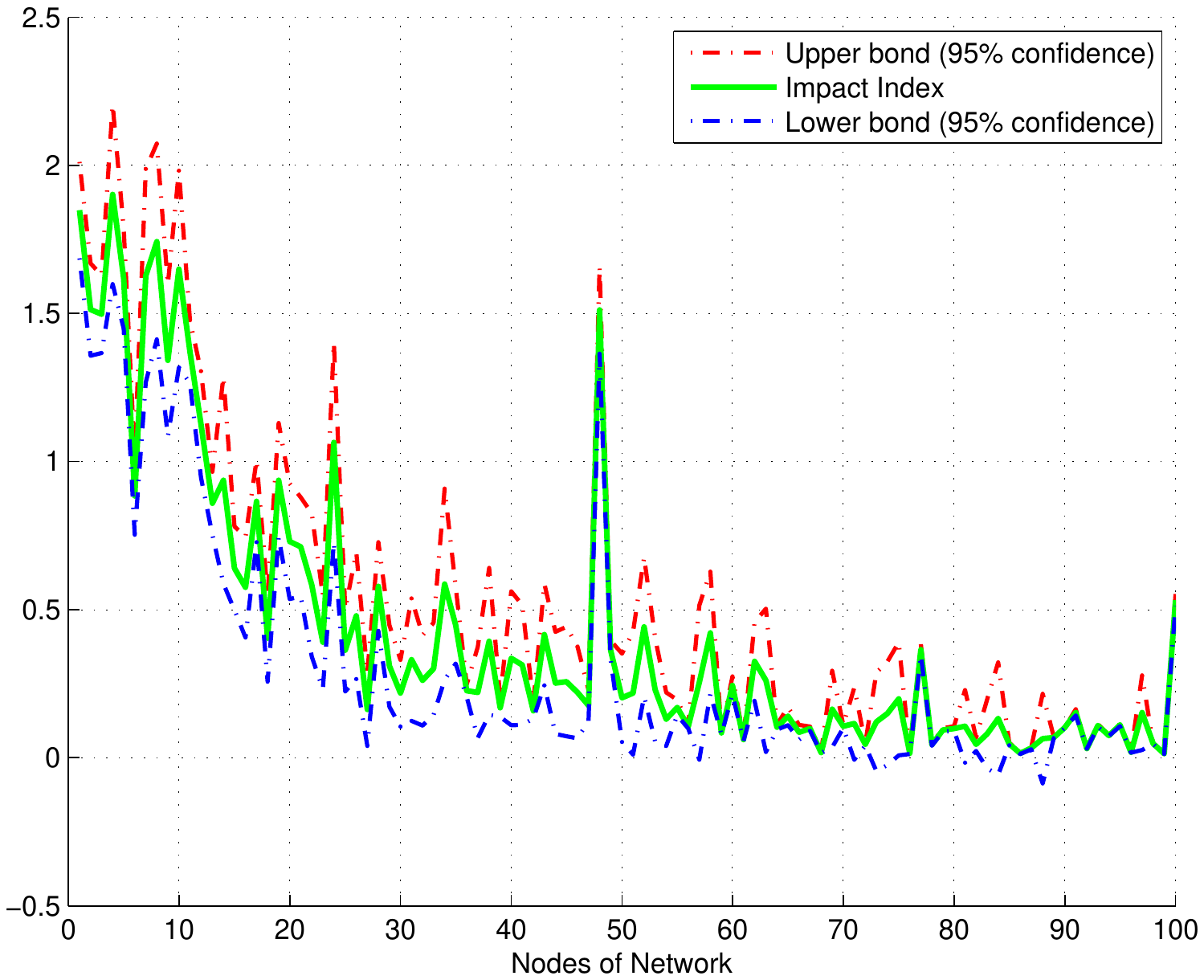} 
\centering
\captionsetup{font={small,it}}
   \captionsetup{width=.7\textwidth}
\caption*{The figure presents the average of estimation of impact index from 100 simulations (solid line) and 95\% confidence interval (dash line). In each simulation, a different number of nodes are hit by deposit shocks.}
\end{figure}

We hypothesize that impact indexes can be explained by means of the network which indicate the role and position of institutions in the financial system. To test this hypothesis, we regress the impact index on the balance sheet variables and centrality diameters of the network. Table \ref{tabtobitsim} presents the results of the impact of network properties (size, out-degree and in-degree) on measuring systemic risk. We use a Tobit model \footnote{We use Tobit model since the impact index is above zero and we have checked hypothesis of homoscedasticity and normality of residuals by  the Conditional Moment (CM) test (\cite{newey1985generalized}; \cite{tauchen1985diagnostic}).} to estimate the impact of capital, deposit, assets, and exposure on the impact index. Results from Tobit model and its marginal effects indicate an overall positive effects of changing capital and deposit on measuring systemic impact. The estimated coefficients capital, deposit and external assets are all positive and significant at 90\%, 95\% and 99\% levels, respectively.\\

In the next step, we add each network diameter to the base model (model 1). Based on Akaike Information Criterion (AIC) and Bayesian Information Criterion (BIC), we believe that network properties like degrees closeness and betweenness can explain the most variation of impact index and make other factors insignificant. From results pf the average marginal effect of models, we conclude that changing closeness has the highest average impact on impact index. Nevertheless, these network diameters can present the scope of influences of an institution on its counterparties, and at stress time, how much this institution can transmit the risk to the system. \\

\begin{table}[htbp]
\centering
\def\sym#1{\ifmmode^{#1}\else\(^{#1}\)\fi}
\caption{\textbf{Discriptive Statistics of Simulated Network} \label{DisStatiSimula} }

 \begin{subtable}[v]{1\textwidth}
        \centering
\scalebox{0.75}{
\begin{tabular} {@{} l l c c c c c @{}} \\ \hline
 & \textbf{Variables}  &\textbf{      Mean} & \textbf{ Std. Dev.} & \textbf{Min} & \textbf{Median} & \textbf{Max} \\
\hline
      Panel I: non Systemic  & internal asset&  100.1608 &    116.034 &   4.131724 &    64.2117 &   684.3691 \\
          &  external asset &  27.68229 &    26.9476 &   2.889826 &   17.94683 &   141.8025 \\
          &  capital  & 14.41384 &   17.29865 &   .3235944 &   8.162905 &   99.56005 \\
          & deposit  &  63.34887 &   56.16821 &   7.700143 &   47.20529 &   351.2374 \\
          &  in\_degree& 8.844444 &   9.459448 &          1 &        5.5 &         50 \\
          &  out\_degree&  8.766667 &   8.991944 &          1 &          5 &         42 \\
          & closeness &  .0047617 &   .0006452 &    .003268 &   .0046948 &   .0064103 \\
          &   betweenness & .0026189 &   .0102355 &          0 &          0 &   .0664723 \\
                    &    &  &    &          &           &    \\
            Panel II:   systemic   &  internal asset& 496.209 &   181.0448 &   74.98804 &   532.3312 &   737.1625 \\
          & external asset &  60.44734 &   80.09678 &   14.64383 &     30.266 &   281.8516 \\
          &capital  &   63.03347 &   38.32387 &   13.44478 &   65.30916 &   125.8125 \\
          & deposit &  245.5183 &   142.2358 &   38.69307 &   221.2899 &   524.7476 \\
          &   out\_degree&     42.4 &   13.89804 &          8 &         46 &         57 \\
          &   in\_degree&    43.1 &   13.11869 &          9 &         47 &         56 \\
          &  closeness &  .006456 &   .0005866 &   .0049261 &   .0066228 &   .0070423 \\
          & betweenness &  .0719681 &   .0506624 &    .000404 &   .0666043 &   .1301973 \\
             &   &  &   &    &    &    \\
  Panel III: total  &   internal asset& 139.7656 &   171.2983 &   4.131724 &   72.89974 &   737.1625 \\
          &  external asset &   30.9588 &    36.5191 &   2.889826 &   21.32403 &   281.8516 \\
          &  capital  &  19.2758 &   24.84822 &   .3235944 &   11.28302 &   125.8125 \\
          &  deposit& 81.56582 &   87.70555 &   7.700143 &   51.16003 &   524.7476 \\
          &    in\_degree&    12.2 &   14.15499 &          1 &        6.5 &         57 \\
          &   out\_degree&     12.2 &   13.98195 &          1 &        6.5 &         56 \\
          & closeness &   .0049312 &   .0008164 &    .003268 &   .0047619 &   .0070423 \\
          & betweenness &   .0095538 &   .0276537 &          0 &          0 &   .1301973 \\
\hline

\end{tabular}
}
\end{subtable}
\begin{subtable}[v]{1\textwidth}
\captionsetup{justification=raggedleft,singlelinecheck=false}
        \centering
\caption*{Tests of differences of mean of systemic and non systemic (simulated network)}
\centering
 \scalebox{0.8}{
\begin{tabular}{lrr}
\hline 
Variable & t-test & Wilcoxon-Mann-Whitney test \\ 
\hline \hline
internal asset& -6.7650&-4.550 \\
external asset &  -1.2855&-2.149 \\
capital& -3.9672& -4.205\\
deposit& -4.0155&-4.240 \\
in\_degree&  -7.4458 &-4.638 \\
out\_degree& -8.0682&  -4.814\\
closeness& -8.5746&-4.757 \\
betweenness& -4.3189 &-5.647 \\
\hline

\end{tabular}
}
\end{subtable}
\captionsetup{font={small,it}}
   \captionsetup{width=.7\textwidth}  
    \caption*{The above table presents the discriptive statistics of simulated network for both systemic and non-systemic members. One can see the differences of network indicators of systemic and non-systemic nodes. The below table confirms the significant differences of means of network properties which are tested by t-test and Wilcoxon-Mann-Whitney test.}
\end{table}

\begin{table}[htbp]
\caption{\textbf{Impact Index Regression on the Information of Balance Sheet and Node Centrality in Networks} \label{tabtobitsim}}
\centering
\def\sym#1{\ifmmode^{#1}\else\(^{#1}\)\fi}
\begin{subtable}[v]{1\textwidth}
\centering
\caption*{\textbf{Coefficients of Tobit Model of Impact Index}}
\scalebox{0.8}{
\begin{tabular}{l*{5}{c}}
\hline\hline
       
            &\multicolumn{1}{c}{Model 1}&\multicolumn{1}{c}{Model 2}&\multicolumn{1}{c}{Model 3}&\multicolumn{1}{c}{Model 4}&\multicolumn{1}{c}{Model 5}\\
\hline
model       &                     &                     &                     &                     &                     \\
Ln(capital)   &      0.0818\sym{**} &       1.055         &      0.0367         &      0.0469         &     -0.0124         \\
            &    (0.0280)         &    (0.0305)         &    (0.0267)         &    (0.0316)         &    (0.0485)         \\
Ln(deposit)   &       0.363\sym{***}&       1.188         &       0.166         &       0.195\sym{*}  &      0.0957         \\
            &    (0.0885)         &     (0.108)         &    (0.0840)         &    (0.0882)         &     (0.170)         \\
Ln(external asset)&      -0.127\sym{*}  &       0.962         &     -0.0384         &     -0.0548         &     -0.0215         \\
            &    (0.0554)         &    (0.0514)         &    (0.0502)         &    (0.0468)         &    (0.0720)         \\
Ln(exposure)  &      0.0452         &       0.955         &     -0.0153         &     -0.0414         &       0.253\sym{*}  \\
            &    (0.0475)         &    (0.0439)         &    (0.0403)         &    (0.0402)         &    (0.0937)         \\
Ln(in-degree)     &                     &       1.302\sym{***}&                     &                     &                     \\
            &                     &    (0.0740)         &                     &                     &                     \\
Ln(out-degree)    &                     &                     &       0.251\sym{***}&                     &                     \\
            &                     &                     &    (0.0298)         &                     &                     \\
Ln(clossness) &                     &                     &                     &       1.773\sym{***}&                     \\
            &                     &                     &                     &     (0.370)         &                     \\
Ln(betweenness)&                     &                     &                     &                     &      0.0932\sym{**} \\
            &                     &                     &                     &                     &    (0.0280)         \\
\_cons      &      -0.970\sym{***}&       0.537\sym{***}&      -0.627\sym{***}&       9.291\sym{***}&     -0.0469         \\
            &     (0.175)         &    (0.0894)         &     (0.164)         &     (2.118)         &     (0.632)         \\
\hline
sigma       &                     &                     &                     &                     &                     \\
\_cons      &       0.298\sym{***}&       1.321\sym{***}&       0.264\sym{***}&       0.262\sym{***}&       0.304\sym{***}\\
            &    (0.0257)         &    (0.0340)         &    (0.0251)         &    (0.0287)         &    (0.0544)         \\
\hline
\(N\)       &         100         &         100         &         100         &         100         &          36         \\
AIC       &          53.53    &       42.18   &        31.49          & 29.91   &        30.48  \\ 
BIC        &         69.16   &        60.42   &        49.73          & 48.14    &       41.57\\

\hline\hline
\multicolumn{6}{l}{\footnotesize Standard errors in parentheses}\\
\multicolumn{6}{l}{\footnotesize \sym{*} \(p<0.05\), \sym{**} \(p<0.01\), \sym{***} \(p<0.001\)}\\
\end{tabular}
}
  \end{subtable}
    
    \hfill
     \begin{subtable}[v]{1\textwidth}
\centering
\def\sym#1{\ifmmode^{#1}\else\(^{#1}\)\fi}
\caption*{\textbf{ Average Marginal Effects of Tobit Model of Impact Index}}
        \scalebox{0.8}{
\begin{tabular}{l*{6}{c}}
\hline\hline
          &\multicolumn{1}{c}{Model 1}&\multicolumn{1}{c}{Model 2}&\multicolumn{1}{c}{Model 3}&\multicolumn{1}{c}{Model 4}&\multicolumn{1}{c}{Model 5}\\
\hline
Ln(capital)    &       1.059\sym{**} &      0.0377         &      0.0261         &      0.0332         &     -0.0107         \\
                &    (0.0208)         &    (0.0205)         &    (0.0190)         &    (0.0225)         &    (0.0421)         \\
Ln(deposit) &       1.288\sym{***}&       0.121         &       0.118\sym{*}  &       0.138\sym{*}  &      0.0829         \\
                 &    (0.0802)         &    (0.0640)         &    (0.0601)         &    (0.0633)         &     (0.147)         \\
Ln(external asset)&     0.915\sym{*}  &     -0.0274         &     -0.0273         &     -0.0389         &     -0.0187         \\
                     &    (0.0356)         &    (0.0377)         &    (0.0358)         &    (0.0335)         &    (0.0623)         \\
Ln(exposure)         &       1.032         &     -0.0320         &     -0.0108         &     -0.0293         &       0.220\sym{**} \\
                   &    (0.0339)         &    (0.0323)         &    (0.0287)         &    (0.0287)         &    (0.0813)         \\
Ln(in-degree)           &                     &       0.185\sym{***}&                     &                     &                     \\
                    &                     &    (0.0386)         &                     &                     &                     \\
Ln(out-degree)                &                     &                     &       0.178\sym{***}&                     &                     \\
                            &                     &                     &    (0.0195)         &                     &                     \\
Ln(clossness)             &                     &                     &                     &       1.257\sym{***}&                     \\
                              &                     &                     &                     &     (0.256)         &                     \\
Ln(betweenness)                 &                     &                     &              &       &      0.0807\sym{**} \\
                                 &                     &                     &                     &                     &    (0.0250)         \\
\hline
\(N\)                   &         100         &         100         &         100         &         100         &          36         \\
\hline\hline
\multicolumn{6}{l}{\footnotesize Standard errors in parentheses}\\
\multicolumn{6}{l}{\footnotesize \sym{*} \(p<0.05\), \sym{**} \(p<0.01\), \sym{***} \(p<0.001\)}\\
\end{tabular}
}
    \end{subtable}
\captionsetup{font={small,it}}
   \captionsetup{width=.7\textwidth}  
    \caption*{{\small Note: The dependent variable is impact index with above zero value. Thus, we need to use a censored regression model like Tobit model. To satisfy the primary condition of Tobit model, we use natural logarithm of independent variables. In Tobit model, since the latent variable can be extended below zero, the  interpreting the effect of independent variables (coefficients) is not valid instead we need to estimate marginal effects which can explain the effects while the latent variable being at/or above zero.}}
\end{table}

Generally, these indicators of centrality are correlated and contain common information, and low correlation between them indicates valuable information about network formation. Generally, SIFIs should have high value in terms of network centralities.  Figures \ref{clossBetween}, \ref{degreclossness} and \ref{degreBetween} illustrate the relation between network centralities.  An institution with the high-value degree and low-value betweenness is considered as an institution with a large number of counterparties and the ability to spread the stress with high speed.
 Institutions with the high-value degree and low-value closeness might root in a group which is far away from the rest of the network. High-value closeness and low-value betweenness indicates institutions with having multi-layers of counterparties and multiple connections in the network. These institutions have a high impact on their neighbors. Finally, the institutions that have high-value betweenness and low-value closeness try to monopolize the relations from a small number of counterparties to a large number.\\

\begin{figure}[htbp]
    \centering
\caption{\textbf{Relationship Between Centrality Degrees
 \label{centralityplot}}}
    \begin{subfigure}[htbp]{.5\columnwidth}
\includegraphics[scale=.35]{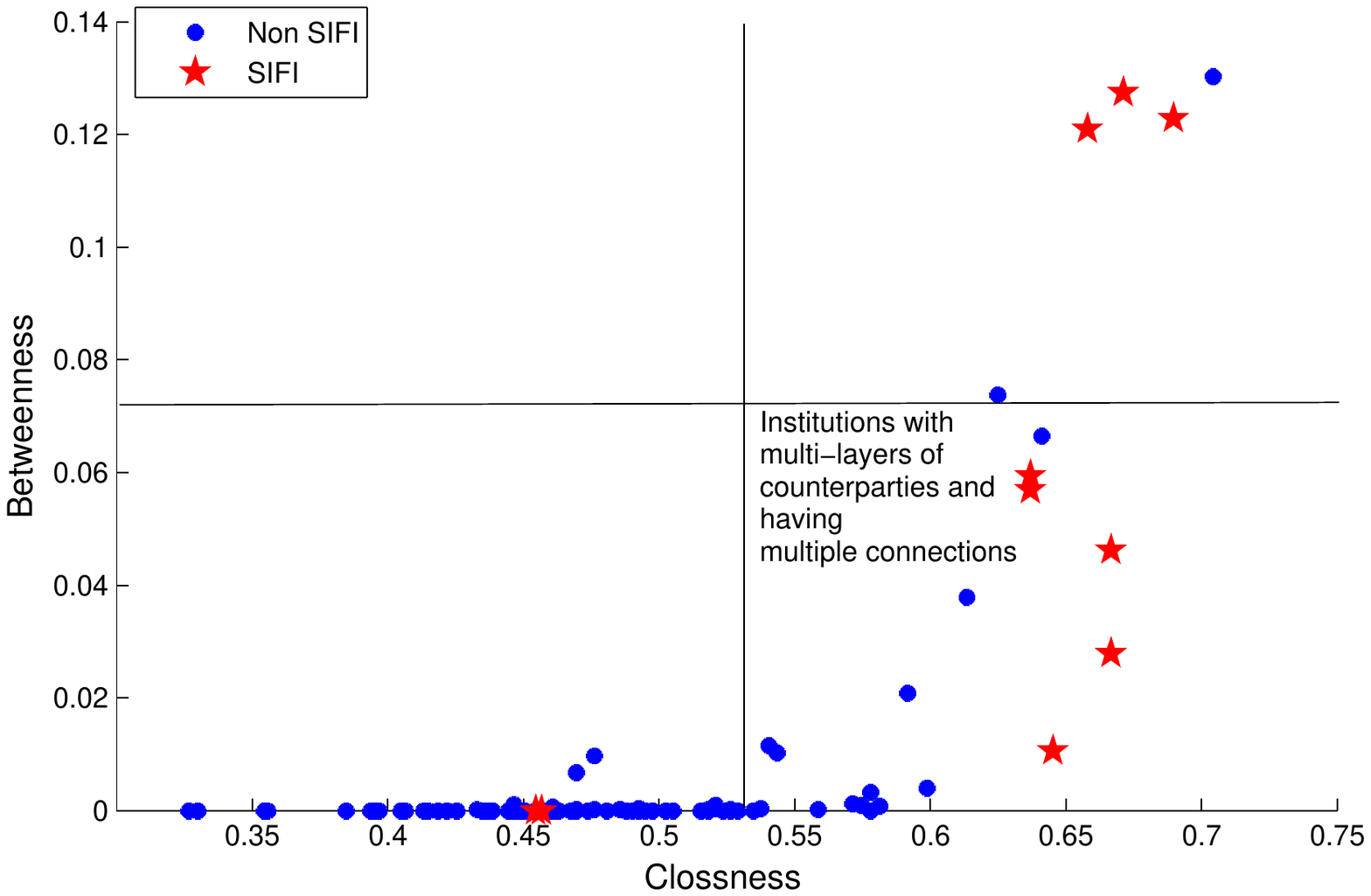} 
\centering
\caption{\textbf{Clossness versus Betweenness} \label{clossBetween}}
  \end{subfigure}%
    \begin{subfigure}[htbp]{.5\columnwidth}
\includegraphics[scale=.35]{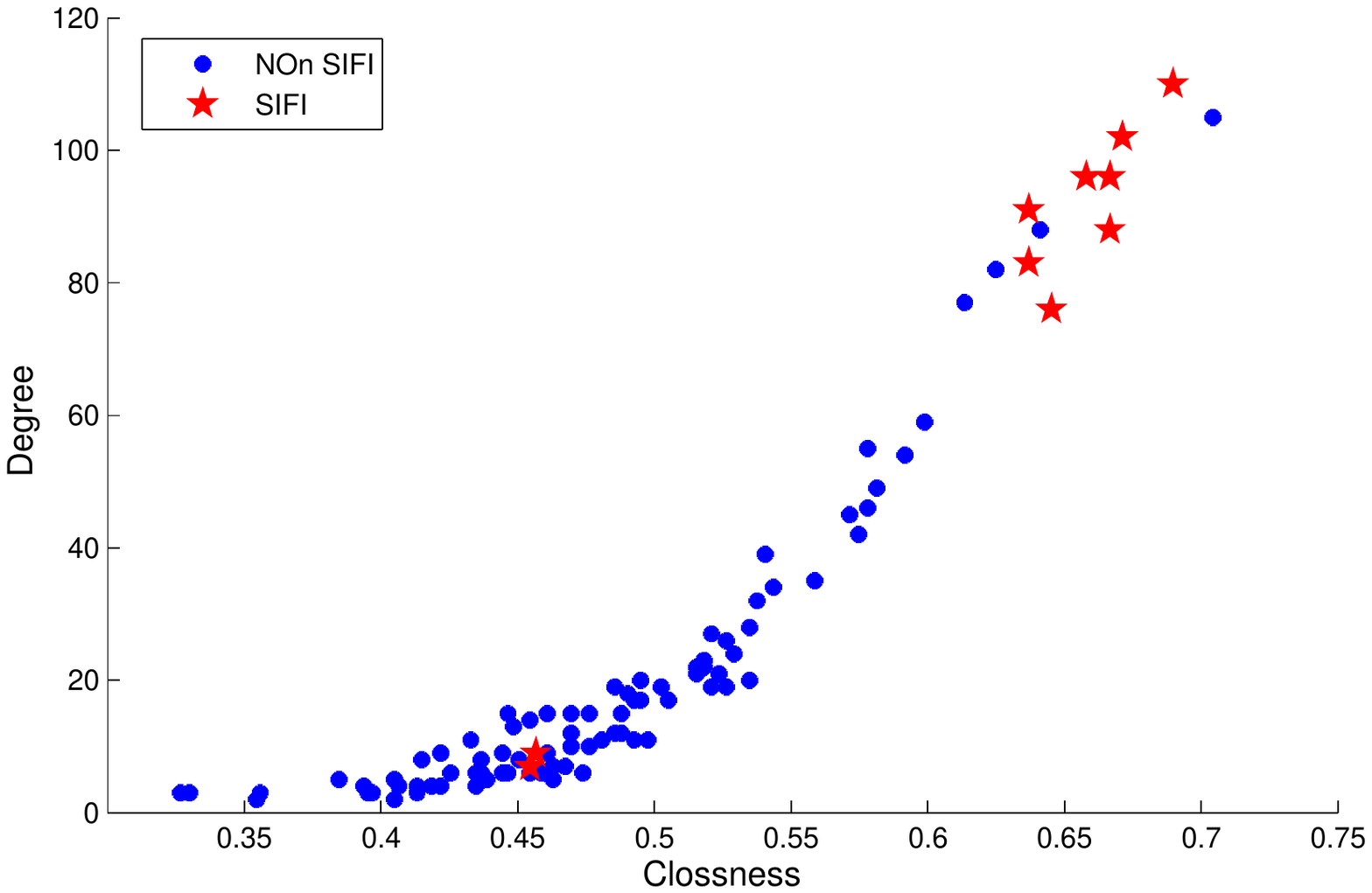} 
\centering
\caption{\textbf{Degree versus Clossness} \label{degreclossness}}
    \end{subfigure}
        \begin{subfigure}[htbp]{.5\columnwidth}
\includegraphics[scale=.35]{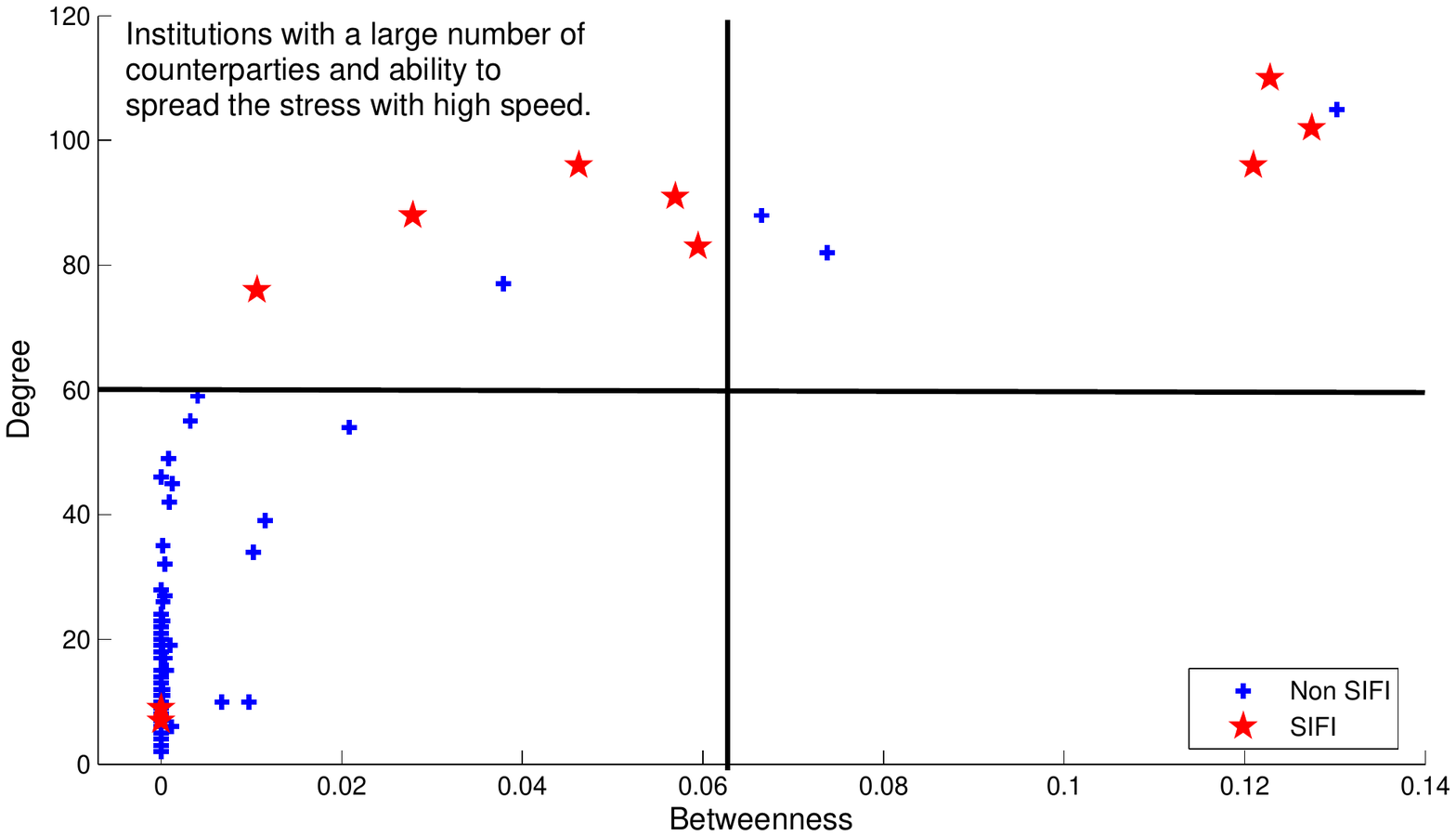} 
\centering
\caption{\textbf{Degree versus Betweenness  } \label{degreBetween}}
    \end{subfigure}
    \captionsetup{font={small,it}}
    \captionsetup{width=.7\textwidth}
    \caption*{The figure plots the scatter plots of centrality degrees of network nodes. The SIFIs are shown by the star symbols. Generally, these centrality degrees are correlated and contain common information about connections of nodes.}
\end{figure}

After we have determined the factors that lead to an institution became an SIFI, we estimate the sensitivity of these factors on the change of the systemic risk level of institutions which are considered as SIFI.  Another form of presentation of this risk is odds-ratio which reflects the likelihood that an institution is classified as an SIFI. Here, we use log-odds ratio, \cite{greene2003econometric} showed that a change in the log-odds ratio of an event can be interpreted as a change in the probability of an occurrence of that event.\\

We estimate the probability of an institution to be ranked as an SIFI from the equation:  \eqref{probsifi}.  Figure \ref{figprobSIFI} shows that probability density function that institutions are classified as SIFIs. This figure also shows that changing this probability with changing the value of central degree indicators of SIFIs. With defining quantiles equation \eqref{quantileEq} , we classify the institutions in three groups and construct 95\% confidence for each classification border. Figure \ref{figoddsSIFI} presents the corresponding the log odds ratios. We test increasing or decreasing the log-odds ratio of risk of an institution to be ranked as an SIFI by changing its value of central degree indicators. Results of tests are given in Table \ref{tabchangodds}. Overall results confirm that the significant decrease (-1.685401) of the average of log odds by decreasing in-degree. We also observe an significant increase log-odds ratio (1.52049 and 0.9098) in average by decreasing out-degree and closeness. The 99\% confidence intervals are given for difference average of odds ratios. These evidences confirms our hypothesis on the role of interconnection in stress propagation and amplification.

\begin{table}[htbp]\centering
\def\sym#1{\ifmmode^{#1}\else\(^{#1}\)\fi}
\caption{ \textbf{Statistical Test of Changing Odds Ratios} \label{tabchangodds}}
        \scalebox{0.65}{
\begin{tabular}{ccccccccc}
\hline 
\hline 
Variable &	 Obs     &   Mean (diff) &  Std. Err.   &Std. Dev. &t-statistic & \multirow{2}{20mm}{Wilcoxon z-statistic}   & \multicolumn{2}{c}{[99 \% Conf. Interval]}\\
 &   &  &  & &&   & &\\
\hline 
 $\Delta$log(odds\_changeindegre) &	 48  & -1.685401\sym{***} &   0.1533723    &1.062594 & -10.9890& -6.031& -1.993946  &   -1.376856\\
$\Delta$log(odds\_changeoutdegre)	& 62    &1.520497\sym{***} &    0.1223435&    0.9633334  & 12.4281&  6.846& 1.275856  &  1.765137\\
 $\Delta$log(odds\_changeclossness)	& 62 &   0.9098212\sym{***}    & .07439545  &  0.5857904  & 12.2295& 6.846 & .7610583 &   1.058584\\
$\Delta$log(odds\_changebetween) &	62 &  -.00218055\sym{***}   &  .0001791  &  .00141021& -12.1753 & -6.846&  -.0025387&   -.0018224\\
\hline 
\multicolumn{6}{l}{\footnotesize \sym{*} \(p<0.05\), \sym{**} \(p<0.01\), \sym{***} \(p<0.001\)}\\
\end{tabular}
}
\captionsetup{font={small,it}}
 \captionsetup{width=.78\textwidth}
\caption*{Table reports results of sensitivity the logarithm of odds ratios by increasing centrality indicators. We use t-test and the Wilcoxon signed-rank test as a non-parametric statistical hypothesis test. We compare two paired samples  to assess their population average. Mean(diff) is an average of difference of two samples. The corresponding 99\% confidence intervals are given.}
\end{table}

\begin{figure}[htbp]
    \centering
\caption{\textbf{Relationship Between Centrality Degrees on Nodes in the Network
 \label{centralityplot}}}
    \begin{subfigure}[htbp]{.5\columnwidth}
\includegraphics[scale=.35]{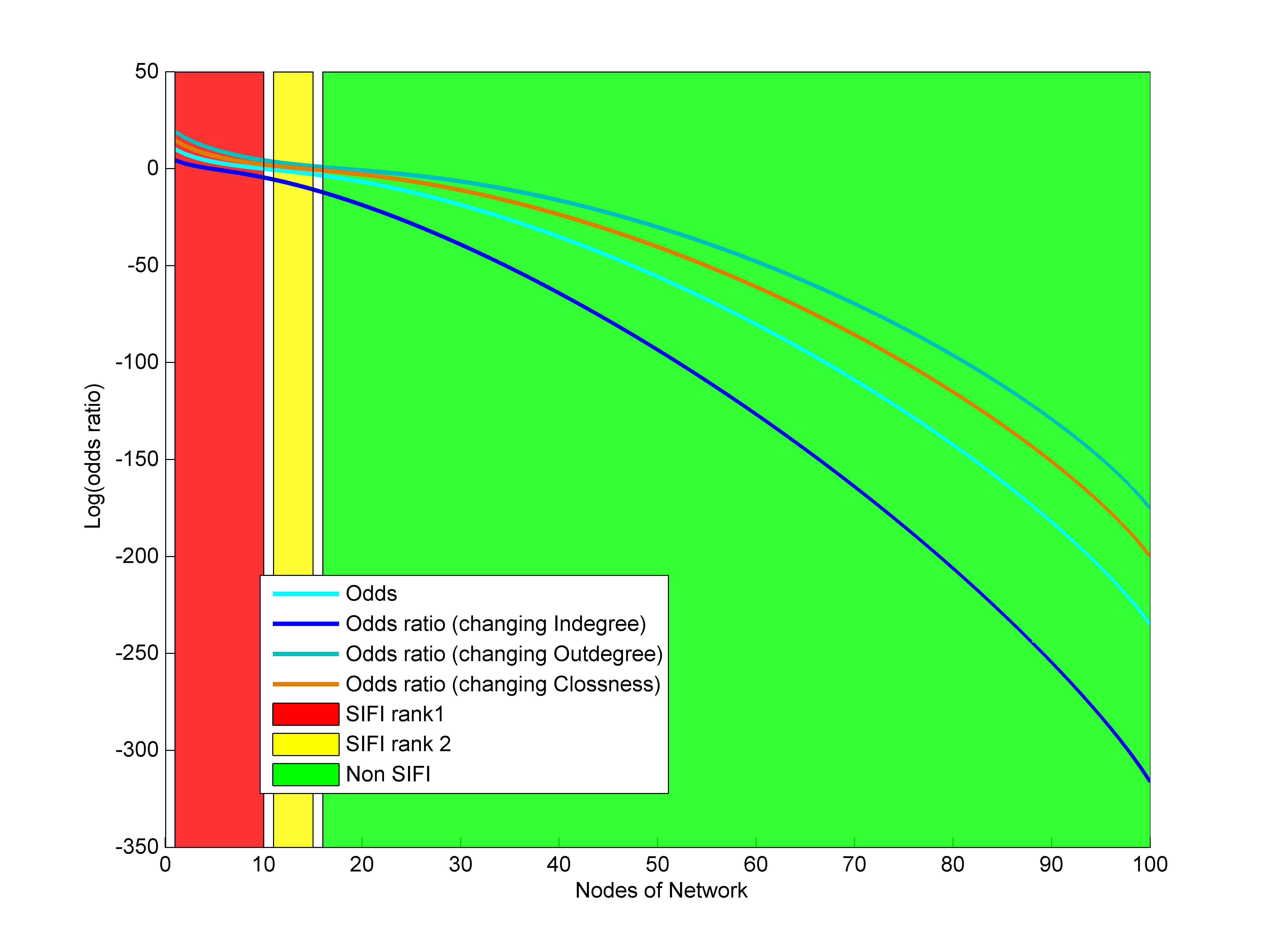} 
\centering
    \captionsetup{font={small,it}}
     \captionsetup{width=.75\textwidth}
\caption{Changing logarithm of Odds ratios of ranking institutions as SIFIs \label{figoddsSIFI}} 
  \end{subfigure}%
    \begin{subfigure}[htbp]{.5\columnwidth}
\includegraphics[scale=.35]{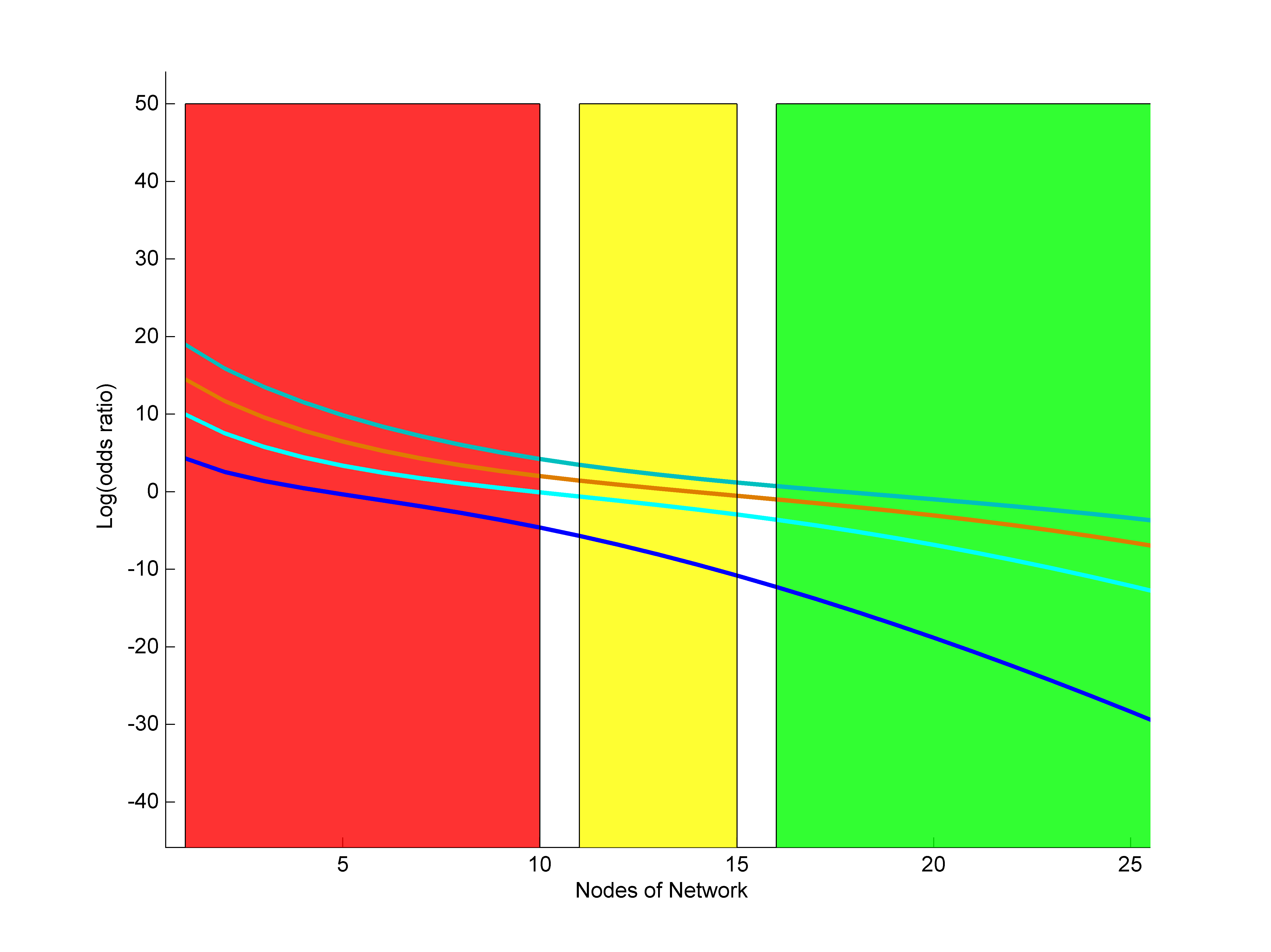} 
\centering
\caption{Changing logarithm of odds ratios of ranking the first 25 institutions as SIFIs \label{figoddsSIFI}}
    \end{subfigure}
        \begin{subfigure}[htbp]{.9\columnwidth}
\includegraphics[scale=.65]{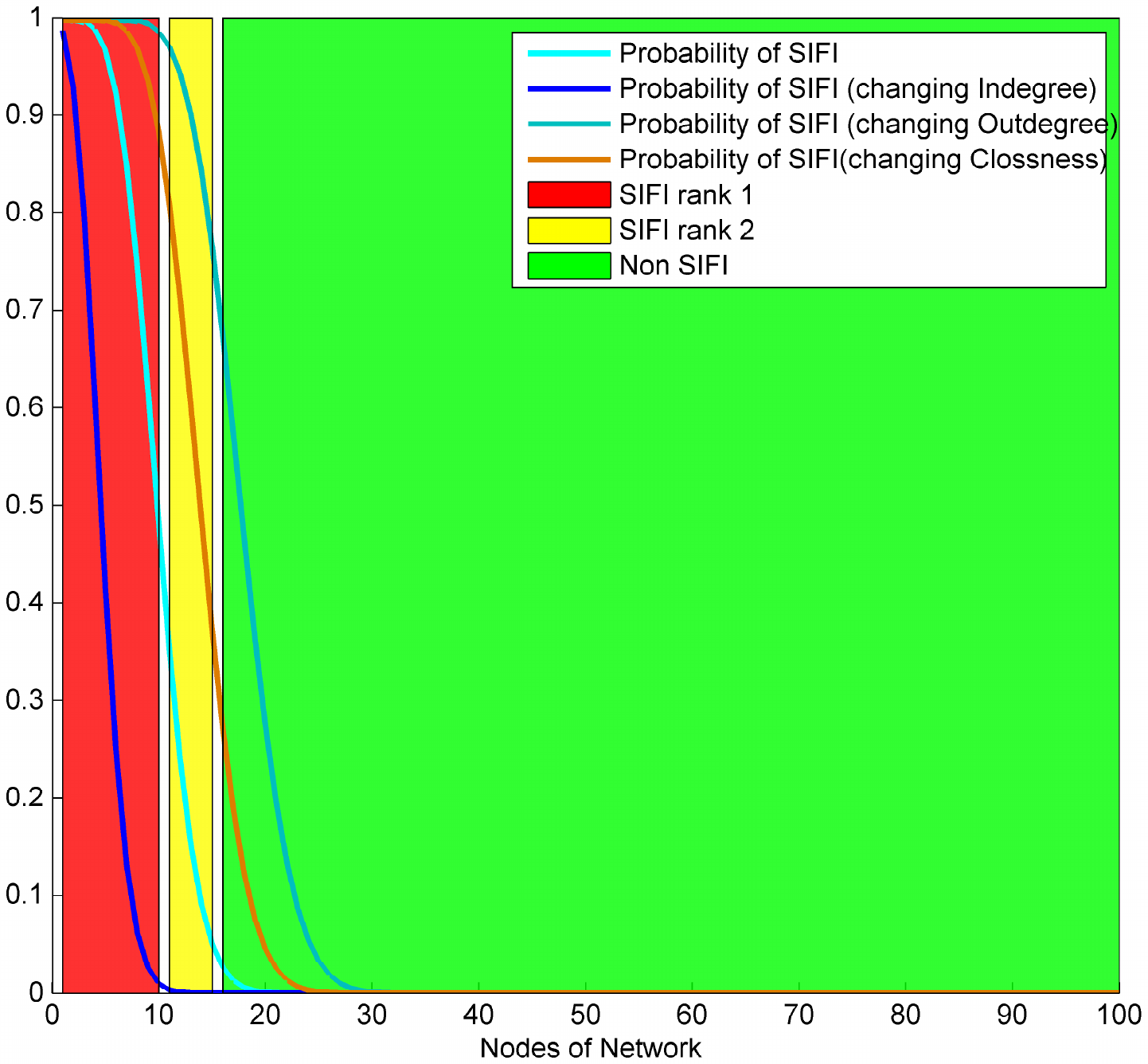} 
\centering
\caption{ Probability of ranking institutions as SIFIs\label{figprobSIFI}}
    \end{subfigure}
    \captionsetup{font={small,it}}
     \captionsetup{width=.78\textwidth}
    \caption*{The figure plot the probability and natural logarithm of the odds ratio of institutions might be considered as SIFIs. This probability would be changed if the centrality  degrees of nodes are changed. The red area represents the nodes with the highest probability ranked as SIFI and classified as the first SIFI bucket. The yellow area represents the second class of SIFIs. The none systemic institutions are located in the green area. The upper-right figure presets the natural logarithm of odds ratios in a bigger size.}
\end{figure}

%%%%%%%%%%%%%%%%%%%%

\section{Empirical Study}
\label{SecEmprical}
In this section, on the empirical side, we investigate factors which lead to identifying SIFIs in a financial system. As stated earlier, the BCBS uses an indicator-based method and comprises five groups of features and characteristics of banks in determining the G-SIB. Banks should publicly disclose these metrics on the annual basis. These indicators are used to rank banks and calculate capital adequacy ratio, minimum requirements, for each bank. The goal is then to classify banks in terms of influence on stability of the system. Within these classes, the same level of regulation, regardless of the internal ranking, is assigned.\\

%To estimate the probability given in \ref{probabiltylogit}
To asses the theoretical model, an empirical model is conducted. We use a sample of 221 the largest banks in each country. This global sample contains the information on financial aspects of banks which are chosen by BCBS to identify G-SIB candidates. Our sample includes banks that are currently considered systemically important, and banks that are or have been considered systemically important in the past. The sample of banks are divided into five categories: a group of banks eventually were not considered to be G-SIBs, and four groups of systemically important banks classified based different loss  absorbency obligation according to the Basel III. Table \ref{tebfrqSIFI} summaries the information on the minimum capital requirement of these categories: the magnitude of the capital requirement for the lowest bucket is 1.0\% of risk-weighted assets and the highest bucket is 3.5\%.\\

\begin{table}[htbp]
 \centering
 \def\sym#1{\ifmmode^{#1}\else\(^{#1}\)\fi}
\caption{\textbf{ Classification on Global-SIFI}\label{tebfrqSIFI}}
        \scalebox{0.65}{
\begin{tabular}{c|cccccc|c}
\hline\hline
        & Non systemic & Bucket 1 &Bucket 2  &        Bucket 3       &   Bucket 4    &  Bucket 5 &     Total\\
\makecell{Common Equity Tier 1 capital\\
 (\% of risk-weighted assets)}& &1 \%&1.5 \%&2 \%&2.5 \%&3.5 \%&\\\hline

         Year 2012 &     193     &    15      &    7         & 3   &       3& 0&221\\
        Year 2013 &    193    &     15       &   7    &      4   &       2& 0& 221\\ 
        Year 2014&   193  &       16 &         6  &        4         & 2& 0& 221\\
       Year  2015&  193       &  17  &        5  &        4         & 2 & 0& 221\\
\hline
     Total&       772  &       63  &       25    &     15 &    9&  0&    884 \\\hline \hline
\end{tabular}
}
\captionsetup{font={small,it}}
 \captionsetup{width=.8\textwidth}
\caption*{{\small The number of banks in the five categories of systemic important banks of 221 banks in four years. The corresponding capital requirement as the proportion of risk-weighted assets is given. The majority of banks are classified in bucket one with 1\% capital requirement. Changing the size of each bucket through years is due to updating the annual list SIFI by BCBS.}}
\end{table}%

These Five indicators are applied in determining the G-SIB by BCBS:\\

\begin{enumerate}[I]
\item
\textbf{Size:} it is the main measurement of systemic importance and can be quantified based on the overall exposure of an institution. It also indicates the magnitude of potential distress or failure of an institution. Measuring the size is dependent on assumptions of the methodology used and management's point of view. Basel III regulation \cite{basel2010basel} specified the measuring of the size, which is used for the leverage ratio, as the total amount of exposures of a bank scaled by the summation of the total amount of exposures of all banks in the sample. In our sample the total amount of exposures was not available for all banks, instead, we use the total amount of assets. This variable is important to classify a bank as a systemic important and we expect this variable to have a positive influence on increasing probability of a bank becoming a systemically important bank.\\

\item
\textbf{Interconnectedness:} it measures the interrelation of banks in a financial system and how the resources are distributed among players in the system. The recent financial crisis indicates interconnectedness plays a key role in the transmission of risk and can cause a cascading failure of counterparties. We can measure this indicator based on shares of bank asset and liabilities in the financial system. In our sample, we have information on interbank ratios and the amount of deposits from other banks, and we believe it has a positive effect on the probability of bank to be ranked as a SIB.\\

\item
\textbf{Substitutability/financial institution infrastructure:} this indicator contains information on the importance role of a financial institution to provide fundamental services in a particular business line. It also can show unknown substitution costs of a financial institution. One can measure the financial institution infrastructure of a bank with a number of payment activities and the amount of  financed transactions in liability and equity markets. We measure this indicator with a number of loans that a bank can provide for its customers. In the case of failure of the bank, customers need to look for the similar service from an alternative institution.\\

\item
\textbf{Complexity:} this indicator measures the functional complexity of a financial institution. The higher level of complexity requires more time and cost to resolve a financial institution. One can measure the level of complexity with an amount of OTC derivatives, available-for-sale securities, and level 3 assets. We measure this indicator with the amount available for sale securities, liquid assets funding, and trading assets since these factors can open channel for a bank for fire sale which is one of a significant dimension of  systemic risk. We can have a hypothesis of a positive impact of this indicator on the desired probability.\\

\item
\textbf{Cross-jurisdictional activity:} this indicator explains relatively the global financial activities of a bank and measures the international consequences of a bank's distress or failure. We calculate this indicator as a propotion of the total asset of a bank in the foreign market of the country where the bank is located. The foreign market index is measured by the total value of exports of goods and services of a country as a percentage of its GDP on 1-7 scale (see: \cite{schwab2015world}). 

\end{enumerate}

Table \ref{statdis} summarizes the variables which are used in our empirical study and Figure \ref{figboxplots} illustrates the boxplots of main variables. We obtain bank-level data from Bankscope database. Panels I and II summarize the information of non-systemic (bucket 0) and systemic (buckets 1-5) banks , respectively. The total asset was an indicator of the size of a bank and it was used to rank banks. From the result of the statistical test and an illustration of boxplot, we can observe the total assets of systemic banks is decreasing over years.  Interbank ratio is calculated as the proportion of due from counterparties to due to counterparties and indicates the interbank liquidity. We can observe that the systemic banks are funds providers rather than funds borrowers. Meanwhile, these banks try to level this ratio over the time. The liquid asset is calculated by a total of cash, trading securities and outstanding from other banks and central banks. Results show that systemic banks are holding more liquid assets, and the amount of the liquid asset also is decreasing over the time. Return on assets (ROA) indicates the profitability of a bank and an amount of operating incomes scaled by total assets.  From the result, we can observe non-systemic banks are more profitable.

\begin{figure}[htbp]
    \centering
    \caption{\textbf{Box Plots}\label{figboxplots}}
    \begin{subfigure}[b]{0.45\textwidth}
                \includegraphics[width=\textwidth]{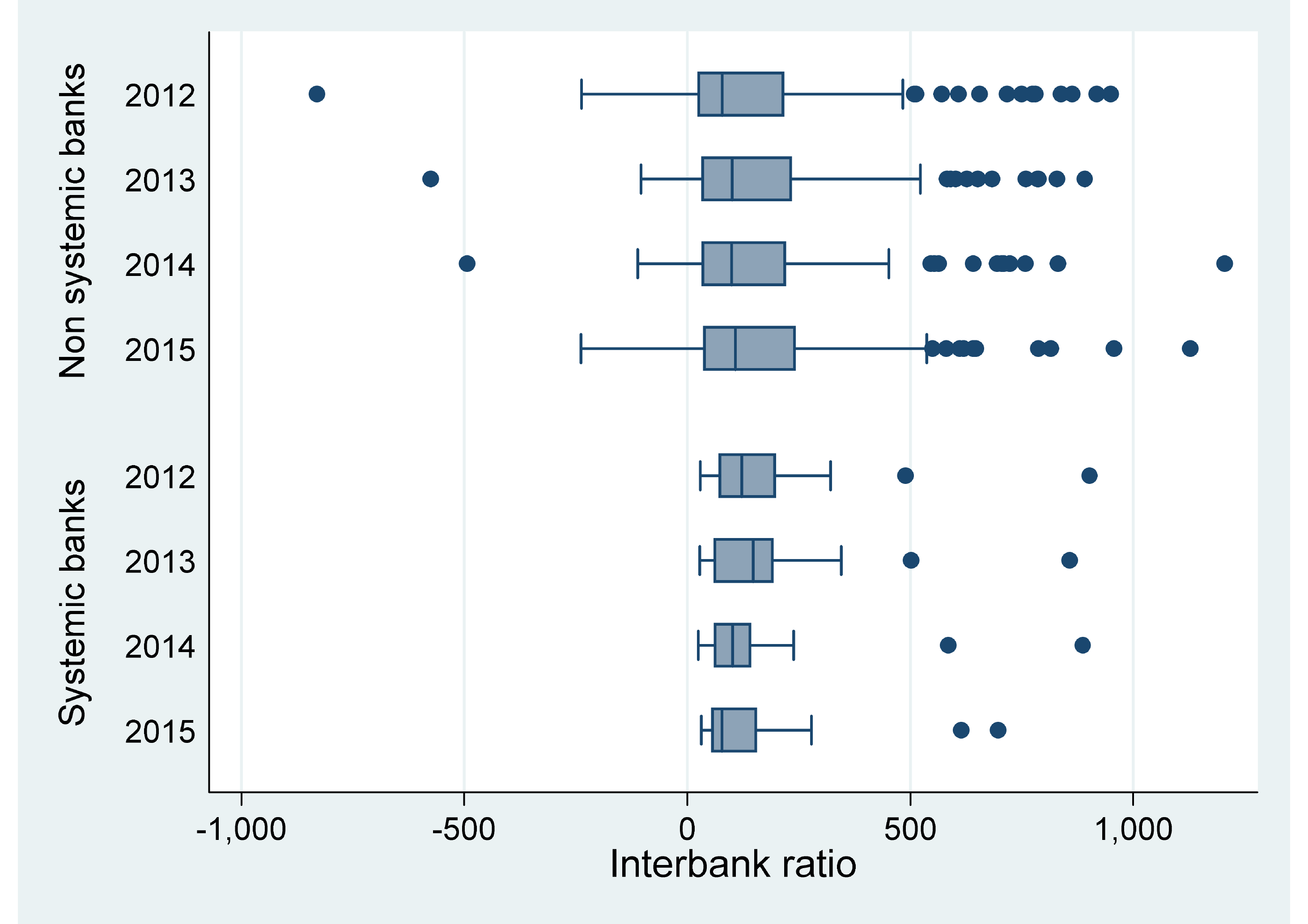}
               \caption{Box Plot of Interbank Ratios}
        \label{fig:gull}
    \end{subfigure}%
    \begin{subfigure}[b]{0.45\textwidth} \includegraphics[width=\textwidth]{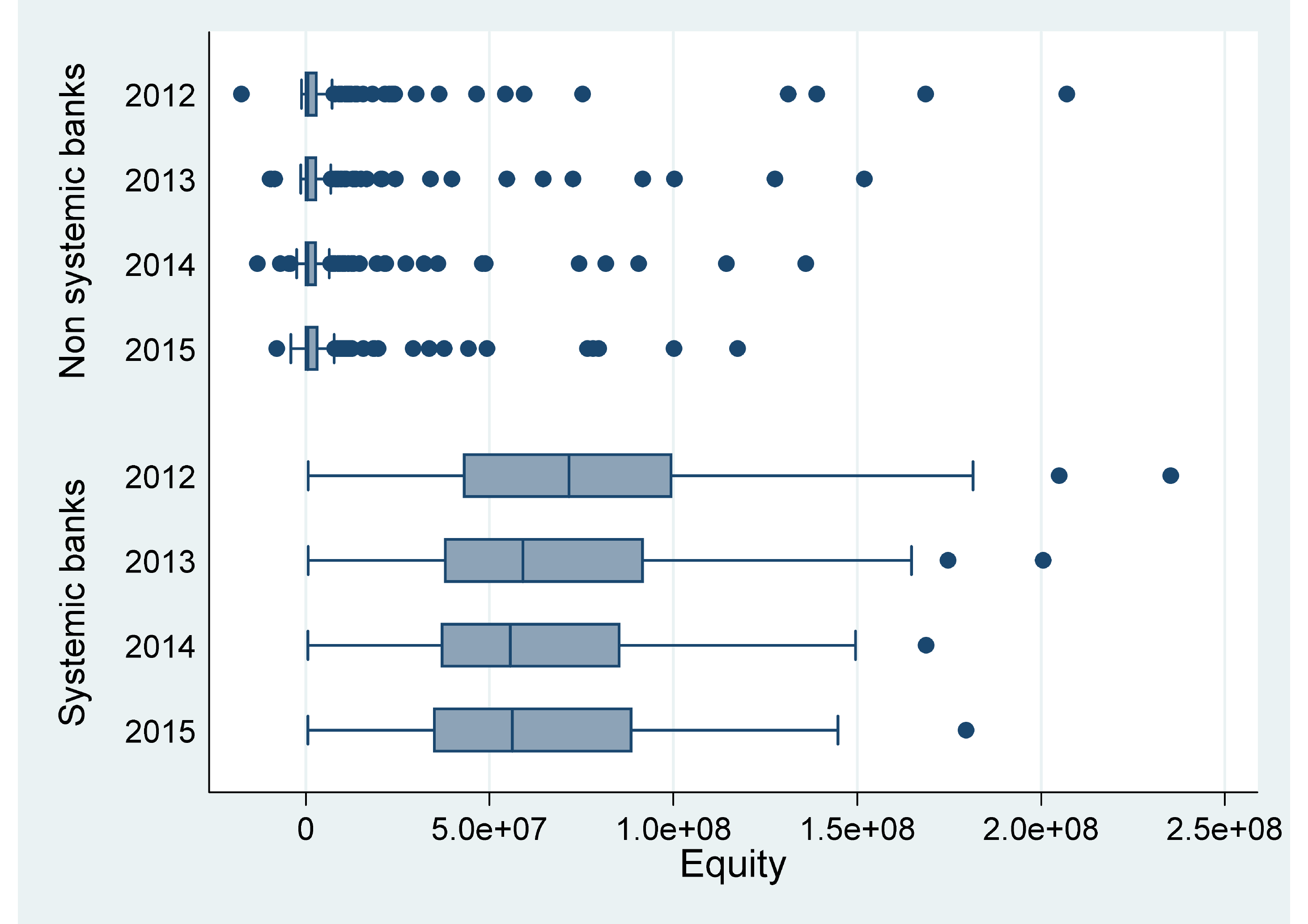}
             \caption{Box Plot of Equity }
    \end{subfigure}
    \begin{subfigure}[b]{0.45\textwidth} \includegraphics[width=\textwidth]{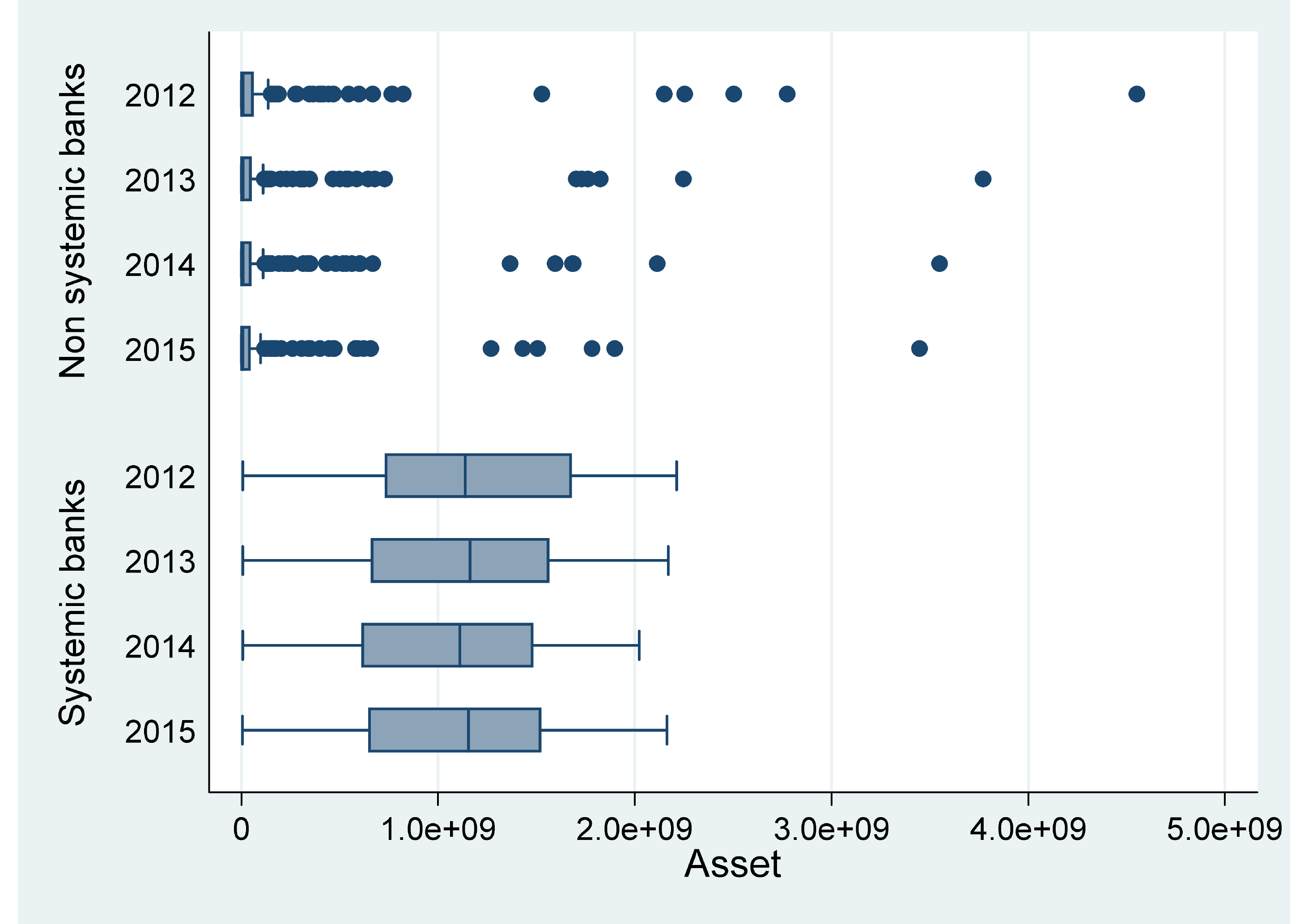}
\caption{Box Plot of Assets }
    \end{subfigure}    
    \captionsetup{font={small,it}}
\captionsetup{width=.8\textwidth}
\caption*{The figure plots the box plots (Box and Whisker Diagram )of the main variables (interbank, equity and asset)for systemic and non-systemic banks over years. The box shows the location of the middle 50\% of the data. It can be seen the difference between systemic and non-systemic banks and changing the amount of variables of systemic banks after recognizing as a G-SIFI. }
 \end{figure}

\begin{table}[htbp]
\centering
\def\sym#1{\ifmmode^{#1}\else\(^{#1}\)\fi}
\caption{\textbf{Sample Descriptive Statistics} \label{statdis}}

 \begin{subtable}[v]{1\textwidth}
        \centering
        \scalebox{0.8}{
        \begin{tabular} {@{} l l c c c c c c c c @{}} \\ \hline
 & Variable& \textbf{Mean} & \textbf{Std.Dev.} & \textbf{Obs.} & \textbf{Min} & \textbf{$25^{th}$ pct.} & \textbf{ Median} & \textbf{       $75^{th}$ pct.} & \textbf{       Max} \\
\hline\hline
Panel I: non systemic bank  & Asset&  1.42e+08 &   6.52e+08 &        772 &         41 &   968392.9 &    7080489 &   4.58e+07 &   1.40e+10
  \\
                  &  Ln(interbank ratio)& 4.384796 &   1.636508 &        733 &  -3.912023 &   3.708437 &   4.680092 &   5.483052 &   7.784854
  \\
                  &   Net loans/total assets & 34.46049 &   24.70773 &        772 &  -5.257587 &   8.432014 &     37.435 &    55.6685 &     91.676
  \\
                  &  Ln(deposits from banks) & 1.78e+07 &   7.18e+07 &        772 &  -4.72e+08 &      40660 &    1218596 &   1.22e+07 &   8.68e+08
  \\
                  & Ln (security for Sale) &  2.22e+07 &   3.37e+07 &        772 &  -3.74e+07 &   227321.5 &    5595969 &   3.47e+07 &   2.68e+08
  \\
                  & Liquid asset &   46.94603 &   62.15001 &        772 &  -24.25128 &     17.445 &   32.19779 &   60.05086 &     941.58
  \\
                  &  Cross-jurisdictional  & 73.61051 &   29.62508 &        546 &    12.0676 &    50.9147 &    68.8607 &   93.48219 &   155.6738
  \\
                  &  ROA &  1.138765 &   2.639387 &        772 &     -15.74 &       .275 &      1.095 &       2.08 &      19.33
  \\
                  &   Equity&  9540839 &   5.42e+07 &        772 &  -1.75e+07 &    58193.5 &   510518.5 &    2842683 &   7.30e+08
  \\
                      & &  &   &         &      &    &    &    &   
  \\
Panel II: systemic banks  & Asset&  1.10e+09 &   5.85e+08 &        112 &    5418100 &   6.61e+08 &   1.12e+09 &   1.55e+09 &   2.21e+09
  \\
                  &  Ln(interbank ratio)& 4.697136 &    .807599 &        112 &   3.205588 &   4.111053 &   4.729934 &   5.187104 &   6.804274
  \\
                  &  Net loans/total assets & 39.76857 &   16.11145 &        112 &      1.381 &     32.276 &    40.4515 &    51.6765 &     75.713
  \\
                  & Ln(deposits from banks) &  1.26e+08 &   8.92e+07 &        112 &      88100 &   5.25e+07 &   1.19e+08 &   1.92e+08 &   3.61e+08
  \\
                  & Ln (security for Sale) &  1.33e+08 &   1.30e+08 &        112 &  -946417.1 &   5.45e+07 &   8.36e+07 &   1.80e+08 &   6.82e+08
  \\
                  &  Liquid asset &  34.63473 &   19.22803 &        112 &       8.07 &      21.78 &     33.125 &      39.08 &       99.2
  \\
                  & Cross-jurisdictional  &  128.2727 &    10.2701 &        108 &   106.4944 &   122.6266 &   128.9281 &   136.2464 &   150.0981
  \\
                  &   ROA & .4790179 &   .4603601 &        112 &      -1.53 &        .26 &        .44 &       .725 &       1.51
  \\
                  & Equity&   7.19e+07 &   4.90e+07 &        112 &     520600 &   3.84e+07 &   5.91e+07 &   9.19e+07 &   2.35e+08
  \\
                    & &  &   &         &      &    &    &    &   
  \\
  
Panel III: total  & Asset&  2.64e+08 &   7.18e+08 &        884 &         41 &    1499488 &    9557034 &   1.21e+08 &   1.40e+10
  \\
                  &  Ln(interbank ratio)& 4.426195 &   1.555561 &        845 &  -3.912023 &   3.796001 &   4.686566 &   5.443074 &   7.784854
  \\
                  &  Net loans/total assets & 35.13301 &   23.84935 &        884 &  -5.257587 &   9.752098 &   38.31072 &    54.7755 &     91.676
  \\
                  & Ln(deposits from banks) &  3.15e+07 &   8.25e+07 &        884 &  -4.72e+08 &    68298.5 &    2078425 &   2.86e+07 &   8.68e+08
  \\
                  & Ln (security for Sale) &  3.62e+07 &   6.67e+07 &        884 &  -3.74e+07 &     325522 &   1.21e+07 &   4.84e+07 &   6.82e+08
  \\
                  &  Liquid asset &  45.38623 &     58.617 &        884 &  -24.25128 &      17.81 &     32.725 &   57.15964 &     941.58
  \\
                  &  Cross-jurisdictional  & 82.63729 &   34.09319 &        654 &    12.0676 &   54.26487 &   79.95827 &   110.8356 &   155.6738
  \\
                  &  ROA &  1.055178 &   2.481452 &        884 &     -15.74 &       .265 &       .935 &       1.91 &      19.33
  \\
                  &  Equity&  1.74e+07 &   5.74e+07 &        884 &  -1.75e+07 &    74176.5 &     815500 &    7435457 &   7.30e+08
  \\
\hline\hline
\end{tabular}
}
\end{subtable}
\begin{subtable}[v]{1\textwidth}
\captionsetup{justification=raggedleft,singlelinecheck=false}
\caption*{Tests of differences of mean of systemic and non systemic banks}
\centering
 \scalebox{0.8}{
\begin{tabular}{lrr}
\hline 
Variable & t-test & Wilcoxon-Mann-Whitney test \\ 
\hline \hline 
Asset & -15.9694 & -15.181 \\ 
Ln(interbank ratio) &  -3.2084 & -0.419\\ 
Net loans/total assets &  -3.0107& -1.779 \\ 
Ln(deposits from banks) & -21.1901 & -13.507 \\ 
Ln (security for Sale) & -22.1139 & -13.442 \\ 
Liquid asset & 4.2722 & 0.842 \\ 
Cross-jurisdictional &  -34.0047 & -14.753 \\ 
ROA &  6.3146 & 6.082\\ 
Equity &  -12.4121 & -15.409 \\ 
\hline \hline 
\end{tabular} }
\end{subtable}
\centering
\captionsetup{font={small,it}}
\captionsetup{width=0.9\textwidth}
 \caption*{The first table contains the information (provided by Bankscope) on bank-level for two classes of systemic and non-systemic banks. In this classification, non-systemic banks are corresponding to banks in the bucket 0, systemic banks in buckets 1-5 in the Table \ref{tebfrqSIFI}. Total asset is an indicator of the size of a bank and it contains the information on variances in asset liquidity. Interbank ratio measures the interbank liquidity and it is calculated as the proportion of due from counterparties to due to counterparties. This ratio with above one value indicates that the bank is a funds provider rather than a funds borrower.  Net Loans / Total Assets is a ratio of the assets of a bank are associated with loans.   A liquid asset is a total of cash, due from other banks like the central bank, the credit of Institutions as well as government and trading securities. Available‐for‐sale securities are the total fair value of securities which are available for sale and indicates unrealized profits or losses. Return on assets (ROA) is an operating income scaled by total assets and indicates profitability of a bank.\\
 The second table provides the results of t-test and the Wilcoxon-Mann-Whitney test of whether the mean for variables are the same for systemic and non-systemic banks. 
 Wilcoxon-Mann-Whitney test is a non-parametric test (distribution-free) analog with the t-test. }
\end{table}

\subsubsection{Empirical Results}

The section presents the results of our empirical study using bank panel data. Our potential caution of using the aggregated sample for all countries is  causing a bias in estimation of effects of factors to predict the systemic important banks. To relieve this issue, we restrict our data sample to information on only countries who are members of Economic Co-operation and Development (OECD). Then, we repeat our experience for all countries.  Our baseline specification related to identifying the factor influenced on the classification of a bank as a G-SIFI. Our dependent variable is an ordinal variable which contains information on global ranking of the banks. We regress this variable on the main indicators which determined by BCBS. We control banks profitability on the bank-level and GDP (PPP) per capital of the countries on the country-level. These variables measure the relative performance of banks and countries which the banks are located, respectively. Year fixed effect capture overall events and OECD-level effects control time-invariant market characteristics.\\

We empirically examine the prediction of our theoretical framework and test our hypothesis. Table \ref{TabMEALL} reports the results of identification systemic risk factors. The results show that the positive relation between indicators like interbank ratio, deposits from banks and Cross-jurisdictional on the estimation of G-SIFIs. Coefficients of logit regression are in log-odds units and the standard errors are reported in parentheses. In logistic regression, the odds the ratio is an association between an indicator and dependent variable and calculated as an exponential function of the logit coefficient. The results also show that some indicators like interbank ratio, deposits from banks and Cross-jurisdiction are associated with higher odds of systemic risk. Table \ref{TabMEALL} also reports the marginal effects of estimation the change in probability each bucket when the indicator variables rise by one unit. Although, decreasing the predictive power of the model due to few number of observations in the higher bucks, indicators like interbank ratio, deposits from banks and Cross-jurisdiction are still significant factors.

\begin{table}[htbp]\centering
\def\sym#1{\ifmmode^{#1}\else\(^{#1}\)\fi}
\caption{\textbf{Average Marginal Effects (ME) and Odds Ratios from Ordered Logit Regression for All Countries \label{TabMEALL}}}
        \scalebox{0.7}{
        
 \begin{tabular}{l*{7}{c}}
\hline\hline
            &\multicolumn{1}{c}{(1)}&\multicolumn{1}{c}{(2)}&\multicolumn{1}{c}{(3)}&\multicolumn{1}{c}{(4)}&\multicolumn{1}{c}{(5)}&\multicolumn{1}{c}{(6)}&\multicolumn{1}{c}{(7)}\\
            &\multicolumn{1}{c}{}&\multicolumn{1}{c}{Odds Ratios}&\multicolumn{1}{c}{Bucket 0}&\multicolumn{1}{c}{Bucket 1}&\multicolumn{1}{c}{Bucket 2}&\multicolumn{1}{c}{Bucket 3}&\multicolumn{1}{c}{Bucket 4}\\
\hline
Dependent variable: G-SIFI rank        &                     &                     &                     &                     &                     &                     &                     \\
Asset      &   -4.32e-10         &       1.000         &    2.06e-11         &   -2.26e-12         &   -6.44e-12         &   -6.18e-12         &   -5.75e-12         \\
            &  (4.39e-10)         &  (4.39e-10)         &  (2.11e-11)         &  (2.71e-12)         &  (6.75e-12)         &  (6.35e-12)         &  (6.03e-12)         \\
[1em]
Ln(interbank ratio)&       0.776\sym{***}&       2.172\sym{***}&     -0.0370\sym{***}&     0.00405         &      0.0116\sym{**} &      0.0111\sym{**} &      0.0103\sym{**} \\
            &     (0.209)         &     (0.454)         &   (0.00905)         &   (0.00216)         &   (0.00377)         &   (0.00391)         &   (0.00380)         \\
[1em]
 Net loans/total assets &      0.0141         &       1.014         &   -0.000672         &   0.0000735         &    0.000210         &    0.000202         &    0.000187         \\
            &   (0.00822)         &   (0.00834)         &  (0.000386)         & (0.0000559)         &  (0.000129)         &  (0.000125)         &  (0.000119)         \\
[1em]
Ln(deposits from banks)&       0.973\sym{***}&       2.645\sym{***}&     -0.0464\sym{***}&     0.00508         &      0.0145\sym{***}&      0.0139\sym{***}&      0.0129\sym{**} \\
            &     (0.206)         &     (0.546)         &   (0.00873)         &   (0.00284)         &   (0.00396)         &   (0.00413)         &   (0.00432)         \\
[1em]
Ln (security for Sale) &       0.354         &       1.424         &     -0.0169         &     0.00185         &     0.00527         &     0.00506         &     0.00471         \\
            &     (0.199)         &     (0.284)         &   (0.00937)         &   (0.00138)         &   (0.00307)         &   (0.00314)         &   (0.00289)         \\
[1em]
Liquid asset&     0.00401         &       1.004         &   -0.000192         &   0.0000209         &   0.0000598         &   0.0000574         &   0.0000534         \\
            &   (0.00284)         &   (0.00286)         &  (0.000136)         & (0.0000206)         & (0.0000433)         & (0.0000413)         & (0.0000401)         \\
[1em]
 Cross-jurisdictional&      0.0711\sym{***}&       1.074\sym{***}&    -0.00340\sym{***}&    0.000371         &     0.00106\sym{***}&     0.00102\sym{***}&    0.000947\sym{**} \\
            &    (0.0164)         &    (0.0176)         &  (0.000739)         &  (0.000228)         &  (0.000308)         &  (0.000299)         &  (0.000329)         \\
[1em]
ROA      &      -0.675\sym{***}&       0.509\sym{***}&      0.0322\sym{***}&    -0.00352         &     -0.0101\sym{***}&    -0.00966\sym{***}&    -0.00898\sym{**} \\
            &     (0.144)         &    (0.0734)         &   (0.00634)         &   (0.00209)         &   (0.00274)         &   (0.00284)         &   (0.00297)         \\
[1em]
Equity    &   -4.83e-09         &       1.000         &    2.31e-10         &   -2.52e-11         &   -7.20e-11         &   -6.92e-11         &   -6.43e-11         \\
            &  (2.76e-09)         &  (2.76e-09)         &  (1.29e-10)         &  (1.90e-11)         &  (4.19e-11)         &  (4.26e-11)         &  (4.03e-11)         \\
[1em]
Ln(GDPppp)  &       1.022\sym{***}&       2.777\sym{***}&     -0.0488\sym{***}&     0.00533         &      0.0152\sym{***}&      0.0146\sym{***}&      0.0136\sym{**} \\
            &     (0.205)         &     (0.568)         &   (0.00845)         &   (0.00310)         &   (0.00375)         &   (0.00406)         &   (0.00449)         \\
[1em]
Year Fixed      &      Yes         &    Yes         &  Yes        &   Yes         &   Yes         &     Yes    &     Yes            \\
\hline
cut1        &                     &                     &                     &                     &                     &                     &                     \\
\_cons      &       45.73\sym{***}& 7.22223e+19\sym{***}&                     &                     &                     &                     &                     \\
            &     (5.734)         &(4.14114e+20)         &                     &                     &                     &                     &                     \\
\hline
cut2        &                     &                     &                     &                     &                     &                     &                     \\
\_cons      &       47.92\sym{***}& 6.45478e+20\sym{***}&                     &                     &                     &                     &                     \\
            &     (5.840)         &(3.76945e+21)         &                     &                     &                     &                     &                     \\
\hline
cut3        &                     &                     &                     &                     &                     &                     &                     \\
\_cons      &       49.16\sym{***}& 2.23360e+21\sym{***}&                     &                     &                     &                     &                     \\
            &     (5.871)         &(1.31134e+22)         &                     &                     &                     &                     &                     \\
\hline
cut4        &                     &                     &                     &                     &                     &                     &                     \\
\_cons      &       50.36\sym{***}& 7.46518e+21\sym{***}&                     &                     &                     &                     &                     \\
            &     (5.883)         &(4.39189e+22)         &                     &                     &                     &                     &                     \\
\hline
\(N\)       &         581         &         581         &                  &                  &                 &                  &                  \\
Pseudo	$R^2$&           0.5024          &                     0.5024 &                     &                     &                     &                     &                     \\
\textit{AIC}&       433.3         &       433.3         &                    &                    &                    &                    &                    \\
\textit{BIC}&       503.1         &       503.1         &                    &                    &                    &                    &                    \\
\hline\hline
\multicolumn{8}{l}{\footnotesize Standard errors in parentheses}\\
\multicolumn{8}{l}{\footnotesize \sym{*} \(p<0.05\), \sym{**} \(p<0.01\), \sym{***} \(p<0.001\)}\\
\end{tabular}
}
    \captionsetup{font={small,it}}
\captionsetup{width=.8\textwidth}
\caption*{
This table reports the results of ordered logit regression model for estimation of main drivers of SIFI. The corresponding model is $GSIFI_{it}= const+X_{it}+C_{it}+\alpha$, where $X_{it}$ is the matrix explanatory variables and $C_{it}$ is the matrix control variables and $\alpha$ is fixed effect variable. The dependent variable is G-SIFI which contains the information on global ranking of banks. The second model is odds ratios of explanatory variables. If the odds ratio is above (below) one then the likelihood of probability of that event increase (decreases). We transfer skewed variables by taking the natural logarithm. Cuts are ancillary parameters which are applied to estimate Predicted probabilities of each bucket. The average marginal effects for each bucket is estimated. The Marginal effects explain the variation in probability when the explanatory variable increments by one unit.}
\end{table}

In table \ref{TabMEOECD}, we repeat our baseline model by restricting our sample to OECD countries only. Although the magnitude of the coefficients is slightly changed, we can observe consistent results with baseline model, and the estimated model for OECD subsample is comparable to the model for the entire sample.

\begin{table}[htbp]\centering
\def\sym#1{\ifmmode^{#1}\else\(^{#1}\)\fi}
\caption{\textbf{Average Marginal Effects (ME) and Odds Ratios from Ordered Logit Regression for OECD Countries only}\label{TabMEOECD}}
\scalebox{0.7}{
\begin{tabular}{l*{7}{c}}
\hline\hline
            &\multicolumn{1}{c}{(1)}&\multicolumn{1}{c}{(2)}&\multicolumn{1}{c}{(3)}&\multicolumn{1}{c}{(4)}&\multicolumn{1}{c}{(5)}&\multicolumn{1}{c}{(6)}&\multicolumn{1}{c}{(7)}\\
            &\multicolumn{1}{c}{}&\multicolumn{1}{c}{Odds Ratios}&\multicolumn{1}{c}{Bucket 0}&\multicolumn{1}{c}{Bucket 1}&\multicolumn{1}{c}{Bucket 2}&\multicolumn{1}{c}{Bucket 3}&\multicolumn{1}{c}{Bucket 4}\\
\hline
Dependent variable: G-SIFI rank        &                     &                     &                     &                     &                     &                     &                     \\
Asset      &    2.60e-10         &       1.000         &   -1.84e-11         &   -2.53e-12         &    6.92e-12         &    7.18e-12         &    6.82e-12         \\
            &  (2.87e-10)         &  (2.87e-10)         &  (2.01e-11)         &  (3.30e-12)         &  (7.86e-12)         &  (8.17e-12)         &  (7.56e-12)         \\
[1em]
Ln(interbank ratio)&       0.629\sym{**} &       1.876\sym{**} &     -0.0444\sym{**} &    -0.00613         &      0.0167\sym{*}  &      0.0174\sym{*}  &      0.0165\sym{*}  \\
            &     (0.218)         &     (0.408)         &    (0.0142)         &   (0.00448)         &   (0.00665)         &   (0.00707)         &   (0.00697)         \\
[1em]
Net loans/total assets &      0.0309\sym{**} &       1.031\sym{**} &    -0.00218\sym{***}&   -0.000301         &    0.000822\sym{*}  &    0.000852\sym{**} &    0.000809\sym{*}  \\
            &   (0.00990)         &    (0.0102)         &  (0.000657)         &  (0.000217)         &  (0.000334)         &  (0.000326)         &  (0.000316)         \\
[1em]
Ln(deposits from banks)&       0.735\sym{***}&       2.085\sym{***}&     -0.0519\sym{***}&    -0.00715         &      0.0195\sym{**} &      0.0203\sym{**} &      0.0192\sym{**} \\
            &     (0.210)         &     (0.437)         &    (0.0133)         &   (0.00481)         &   (0.00622)         &   (0.00718)         &   (0.00740)         \\
[1em]
Ln (security for Sale) &     -0.0134         &       0.987         &    0.000943         &    0.000130         &   -0.000355         &   -0.000368         &   -0.000350         \\
            &     (0.184)         &     (0.182)         &    (0.0130)         &   (0.00179)         &   (0.00492)         &   (0.00508)         &   (0.00482)         \\
[1em]
Liquid asset&     0.00462         &       1.005         &   -0.000326         &  -0.0000450         &    0.000123         &    0.000127         &    0.000121         \\
            &   (0.00325)         &   (0.00327)         &  (0.000231)         & (0.0000362)         & (0.0000901)         & (0.0000902)         & (0.0000898)         \\
[1em]
 Cross-jurisdictional&       0.122\sym{***}&       1.130\sym{***}&    -0.00862\sym{***}&    -0.00119         &     0.00325\sym{***}&     0.00337\sym{***}&     0.00320\sym{***}\\
            &    (0.0208)         &    (0.0235)         &   (0.00134)         &  (0.000676)         &  (0.000856)         &  (0.000867)         &  (0.000910)         \\
[1em]
ROA      &      -0.573\sym{**} &       0.564\sym{**} &      0.0405\sym{**} &     0.00558         &     -0.0153\sym{**} &     -0.0158\sym{*}  &     -0.0150\sym{*}  \\
            &     (0.197)         &     (0.111)         &    (0.0137)         &   (0.00345)         &   (0.00587)         &   (0.00625)         &   (0.00622)         \\
[1em]
Equity    &   -5.81e-09         &       1.000         &    4.10e-10         &    5.66e-11         &   -1.55e-10         &   -1.60e-10         &   -1.52e-10         \\
            &  (4.02e-09)         &  (4.02e-09)         &  (2.75e-10)         &  (5.84e-11)         &  (1.13e-10)         &  (1.20e-10)         &  (1.09e-10)         \\
[1em]
Ln(GDPppp)  &       0.684\sym{*}  &       1.982\sym{*}  &     -0.0483\sym{*}  &    -0.00666         &      0.0182\sym{*}  &      0.0189\sym{*}  &      0.0179\sym{*}  \\
            &     (0.283)         &     (0.560)         &    (0.0190)         &   (0.00502)         &   (0.00864)         &   (0.00847)         &   (0.00853)         \\
[1em]
Year Fixed      &      Yes         &    Yes         &  Yes        &   Yes         &   Yes         &     Yes       &     Yes     \\
\hline
cut1        &                     &                     &                     &                     &                     &                     &                     \\
\_cons      &       37.52\sym{***}& 1.96548e+16\sym{***}&                     &                     &                     &                     &                     \\
            &     (5.381)         &(1.05753e+17)         &                     &                     &                     &                     &                     \\
\hline
cut2        &                     &                     &                     &                     &                     &                     &                     \\
\_cons      &       40.28\sym{***}& 3.11060e+17\sym{***}&                     &                     &                     &                     &                     \\
            &     (5.544)         &(1.72455e+18)         &                     &                     &                     &                     &                     \\
\hline
cut3        &                     &                     &                     &                     &                     &                     &                     \\
\_cons      &       41.75\sym{***}& 1.35088e+18\sym{***}&                     &                     &                     &                     &                     \\
            &     (5.599)         &(7.56312e+18)         &                     &                     &                     &                     &                     \\
\hline
cut4        &                     &                     &                     &                     &                     &                     &                     \\
\_cons      &       43.08\sym{***}& 5.10301e+18\sym{***}&                     &                     &                     &                     &                     \\
            &     (5.624)         &(2.87006e+19)         &                     &                     &                     &                     &                     \\
\hline
\(N\)       &         258         &         258         &                  &               &               &                 &                  \\
Pseudo	$R^2$&          0.4541           &                    0.4541 &                     &                     &                     &                     &                     \\
\textit{AIC}&       354.8         &       354.8         &                    &                    &                    &                    &                    \\
\textit{BIC}&       411.7         &       411.7         &                    &                    &                   &                    &                    \\
\hline\hline
\multicolumn{8}{l}{\footnotesize Standard errors in parentheses}\\
\multicolumn{8}{l}{\footnotesize \sym{*} \(p<0.05\), \sym{**} \(p<0.01\), \sym{***} \(p<0.001\)}\\
\end{tabular}
}
    \captionsetup{font={small,it}}
\captionsetup{width=.8\textwidth}
\caption*{This table provides the results of baseline model for OECD countries only. Several econometric concerns arise in an international comparison a dataset contains information on all countries. The aim of OECD is to improve the economic and social well-being of words by setting international standards on various political and economic issues. Although banking regulations are varied across OECD countries, they are strongly connected and each country member can have significant influences  on other members. One can see that the sign and significant of coefficients of regression are consistent with baselines model and just magnitudes of coefficients are slightly have been changed.}
\end{table}

\subsubsection{Addressing Endogeneity Problem}

Generally, in estimating ordered categorical variables with using short panel data (with small T and $N \to \infty$) coefficients of logit and probit models might be biased (see : \cite{greene2003econometric}).  It creates bias since nuisance parameters which should be estimated grow rapidly as a number of panel (N) increases. Then data observed as grouped on the individual panel with few variation over time. In this condition, fixed effects models are mostly not estimable and regression with robust standard errors might not be applied.  The bootstrapped standard errors option can partially solve the problem. \cite{mccullagh1990simple} and \cite{firth1993bias} proposed a modification of log-likelihood estimation procedure to obtain a more stable estimation for short panel data. \cite{hayakawa2012gmm} proposed Generalized Method of Moments (GMM) estimators for the short dynamic panel. \\

In a short panel, errors are not correlated over panel, but over time for a given panel.  \cite{ mundlak1978pooling} proposed a correction method of logit model to control unobserved time-invariant individual heterogeneity. This method transforms the explanatory variables into cluster-mean deviations. This method adds either cluster-mean differenced variables or cluster-means to the list of explanatory variables (hybrid model). In other words, this method is a mean-centering procedure corresponding to moving the expectation of the original variable to the average point by subtracting off the variable averages over time from the data.\\

To apply this mean-centering operation, we use the following operator:

\begin{eqnarray}
\mathbf{X}^C = \mathcal{H}\mathbf{X},
\end{eqnarray}

where the operator $\mathcal{H}$ is defined as:
\begin{eqnarray}
 \mathcal{H} = (\mathbf{I}_{it}-\mathbf{1}_{.t}\mathbf{1}_{.t}^T).
\end{eqnarray}
The matrix-centering operator $ \mathcal{H}$ transfers the variable $\mathbf{X}$ to $\mathbf{X}^C$ which is a  mean-centered variable over time.\\

We consider the mentioned endogeneity problem and have a robust model we apply this correction transformation on the original variables. Table \ref{tabmundlak1978} reports the estimation of baseline model with correction of \cite{mundlak1978pooling} for the sample of all countries and a subsample which is restricted to only OECD Countries. One can see that the signs and significant of coefficients are similar to the baseline models, however the magnitude of the coefficients are changed especially in the first model. Since the correction method reduced the individual fixed effects which is correlated with error over time.\\

 Overall, these results support our main hypothesis which is the relation between the degree of interconnection and the likelihood of an institution to be classified as an SIFI.

\begin{table}[htbp]\centering
\def\sym#1{\ifmmode^{#1}\else\(^{#1}\)\fi}
\caption{\textbf{Ordered Logistic Regression for OECD and All Countries (Corrected \cite{mundlak1978pooling} )}\label{tabmundlak1978}}
        \scalebox{0.7}{
\begin{tabular}{lcccc}
\hline\hline
& (1)&(2)& (3) &(4)
\\
& OECD Countries& Odds Ratios (OECD Countries) & All Countries& Odds Ratios (All Countries)
\\
\hline
Dependent variable: G-SIFI rank  &                     &                     \\
$Asset^{C}$  &    6.10e-10         &       1.000    &    1.64e-11         &       1.000        \\
            &  (3.57e-10)         &  (3.57e-10)    &  (3.12e-10)         &  (3.12e-10)       \\
[1em]
$Ln(interbank)^{C}$&       0.260         &       1.297        &       0.698\sym{**} &       2.011\sym{**} \\
            &     (0.240)         &     (0.311)   &     (0.213)         &     (0.428)         \\
[1em]
$ Net loans/total assets^{C}$&-0.000000423\sym{*}  &       1.000\sym{*}  &-0.000000792\sym{***}&       1.000\sym{***}     \\
            &(0.000000205)         &(0.000000205) &(0.000000161)         &(0.000000161)        \\
[1em]
$Ln(deposits from banks)^{C}$&    1.68e-09         &       1.000        &    7.46e-09\sym{***}&       1.000\sym{***} \\
            &  (2.66e-09)         &  (2.66e-09)   &  (2.26e-09)         &  (2.26e-09)      \\
[1em]
$Ln (security for Sale)^{C}$ &    1.85e-09         &       1.000        &    3.87e-09\sym{**} &       1.000\sym{**} \\
            &  (1.48e-09)         &  (1.48e-09)   &  (1.46e-09)         &  (1.46e-09)        \\
[1em]
$Liquid asset^{C}$&     0.00169         &       1.002        &     0.00221         &       1.002         \\
            &   (0.00315)         &   (0.00315)     &   (0.00208)         &   (0.00208)       \\
[1em]
 $Cross-jurisdictional^{C}$&       0.129\sym{***}&       1.138\sym{***}&      0.0747\sym{***}&       1.078\sym{***}\\
            &    (0.0197)         &    (0.0224)  &    (0.0131)         &    (0.0142)       \\
[1em]
$ROA^{C}$ &      -0.599\sym{***}&       0.549\sym{***}&      -0.500\sym{***}&       0.607\sym{***}\\
            &     (0.155)         &    (0.0851)  &     (0.116)         &    (0.0704)           \\
[1em]
$Equity^{C}$&   -6.81e-09         &       1.000        &   -5.20e-09\sym{*}  &       1.000\sym{*}  \\
            &  (5.09e-09)         &  (5.09e-09)   &  (2.26e-09)         &  (2.26e-09)        \\
[1em]
$Ln(GDPppp)^{C}$  &       0.559\sym{*}  &       1.749\sym{*} &       0.967\sym{***}&       2.630\sym{***} \\
            &     (0.270)         &     (0.472)  &     (0.185)         &     (0.487)          \\
[1em]
Fixed year     &          Yes        &           Yes &           Yes         &           Yes         \\                        
\hline
cut1        &                    &          & &           \\
\_cons      &      -15.84         & 0.000000132   &      -69.20\sym{*}  &    8.88e-31\sym{*}        \\
            &     (35.74)         &(0.00000473)     &     (30.75)         &  (2.73e-29)        \\
\hline
cut2        &                &&     &                     \\
\_cons      &      -13.44         &  0.00000145    &      -67.16\sym{*}  &    6.80e-30\sym{*}        \\
            &     (35.73)         & (0.0000518)  &     (30.71)         &  (2.09e-28)           \\
\hline
cut3        &                     &      &&               \\
\_cons      &      -12.09         &  0.00000564  &      -65.95\sym{*}  &    2.29e-29\sym{*}          \\
            &     (35.73)         &  (0.000202)  &     (30.70)         &  (7.03e-28)         \\
\hline
cut4        &                     &    &&                 \\
\_cons      &      -10.82         &   0.0000201    &      -64.74\sym{*}  &    7.63e-29\sym{*}         \\
            &     (35.72)         &  (0.000717)   &     (30.68)         &  (2.34e-27)       \\
\hline
\(N\)       &         272         &         272        &         634         &         634      \\
Pseudo \(R^{2}\)& 0.413   &       0.413 & 0.4805&       0.4805 \\
\textit{AIC}&       385.2         &       385.2  &       459.8         &       459.8         \\
\textit{BIC}&       439.3         &       439.3   &       526.5         &       526.5       \\
\hline\hline
\multicolumn{3}{l}{\footnotesize Standard errors in parentheses}\\
\multicolumn{3}{l}{\footnotesize \sym{*} \(p<0.05\), \sym{**} \(p<0.01\), \sym{***} \(p<0.001\)}\\
\end{tabular}
}
    \captionsetup{font={small,it}}
\captionsetup{width=.8\textwidth}
\caption*{This table provides the results of the correction model of short panel  logit model proposed  by  \cite{ mundlak1978pooling}.  We transform the explanatory variables $X$ into cluster-mean deviations $X^C$. One can see that results are comparable to baselines model, excluding  interbank variable which is not significant factors in OECD model. It is because of the recent growing interconnectedness of OECD economies. The economic and banking activities of OECD are fragmented across countries.}
\end{table}

%%%%%%%%%%%%%%%%%%%%%
\section{Summary and Discussion}
\label{Secsummary}

In this paper, we propose a simple model to facilitate the understanding of systemic risk in financial systems. Our paper adds to the growing body of research on  identification of key players in a financial network by  introducing an index to rank institutions in a network. To achieve it, we formulated the problem of ranking institutions as a fixed point problem and reduced this problem into a convex optimization problem. We proposed an approximation solution to the problem and developed an algorithm for computing the fixed point to reach equilibrium in a static cascade model. The systemic rank of each institution is calculated by an impact index, and the result is robust to various exogenous shocks in the system. We also studied the underline distribution of this index and prove the existence and uniqueness of the index in a banking network. We applied our algorithm to a real payment system dataset and compared with an existing algorithm. \\

In our numerical experiments, we generated a random network and compared our approximation solution with an optimum solution.  Computational results suggest that our algorithm requires less CPU time than the optimum algorithm to exploit efficiently information about the systemically important nodes in a large-scale financial network. Results also indicate that CPU time does not increase exponentially when the number of nodes and interconnections of the network grow, which means that our algorithm has robust convergence behavior rate regardless of the network structure. \\

We also investigated the knife-edge property of the interbank system and test our hypothesis on increasing and decreasing the likelihood of an institution considered as an SIFI by applying some tools which were introduced by the BCBS.  The main objective of the experiment is to shed light on how to reduce the systemic impact of SIFIs by modifying or reformulating the regulations' policy. Numerical results show two main findings in the application of our methodology: firstly, it highlights the systemic aspects of interdependent relations among the institutions.  That is the mitigation of the systemic impact of SIFIs could be expressed by imposing restrictions on the level of their interconnections with other financial institutions. Secondly, our results provide an evidence that in a financial network the likelihood of a member to be considered as an SIFI increase and decrease by increasing the degree of outcomes (the number of debtors) and the degree of incomes (the number of creditors), respectively.\\

 We empirically investigate factors which lead to identifying global SIFIs by using a sample of the largest banks in each country. Our empirical outcomes support our theoretical finding that it is the association of the level of interconnection and the likelihood of an institution to be ranked as an SIFI.\\

This is a developing area of study in finance that still needs more experimentation. In practice, this type of analysis is not straightforward, mainly because of lack of sufficient knowledge about dynamics of the interbank system and inconsistently recorded data on policy interventions. Having access to the corresponding data allows us to map the entire financial system and assess its properties. There are several questions that arise from the reported results that are not beyond the scope of this research questions. We leave the study of the optimal policies and their impacts on financial stability in this framework to future works.

\newpage

\section{References}
% \bibliographystyle{elsarticle-num} 
%%\bibliographystyle{elsarticle-harv}
%%\bibliographystyle{plainnat}
%\bibliography{systemicRisk}

 %\bibliographystyle{elsarticle-num} 
\bibliographystyle{elsarticle-harv}
\bibliography{systemicRisk_single}

\end{document}